\documentclass[a4paper,UKenglish,cleveref, autoref, thm-restate]{lipics-v2021}

\hideLIPIcs  


\bibliographystyle{plainurl}

\title{Approximating $\delta$-Covering}


\author{Tim A. Hartmann}{CISPA Helmholtz Center for Information Security, Germany}{
tim.hartmann@cispa.de}{https://orcid.org/0000-0002-1028-6351}{}

\author{Tom Jan{\ss}en}{Department of Computer Science, RWTH Aachen University, Germany}{janssen@algo.rwth-aachen.de}{https://orcid.org/0000-0003-4617-3540}{Deutsche Forschungsgemeinschaft (DFG) –- WO 1451/2-1}

\authorrunning{T.\ A.\ Hartmann and T.\ Jan{\ss}en} 

\Copyright{Tim A.\ Hartmann and Tom Jan{\ss}en} 

\begin{CCSXML}
	<ccs2012>
	<concept>
	<concept_id>10003752.10003809.10003636.10003810</concept_id>
	<concept_desc>Theory of computation~Packing and covering problems</concept_desc>
	<concept_significance>500</concept_significance>
	</concept>
	<concept>
	<concept_id>10003752.10003777.10003779</concept_id>
	<concept_desc>Theory of computation~Problems, reductions and completeness</concept_desc>
	<concept_significance>300</concept_significance>
	</concept>
	<concept>
	<concept_id>10003752.10003809.10003636</concept_id>
	<concept_desc>Theory of computation~Approximation algorithms analysis</concept_desc>
	<concept_significance>500</concept_significance>
	</concept>
	</ccs2012>
\end{CCSXML}

\ccsdesc[500]{Theory of computation~Packing and covering problems}
\ccsdesc[300]{Theory of computation~Problems, reductions and completeness}
\ccsdesc[500]{Theory of computation~Approximation algorithms analysis}

\keywords{Facility Location, Approximation Algorithms, Dominating Set, Vertex Cover}

\category{} 

\relatedversion{} 




\nolinenumbers 

\EventEditors{John Q. Open and Joan R. Access}
\EventNoEds{2}
\EventLongTitle{42nd Conference on Very Important Topics (CVIT 2016)}
\EventShortTitle{CVIT 2016}
\EventAcronym{CVIT}
\EventYear{2016}
\EventDate{December 24--27, 2016}
\EventLocation{Little Whinging, United Kingdom}
\EventLogo{}
\SeriesVolume{42}
\ArticleNo{23}



\usepackage{setspace}
\usepackage{pgfplots}
\usetikzlibrary{shapes.misc}
\usepackage{amsmath,amssymb,amsfonts}
\usepackage{dsfont}
\usepackage{enumerate}
\usepackage[square,sort,comma,numbers]{natbib}
\setcounter{secnumdepth}{4}
\usepackage{gensymb}
\usepackage{array}
\usepackage{mathtools}
\usepackage{listings}
\usepackage[ruled,vlined,linesnumbered]{algorithm2e}
\usepackage{color}
\usepackage{xcolor}

\lstset{frame=tb,
  aboveskip=3mm,
  belowskip=3mm,
  showstringspaces=false,
  columns=flexible,
  basicstyle={\small\ttfamily},
  numbers=none,
  numberstyle=\tiny\color{gray},
  keywordstyle=\color{blue},
  commentstyle=\color{dkgreen},
  stringstyle=\color{red},
  breaklines=true,
  breakatwhitespace=true,
  tabsize=3
}

\makeatletter
\newcommand{\labeltext}[3][]{%
    \@bsphack%
    \csname phantomsection\endcsname
    \def\tst{#1}%
    \def\labelmarkup{}
    \def\refmarkup{}%
    \ifx\tst\empty\def\@currentlabel{\refmarkup{#2}}{\label{#3}}%
    \else\def\@currentlabel{\refmarkup{#1}}{\label{#3}}\fi%
    \@esphack%
    \labelmarkup{#2}
}
\makeatother



\newcommand{\cov}[1][\delta]{\ensuremath{#1\text{-\textsc{Covering}}}}

\newcommand{\vc}{\textsc{Vertex Cover}}

\newcommand{\OPT}{\operatorname{OPT}}

\newcommand{\NP}{{NP}} 
\newcommand{\NPO}{{NPO}} 
\newcommand{\APX}{{APX}} 
\newcommand{\logAPX}{{log-APX}} 
\newcommand{\UG}{\textsl{UG}} 

\newcommand{\dist}[1]{\ensuremath{d\left(#1\right)}}
\newcommand{\wreath}{\ensuremath{\textup{ wr }}}
\newcommand{\covn}[2][\delta]{\ensuremath{#1\textup{-cover}(#2)}}
\newcommand{\dispn}[2][\delta]{\ensuremath{#1\textup{-disp}(#2)}}


\newcommand{\half}{\tfrac{1}{2}}

\AddToHook{cmd/appendix/after}{}

\newcommand{\ball}[2]{B^{<}(#1,#2)}
\newcommand{\balll}[2]{B^{\leq}(#1,#2)}

\newcommand{\CC}{\mathcal{C}}

\newcommand{\GG}{\mathcal{G}}

\definecolor{dkgreen}{rgb}{0,0.6,0}
\definecolor{dkblue}{rgb}{0.1,0,0.6}
\definecolor{gray}{rgb}{0.5,0.5,0.5}
\definecolor{lgray}{rgb}{0.85,0.85,0.85}
\definecolor{mauve}{rgb}{0.58,0,0.82}
\definecolor{red}{rgb}{1.0,0,0}
\definecolor{lorange}{rgb}{1.0,0.7,0}
\definecolor{lgreen}{rgb}{0,0.8,0}
\hideLIPIcs

\begin{document}

\maketitle

\bibliographystyle{unsrt}

\begin{abstract}
$\delta$-\textsc{Covering}, for some covering range $\delta>0$, is a continuous facility location problem on 
undirected graphs
	where all edges have unit length.
The facilities may be positioned on the vertices as well as on the interior of the edges.
The goal is to position as few facilities as possible
	such that every point on every edge
	has distance at most $\delta$ to one of these facilities.
For large $\delta$, the problem is similar to dominating set, 
	which is hard to approximate,
	while for small $\delta$, say close to $1$, the problem is similar to vertex cover.
In fact, as shown by Hartmann et al.~[Math.\ Program.~22],
	$\delta$-Covering for all unit-fractions $\delta$
	is polynomial time solvable,
	while for all other values of $\delta$ the problem is NP-hard.

We study the approximability of $\delta$-Covering for every covering range $\delta>0$.
For $\delta \geq 3/2$,
	the problem is \logAPX-hard,
	and allows an $\mathcal O(\log n)$ approximation.
For every $\delta < 3/2$,
	there is a constant factor approximation of a minimum $\delta$-cover
	(and the problem is \APX-hard when $\delta$ is not a unit-fraction).
We further study the dependency of the approximation ratio on the covering range $\delta < 3/2$.
By providing several polynomial time approximation algorithms
	and lower bounds under the Unique Games Conjecture,
	we narrow the possible approximation ratio,
	especially for $\delta$ close to the polynomial time solvable cases.
\end{abstract}

\vfill
\newpage
\section{Introduction}

We study the approximability of a continuous facility location problem.
The input is a graph $G$ whose edges have unit length.
Let $P(G)$ be the continuum set of points on all the edges and vertices.
The distance of two points $p,q \in P(G)$
	is the length of a shortest path connecting $p$ and $q$ in the underlying metric space.
For a rational $\delta>0$, a subset of points $S \subseteq P(G)$ is a \emph{$\delta$-cover}
	if every point $p \in P(G)$ has distance $d(p,q)\leq \delta$ to some point $q \in S$.
For a fixed $\delta>0$, and given a graph $G$,
	our objective is to compute a minimum size $\delta$-cover of $G$.
We denote the minimum cardinality of a $\delta$-cover $S$ as $\covn[\delta]{G}=|S|$.
The decision-version of this problem is known as $\cov[\delta]$.

The computational complexity of computing an optimal $\delta$-cover is relatively well understood.
$\cov[\delta]$ is polynomial time solvable for every covering range $\delta$
	that is a unit-fraction, 
	while it is \NP-hard for all other $\delta$,
	as shown by Hartmann et al.~\cite{DBLP:journals/mp/HartmannLW22}.
The same work also settles the parameterized complexity with the solution size as parameter.
When $\delta < 3/2$, the problem is fixed-parameter tractable,
	while for $\delta \geq 3/2$, the problem is W[2]-hard.
The approximability depending on $\delta$ has not yet been settled.

As the problem is polynomial time solvable when $\delta$ is a unit-fraction,
	we are particularly interested in the approximability when $\delta$ is near a unit-fraction.
Also, as revealed by the parameterized complexity study,
	for large $\delta$, the problem is similar to the dominating set problem 
	-- in the sense that the main focus is on covering the vertices, 
	which is hard to approximate.
On the other hand, for small $\delta$, say close to $1$, 
	the problem is similar to the vertex cover problem 
	-- in the sense that the main focus is on covering the edges,
	which is well-known to allow a $2$-approximation.
Similarly, for the approximation,
	we expect some threshold for $\delta$,
	till which the problem allows a constant factor approximation (like vertex cover),
	while beyond this threshold the problem is \logAPX-hard (like dominating set).

\subparagraph{Further Related Results}
Facility Location in general is a wide research area.
We refer to the books by Drezner~\cite{drezner1996facility}
	and Mirchandani \& Francis~\cite{mirchandani1990discrete}.
Our model, where the metric space is defined by a graph, dates back to Dearing \& Francis 
\cite{Dearing1974}.
Several optimization goals have been considered, for example by works of 
Tamir~\cite{Tamir87,Tamir1991}.
For \cov, Megiddo \& Tamir~\cite{MegiddoT1983} gave a polynomial time algorithm for trees,
	and show \NP-hardness for $\delta=2$.

A dual problem to \cov, is known as $\delta$-\textsc{Dispersion}.
The task is to select a maximum number of points that have distance at least $\delta$.
Similarly to \cov, the problem is \NP-hard for all $\delta$ that 
	are not a unit fraction or twice a unit fraction,
	as shown by Grigoriev et al.~\cite{GrigorievHLW21}.
The problem is fixed-parameter tractable in the solution size for $\delta \leq 2$,
	and W[1]-hard otherwise, as shown by Hartmann et al.~\cite{DBLP:conf/mfcs/HartmannL22}.
The same work addresses the parameterized complexity
	with respect to various graph parameters.
A similar task was studied when approximating the problem:
Instead of fixing the minimum distance, the number of points to place is fixed,
	and the goal is to maximize the minimum pairwise distance between the placed points.
For this task, Tamir~\cite{Tamir1991} gave a $\half$-approximation algorithm,
	while they showed that there is no $\varepsilon$-approximation for $\varepsilon > 
	\frac{2}{3}$,
	unless $\textsc{P}=\textsc{NP}$.

\subparagraph{Our Results}
For $\delta \geq 3/2$,
	\cov{} is \logAPX-hard,
	and allows an $\mathcal O(\log n)$ approximation.
For every $\delta < 3/2$,
	the problem has a constant factor approximation
	and is \APX-hard when $\delta$ is not a unit-fraction.

For a covering range $\delta < 3/2$, we give several polynomial time approximation algorithms
	and lower bounds under the Unique Games Conjecture (UGC),
	plotted in part in \Cref{fig:ug-bounds,fig:ug-bounds:high}.
Due to a result by Hartmann et al.~\cite{DBLP:journals/mp/HartmannLW22}	that lets us translate the value of $\delta$,
	we may mostly focus on $\delta \geq 1/2$.
One may expect that the approximation ratio approaches $1$
	if $\delta$ approaches
	a polynomial time solvable case
	such as $\delta=1$ and $\delta=1/2$.
\begin{itemize}
\item 
This is true for when $\delta$ approaches $1/2$ from above.
We give families of upper and lower bounds
	which in the limit towards $1/2$ reach $1$.
That is, for $\delta \in [ \frac{x+1}{2x+1}, \frac{x}{2x-1} )$ for integer $x \geq 2$,
	we provide a $\frac{x+1}{x}$-approximation algorithm
	(\Cref{lemma:approx:1:2:to:2:3}).
Our lower bounds under UGC are provided by three families of lower bounds
	that together cover every $\delta$ close to $1/2$.
For example a lower bound of $1 + \frac{1}{2x+1}$
	for covering range $\delta \in [\frac{x+1}{2x+1}, \frac{2x+1}{4x})$ for integer $x \geq 1$
	(\Cref{theorem:uglbs}).
\item
On the other hand, this is not true when $\delta$ approaches $1$ from below.
We give an approximation lower bound under UGC of $6/5$  for $\delta \in [\frac{5}{6},1)$
	(also \Cref{theorem:uglbs}).
Further, we provide a $2$-approximation algorithm for a superset of these $\delta$
	(\Cref{lemma:bound:delta:smaller:1}).
\end{itemize}
Further, for the interval $\delta \in [\frac{2}{3}, \frac{3}{4})$
	we provide an upper and lower bound that only differ by a factor of $9/8$.
Our upper bound is an algorithm
	that behaves uniformly for every $\delta \in [\frac{2}{3}, \frac{3}{4})$,
	while the minimum size of a $\delta$-cover for such $\delta$ can vary by a factor of ${9}/{8}$ itself. 

\medskip

For covering range $\delta \in (1,\frac{3}{2})$, the situation is similar to $\delta \in (\half,1)$.
We give a $2$-approximation for this interval,
	while for $\delta < {5}/{4}$ we reduce this factor to $5/3$
	and for $\delta < 7/6$ we reduce this factor to $3/2$.
The lower bounds are inherited from the case $\delta \in (\frac{1}{2},\frac{3}{4})$.

For a covering range $\delta < \frac{1}{2}$,
	we inherit upper bounds from the case $\delta \in (\frac{1}{2},1)$,
	which we further improve for certain ranges of $\delta$.

\subparagraph{Organization of this Work}
Preliminaries are in \Cref{section:preliminaries}.
\Cref{section:apx} settles the \APX-hardness and case $\delta \geq \frac{3}{2}$,
	while \Cref{section:hardness:ugc} provides hardness results under UGC.
We give our upper bounds for $\delta > 1$, $\delta \in (\half,1)$ and $\delta < \half$
	in \Cref{section:approx:large}, \Cref{section:simple:approx}, \Cref{section:approx:small}, respectively.

\newcommand{\putmarkx}[1]{
	\pgfmathsetmacro{\xval}{(\x-0.5)*20}
	\draw (\xval,0) -- (\xval,-0.15);
	\node (putmarkx) at (\xval, -0.5) {$#1$};
}
\newcommand{\putmarky}[1]{
	\pgfmathsetmacro{\yval}{(\y-1)*5}
	\draw (-0.2, \yval) -- (0, \yval);
	\node (putmarky) at (-0.5+0.1*\c, \yval) {$#1$};
}

\begin{figure}[t]
	\centering
	\begin{tikzpicture}[yscale=0.7,xscale=1.1]
		\def\colora{dkblue}
		\def\colorb{lgreen}
		\def\colorc{blue}
		\def\colord{cyan}
		\def\colore{purple}
		\def\colorf{lorange}
		\def\colorg{red}
		\draw[->, ultra thick] (-0.2, 0) -- (10.5, 0);
		\node at (11, -0.5) {$\delta$};
		\draw[->, ultra thick] (0, -0.2) -- (0, 5.5);
		\node[rotate=90] at (-1, 2) {ratio};
		
		\def\x{1}\putmarkx{1}
		\foreach \a/\b in {1/2,3/4,2/3,4/5,3/5,4/7,5/8,9/14,7/10,5/6}{
			\def\x{\a/\b}\putmarkx{\frac{\a}{\b}}
		}
		
		\def\c{0}\def\y{2}\putmarky{2}
		\foreach \a/\b/\c in {8/7/1, 6/5/-1, 4/3/1, 3/2/0} {
			\def\y{\a/\b}\putmarky{\frac{\a}{\b}}
		}
		\def\y{1}\putmarky{1}

		\foreach \x in {1, ..., 100} {
			\pgfmathsetmacro{\from}{((\x+1)/(2*\x+1)-0.5)*20}
			\pgfmathsetmacro{\to}{((2*\x+1)/(4*\x)-0.5)*20}
			\pgfmathsetmacro{\val}{((2*\x+2)/(2*\x+1)-1)*5}
			\ifthenelse{\x=1}{
				\draw[-, color=\colora] (\from, \val) -- node[above, yshift=-3] {Th.~\ref{theorem:uglbs}(a)} (\to, \val);
			}{
				\draw[-, color=\colora] (\from, \val) -- (\to, \val);
			}
			
			\pgfmathsetmacro{\from}{((\x+2)/(2*\x+2)-0.5)*20}
			\pgfmathsetmacro{\to}{((2*\x+3)/(4*\x+2)-0.5)*20}
			\pgfmathsetmacro{\val}{((2*\x+2)/(2*\x+1)-1)*5}
			\ifthenelse{\x=1}{
				\draw[-, color=\colorb] (\from, \val) -- node[above, yshift=-3] {Th.~\ref{theorem:uglbs}(b)} (\to, \val);
			}{
				\draw[-, color=\colorb] (\from, \val) -- (\to, \val);
			}
			
			\pgfmathsetmacro{\from}{((2*\x+5)/(4*\x+6)-0.5)*20}
			\pgfmathsetmacro{\to}{((\x+2)/(2*\x+2)-0.5)*20}
			\pgfmathsetmacro{\val}{((2*\x+6)/(2*\x+5)-1)*5}
			\ifthenelse{\x=1}{
				\draw[-, color=\colorc] (\from, \val) -- node[above, yshift=-3, xshift=5] {Th.~\ref{theorem:uglbs}(c)} (\to, \val);
			}{
				\draw[-, color=\colorc] (\from, \val) -- (\to, \val);
			}
			
			\pgfmathsetmacro{\from}{((\x+2)/(2*(\x+1)+1)-0.5)*20}
			\pgfmathsetmacro{\to}{((\x+1)/(2*(\x+1)-1)-0.5)*20}
			\pgfmathsetmacro{\val}{((\x+2)/(\x+1)-1)*5}
			\ifthenelse{\x=1}{
				\draw[-, color=\colore, ultra thick] (\from, \val) -- node[above] {Th.~\ref{lemma:approx:1:2:to:2:3}} (\to, \val); 
			}{
				\draw[-, color=\colore, ultra thick] (\from, \val) -- (\to, \val);
			}
 		}
 		\pgfmathsetmacro{\x}{1}
		\pgfmathsetmacro{\from}{((\x+1)/(2*\x+1)-0.5)*20}
		\pgfmathsetmacro{\to}{((2*\x+1)/(4*\x)-0.5)*20}
		\pgfmathsetmacro{\val}{((2*\x+2)/(2*\x+1)-1)*5}
		\draw[-, color=\colorf, ultra thick] (\from, 2.5) -- node[above] {Th.~\ref{lemma:approx:3:2:to:3:4}} (\to, 2.5);
 		
 		\pgfmathsetmacro{\x}{1}
		\pgfmathsetmacro{\from}{((\x+2)/(2*\x+2)-0.5)*20}
		\pgfmathsetmacro{\to}{((\x+1)/(2*\x)-0.5)*20}
		\pgfmathsetmacro{\val}{((\x+1)/(\x)-1)*5}
		\draw[-, color=\colorg, ultra thick] (\from, \val)
			-- node[above] {Th.~\ref{lemma:bound:delta:smaller:1}} (\to, \val);
 		
		\draw[-, color=\colord] (6, 1) -- (7.5, 1);
		\draw[-, color=\colord] (7.5, 1) -- node[above, yshift=-3] {Th.~\ref{theorem:uglbs}(d)} (10, 1);
	\end{tikzpicture}
	\caption{Upper bounds (as bold lines) and lower bounds under UGC (as thin lines)
	on the approximation ratio of \cov\ plotted for $\delta \in (\half,1)$.
	The drawn intervals are half-open with the upper end excluded.}
	\label{fig:ug-bounds}
\end{figure}
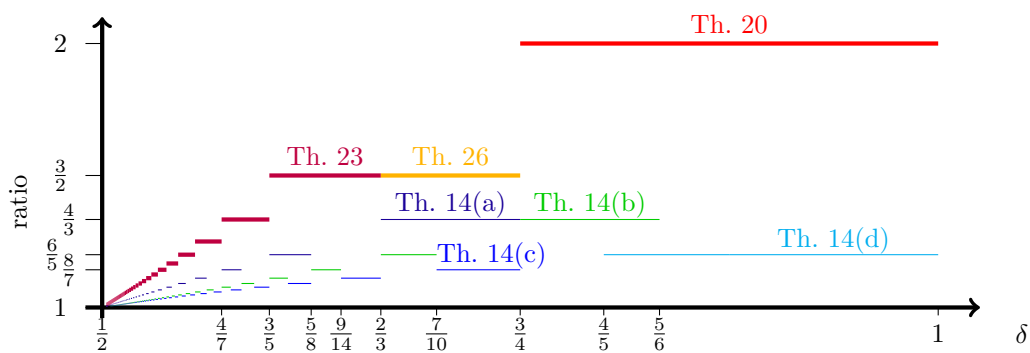

\section{Preliminaries}
\label{section:preliminaries}

All graphs we consider are simple and undirected.
Every edge has unit length.
We use $V(G)$ and $E(G)$ for the vertex respectively edge set of a graph $G$.
Further, $N(v) \coloneqq \{ u \in V(G) \mid \{u,v\} \in E(G) \}$ and $N[v] \coloneqq N(v) \cup \{v\}$.
We use the word \emph{vertex} in the graph-theoretic sense,
	while we use the word \emph{point} for the elements of $P(G)$.
We use the notation of~\cite{GrigorievHLW21}:
For an edge $\{u,v\}$ and real $\lambda \in [0,1]$,
	we denote by $p(u,v,\lambda)$ the point on edge $\{u,v\}$
	that has distance $\lambda$ to $u$.
We have that $p(u,v,\lambda)$ coincides with $p(u,v,1-\lambda)$,
	and that $p(u,v,0)=u$ and $p(u,v,1)=v$.
The \emph{open ball} with radius $r$ around a point $p$, denoted as $\ball{p}{r}$,
is the set of points with distance $< r$ from $p$.
The \emph{closed ball}, denoted as $\balll{p}{r}$, is the set of points points with distance $\leq r$ from $p$.

\begin{lemma}[Hartmann et al.~\cite{DBLP:journals/mp/HartmannLW22}]
	\label{lemma:covering:p}
	Let $\delta$ be a unit fraction.
	Then a minimum $\delta$-cover of a graph $G$ can be found in polynomial time.
\end{lemma}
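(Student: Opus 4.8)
The plan is to solve the problem in two steps: first discretise, reducing to a polynomial-size combinatorial problem, and then reduce that to a classical polynomial-time-solvable problem. Write $\delta=1/k$ for a positive integer $k$; then every edge of $G$ has length $k\delta$, so the positions $p(u,v,j\delta/2)$, $j\in\{0,1,\dots,2k\}$, partition each edge into $2k$ pieces of length $\delta/2$. Let $G'$ be the corresponding subdivision graph: $V(G')$ is exactly this grid of positions, it has polynomial size since $k$ is a constant, and every cycle of $G'$ is a $2k$-fold subdivision of a cycle of $G$, so $G'$ is bipartite.

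\emph{Step 1 (discretisation).} I claim there is a minimum $\delta$-cover using only vertices of $G'$. The key numerical fact is that, because $\delta$ is a unit fraction, the $G'$-distance between any grid point and any point $p(u,v,\lambda)$ is a multiple of $\delta/2$ plus $\lambda$ or plus $1-\lambda$; hence, if a facility $f$ at $p(u,v,\lambda)$ lies at distance exactly $\delta$ from some grid point, then $\lambda$ is itself a multiple of $\delta/2$, i.e.\ $f$ sits on the grid. One would like to exploit this by taking a minimum cover with as few off-grid facilities as possible and sliding an off-grid facility $f$ along its edge towards a grid position; the subtlety — and the real work — is that $f$ may be the unique facility providing coverage for some point strictly inside a neighbouring edge, so that sliding $f$ alone breaks the cover. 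This is handled by a simultaneous, global exchange: slide $f$ together with the facilities responsible for the ``opposing'' ends of those neighbouring edges, propagating the adjustment until it is absorbed at a grid point or at a vertex of degree one; one shows this process terminates and lands the moved facilities on the grid without increasing the total number of facilities, contradicting minimality. (The possibility that a shortest path between two points detours through the rest of $G$ only helps, since edges have unit length and integers are multiples of $\delta/2$.)

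\emph{Step 2 (reduction to a tractable problem).} It remains to compute a minimum $S\subseteq V(G')$ that $\delta$-covers $G$. On each subdivided edge the requirement is an interval-covering condition — a chosen grid point covers a contiguous stretch of that edge and a prefix, of length determined by the offset, of each incident edge at the two original endpoints of its home edge — and distinct edges of $G$ interact only through the original vertices of $G$. I would model this as an integer linear program whose constraint matrix has a consecutive-ones / network structure, hence is totally unimodular, so that its linear relaxation has integral optima and can be solved in polynomial time; equivalently one can reduce the discrete problem to a minimum-cost flow or to a minimum vertex cover on an auxiliary bipartite graph obtained from $G'$, solvable in polynomial time by K\"onig's theorem. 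Either route gives the lemma.

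\emph{Expected main obstacle.} Step 1 is the delicate part: proving rigorously that the $\delta/2$-grid always carries an optimal solution, i.e.\ that the global exchange above can always be carried out and always terminates. Getting Step 2's target exactly right also needs care, since a careless formulation is just \textsc{Set Cover}; in particular the reduction must use that $G'$ is a $2k$-subdivision, as $\cov[2]$ is \NP-hard on arbitrary graphs.
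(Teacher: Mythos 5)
There is a genuine gap, concentrated in your Step 2. The paper does not reprove this lemma but summarizes the cited algorithm: it computes a minimum $1$-cover via the Edmonds--Gallai decomposition, splitting $G$ into a part $G_0$ with a perfect matching, a bipartite part $G_1$, and a part $G_{\geq 3}$ whose components are factor-critical, and returns a cover of size $\nu(G_0) + \nu(G_{\geq 3}) + c_{\geq 3} + \mathrm{vc}(G_1)$. The factor-critical (odd) components are exactly where your proposed route breaks: the discretized covering ILP is \emph{not} totally unimodular and its LP relaxation is not integral. Already for $G=K_3$ and $\delta=1$, putting weight $\tfrac12$ on each vertex fractionally covers every point of every edge (each interior point is within distance $1$ of both endpoints of its edge), giving LP value $\tfrac32$, whereas the integral optimum is $2$. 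So no consecutive-ones/network-matrix or K\"onig-type argument can work directly; the integrality gap on odd components is precisely why the cited algorithm needs nontrivial matching theory (the $+\,c_{\geq 3}$ term compensates for each factor-critical component). A correct version of your Step 2 would have to reduce to \emph{general} (non-bipartite) maximum matching via the Edmonds--Gallai structure, not to a TU linear program or bipartite vertex cover.

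Your Step 1 (restricting to the $\delta/2$-grid) is consistent with the cited discretization result (the paper's Lemma~\ref{lemma:cov2bsimple} on $2b$-simple covers), but as you yourself note, the global exchange/termination argument is only sketched, so this step is also not established. Finally, note that for a general unit fraction $\delta=1/k$ the clean route is the Subdivision Reduction: a $1/k$-cover of $G$ corresponds to a $1$-cover of the $k$-subdivision $G_k$, so the whole lemma reduces to the $\delta=1$ case --- which is exactly the case where your LP argument fails.
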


The proof in~\cite{DBLP:journals/mp/HartmannLW22}
	uses the Edmonds-Gallai-Decomposition, see~\cite{Edmonds1965,Gallai1963,Gallai1964},
	to split $G$ into three disjoint subgraphs $G_0$, $G_1$ and $G_{\geq 3}$.
Graph $G_0$ has a perfect matching, $G_1$ is bipartite,
	and every component $C$ of $G_{\geq 3}$ has size at least $3$
	and is \emph{factor-critical},
	that is for any vertex $u \in V(C)$ there is a maximum matching that misses exactly $u$.
Let $c_{\geq 3}$ be the number of components of $G_{\geq 3}$.
Since every component contains at least three vertices, we have $c_{\geq 3} \leq \frac{1}{3}|V(G)|$.
Then the computed minimum $1$-cover $S^*$ has the size bound
\begin{equation}
\label{equation:1:cover:size}
	|S^*| \;=\; \nu(G_0) + \nu(G_{\geq 3}) + c_{\geq 3} + \text{vc}(G_1)
	\;\leq\; \tfrac{1}{2}(|V(G)| + c_{\geq 3})
	\;\leq\; \tfrac{2}{3}|V(G)|,
\end{equation}
where $\nu$ denotes the size of a maximum matching.
Further, we use the following property.
A subset $S \subseteq P(G)$ is \emph{neat} if
\begin{itemize}
\item \labeltext{$($N1$)$}{def:edge}
for every $\{u,v\}\in E(G)$, if $|S \cap P(G[\{u,v\}])| \geq 2$,
	then $S \cap P(G[\{u,v\}]) = \{u,v\}$. 
\end{itemize}

\begin{restatable}{lemma}{lemmaNOne}
\label{lemma:n:1}
Let $\delta\geq\half$ and $G$ be a graph.
There is a minimum $\delta$-cover of $G$ that is neat.
\end{restatable}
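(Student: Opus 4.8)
The plan is to take a minimum $\delta$-cover and ``snap'' every cover point sitting in the interior of an edge onto the endpoints of that edge; phrased carefully, this amounts to choosing a minimum $\delta$-cover that is extremal with respect to a secondary objective, so that no further snapping is possible.

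The only real computation is the following \emph{snapping claim}: for $\delta \geq \half$, every $\delta$-cover $S$, and every edge $e = \{u,v\}$, the set $S_e \coloneqq (S \setminus P(G[\{u,v\}])) \cup \{u,v\}$ is again a $\delta$-cover. To check it, fix $p \in P(G)$ and a point $q \in S$ with $d(p,q) \leq \delta$. If $q \notin P(G[\{u,v\}])$ or $q \in \{u,v\}$, then $q \in S_e$ and $p$ is covered. Otherwise $q$ lies in the interior of $e$. If $p \in P(G[\{u,v\}])$, then $p$ is within distance $\half \leq \delta$ of $u$ or of $v$, both of which lie in $S_e$. If $p \notin P(G[\{u,v\}])$, then every path from $q$ to $p$ must pass through $u$ or through $v$, so $\min(d(u,p),d(v,p)) \leq d(q,p) \leq \delta$, and again $p$ is covered by $S_e$. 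This is exactly where $\delta \geq \half$ is used.

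Two consequences follow. First, $|S_e| \leq |S| - |S \cap P(G[\{u,v\}])| + 2$, so every \emph{minimum} $\delta$-cover has at most two points on each closed segment $P(G[\{u,v\}])$ -- otherwise $S_e$ would be a strictly smaller $\delta$-cover. Second, and this is the step that carries the argument, among all minimum $\delta$-covers I would pick one, $S$, that maximizes $|S \cap V(G)|$, the number of cover points that are vertices of $G$; such an $S$ exists since $|S \cap V(G)| \leq |V(G)|$.

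It remains to argue that this $S$ is neat. Suppose not, so \ref{def:edge} fails at some edge $e = \{u,v\}$: by the first consequence, $S \cap P(G[\{u,v\}]) = \{a,b\}$ with $\{a,b\} \neq \{u,v\}$, and since $u,v$ are the only vertices lying on $e$, at most one of $a,b$ is a vertex. Then $S_e = (S \setminus \{a,b\}) \cup \{u,v\}$ is a $\delta$-cover with $|S_e| \leq |S|$, hence $|S_e| = |S|$ by minimality, which forces $u \notin S \setminus \{a,b\}$ and $v \notin S \setminus \{a,b\}$. A short case split -- whether none or exactly one of $a,b$ is a vertex, handling the sub-case $a=u$ -- then shows $|S_e \cap V(G)| \geq |S \cap V(G)| + 1$, contradicting the choice of $S$; hence $S$ is neat. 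I expect this final counting to be the only place requiring care; the snapping claim itself is routine once one notes that a cover point in the interior of $e$ can reach the rest of the graph only through $u$ and $v$.
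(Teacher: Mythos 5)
Your proof is correct and uses the same core idea as the paper's: for $\delta \geq \half$ the two endpoints of an edge $\delta$-cover everything that a point in the edge's interior covers, so the points on any offending edge can be replaced by $\{u,v\}$ without increasing the size of the cover. The paper simply applies this replacement edge by edge, whereas you package the termination of that process as an extremal choice (a minimum cover maximizing $|S \cap V(G)|$), which is a slightly more careful formulation of the same argument.
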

\begin{proof}
	If there is an edge $\{u,v\}$, where $|S \cap P(G[\{u,v\}])| \geq 2$
	but $S \cap P(G[\{u,v\}]) \neq \{u,v\}$,
	we replace the points $S \cap P(G[\{u,v\}])$ in $S$ by the two points $u,v$.
	Applying this modification for every edge yields a minimum $\delta$-cover $S$ that satisfies \ref{def:edge}.
\end{proof}

To obtain an $\alpha$-approximation of \covn{G},
	it suffices to compute an $\alpha$-approximation for every connected component of $G$.
Further, $\delta$-\textsc{Covering} is polynomial time solvable on trees,
	as shown by Megiddo et al.~\cite{MegiddoT1983}.
Hence, it suffices to consider connected non-tree graphs as input.

\subparagraph{Approximation Preserving Reductions}
The class \NPO{} consists of all \NP{} optimization problems.
The class $\APX{} \subseteq \NPO{}$ are those \NP{} optimization problems that admit a constant factor approximation in polynomial time.
A problem $\Pi$ is \APX-hard, if there is a PTAS-reduction from every problem $\Pi' \in \APX{}$ to $\Pi$.
In other words, if a problem is \APX-hard, it does not admit a polynomial time approximation scheme (PTAS), under standard complexity-theoretic assumptions.
Analogously, one can define the classes $f(n)$-\APX, which consist of the \NP{} optimization problems that admit a polynomial time $f(n)$ approximation.
Of those classes, the class \logAPX{} is of particular interest for us.

To show our hardness results, we use two types of reductions: L-reductions and strict reductions.
Both (in our case) are a reduction from a minimization problem $\Pi_1$ to minimization problem $\Pi_2$ given by a pair of functions $f, g$ with the following properties:
\begin{itemize}
	\item The functions $f$ and $g$ are computable in polynomial time.
	\item If $I_1$ is an instance of $\Pi_1$, then $f(I_1)$ is an instance of $\Pi_2$.
	\item If $S_2$ is a solution to an instance $f(I_1)$ of $\Pi_2$, then $g(I_1, S_2)$ is a solution to $I_1$.
\end{itemize}
We use $\OPT_{\Pi}(\cdot)$ to refer to the cost of the optimal solution for $\Pi$ of a given instance, and $\text{cost}_{\Pi}(\cdot)$ to refer to the cost for $\Pi$ of a given solution.
An L-reduction additionally fulfills the following two properties:
\begin{itemize}
	\item There is a positive constant $\alpha \in \mathbb R$ satisfying $\OPT_2(f(I_1)) \leq \alpha \cdot \OPT_1(I_1)$ for all instances $I_1$ of $\Pi_1$.
	\item There is a positive constant $\beta \in \mathbb R$ satisfying $|\text{cost}_1(g(I_1, S_2)) - \OPT_1(I_1)| \leq \beta \cdot |\text{cost}_2(S_2) - \OPT_2(f(I_1))|$ for all instances $I_1$ of $\Pi_1$ and for all solutions $S_2$ of $f(I_1)$.
\end{itemize}
If $\Pi_2$ admits an approximation ratio of $1 + c$, then $\Pi_1$ admits an approximation ratio of $1 + \alpha\beta 
c$, by transformation above inequalities.
On the other hand, a strict reduction additionally fulfills the following property:
\begin{itemize}
	\item For every instance $I_1$ of $\Pi_1$ and for all solutions $S_2$ of $f(I_1)$, \[\frac{\text{cost}_1(g(I_1, S_2))}{\OPT_1(I_1)} \leq \frac{\text{cost}_2(S_2)}{\OPT_2(f(I_1))}\]
\end{itemize}
Additionally, in case of minimization problems, both L-reductions and strict reductions imply the existence of a PTAS-reduction, making them our tools of choice to show \APX-hardness.
For a more in-depth introduction to approximation preserving reductions, we refer the reader to a survey by Crescenzi \cite{crescenzi1997short}.

\section{APX-hardness}
\label{section:apx}

We analyze the \APX-hardness of \cov{}.
For $\delta \geq \frac{3}{2}$, \cov{} turns out to be \logAPX-hard, which is the same threshold where the parameterized complexity of \cov{} jumps from FPT to W[2]-hard \cite{DBLP:journals/mp/HartmannLW22}.
On the other hand, for every $\delta < \frac{3}{2}$ that is not a unit fraction, \cov{} is \APX-hard.
When $\delta$ is a unit fraction, the problem becomes solvable optimally in polynomial time \cite{DBLP:journals/mp/HartmannLW22}, and therefore cannot be \APX-hard under standard complexity-theoretic assumptions.


\subsection{Asymptotically Optimal Approximation for $\delta \geq \frac{3}{2}$}

When designing an approximation algorithm for \cov{}, one difficulty to overcome is that $P(G)$ has infinite size.
Our main tool to overcome this is the following lemma.
We call a rational number \emph{$b$-simple}, if it can be written as $\frac{a}{b}$ for two natural numbers $a,b$.
A $\delta$-cover $S \subseteq P(G)$ is called $b$-simple, if for every $p \in S$, the distance of $p$ to the nearest vertex is $b$-simple.

\begin{lemma}[\cite{DBLP:journals/mp/HartmannLW22}]
	\label{lemma:cov2bsimple}
	Let $G$ be a graph and let $\delta = \frac{a}{b}$ with $a,b \in \mathbb{N}^+$.
	Then there exists an optimal $\delta$-cover $S^* \subseteq P(G)$ that is $2b$-simple.
\end{lemma}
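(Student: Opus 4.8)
The plan is to start from an arbitrary optimal $\delta$-cover $S^{0}$ and to transform it, without changing its size, into one whose facilities all sit at distance in $\tfrac{1}{2b}\mathbb{Z}$ from the nearest vertex. First I would \emph{freeze the combinatorial type} of $S^{0}=\{s^{0}_{1},\dots,s^{0}_{\OPT}\}$: for every facility $s^{0}_{i}=p(u_{i},v_{i},\lambda^{0}_{i})$ the edge $\{u_{i},v_{i}\}$ on which it lies (a facility sitting on a vertex $w$ is simply recorded via $d(w,S^{0})=0$); for every vertex $w$ a facility realising $d(w,S^{0})$ together with the direction of its edge along which this distance is attained; and, for every edge, the left-to-right order in which the facilities in its interior and the two endpoint ``reach intervals'' $[0,\delta-d(u,S^{0})]$ and $[1-(\delta-d(v,S^{0})),1]$ (these are non-empty, since $S^{0}$ covers $u$ and $v$) tile the unit segment. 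Since a set $S$ is a $\delta$-cover exactly when, on every edge, the unit segment is covered by this finite list of closed intervals, freezing the type turns ``$S$ is a $\delta$-cover of size $\OPT$'' into a finite system of linear inequalities in the offset variables $\lambda_{1},\dots,\lambda_{\OPT}\in[0,1]$: box constraints $0\le\lambda_{i}\le1$; an order constraint $\lambda_{i}\le\lambda_{i+1}$ between consecutive interior facilities of an edge; a ``no gap'' constraint $\lambda_{i+1}-\lambda_{i}\le2\delta$ between them; ``no gap'' constraints linking the first, resp.\ last, interior facility of an edge with an endpoint reach (or, on an edge with no interior facility, the two endpoint reaches with each other), in which every occurrence of $d(w,S)$ is replaced by the value $d(w,u_{j(w)})+\mu_{j(w)}$ of its frozen witness, where $\mu_{j(w)}\in\{\lambda_{j(w)},\,1-\lambda_{j(w)}\}$; and constraints $d(w,u_{j(w)})+\mu_{j(w)}\le\delta$ keeping the reaches non-empty. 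Let $Q\subseteq[0,1]^{\OPT}$ be the resulting polytope. Then $\lambda^{0}\in Q$, the set $Q$ is bounded, and every point of $Q$ describes a $\delta$-cover of size $\OPT$, because replacing a true distance $d(w,S)$ by its frozen-witness upper bound can only underestimate the coverage. Hence $Q$ has a vertex $\lambda^{\ast}$, and it suffices to show that $\lambda^{\ast}$ is $2b$-simple.

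To see this, multiply all constraints by $b$ and write $x_{i}=b\lambda_{i}$, so that $x_{i}\in[0,b]$ and $2b\delta=2a\in\mathbb{N}$. Every defining inequality of $Q$ then has an integer right-hand side and involves at most two of the variables, each with coefficient in $\{-1,+1\}$, except for the box and vertex-coverage constraints (and the degenerate links where one facility witnesses both reaches of an edge without an interior facility), which involve a single variable with coefficient in $\{-2,-1,+1,+2\}$. A vertex $\lambda^{\ast}$ is the unique solution of a nonsingular square subsystem $Mx=r$ of tight constraints, with $r$ integral and $M$ of this shape. Permuting rows and columns to group the variables by the connected components of the ``constraint graph'' (one node per variable, one edge per two-variable row) makes $M$ block-diagonal, and every block is square and nonsingular. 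A block on $m$ variables uses $m$ rows; being connected through its two-variable rows it uses at least $m-1$ of them, so it is either a ``tree plus one single-variable row'' (determinant $\pm1$ or $\pm2$) or a signed unicyclic incidence matrix (determinant $\pm2$, since a balanced cycle would be singular). In either case Cramer's rule inside the block puts every coordinate of the solution in $\tfrac{1}{2}\mathbb{Z}$, so $x^{\ast}\in(\tfrac{1}{2}\mathbb{Z})^{\OPT}$, i.e.\ $\lambda^{\ast}_{i}\in\tfrac{1}{2b}\mathbb{Z}$ for all $i$, and the distance $\min(\lambda^{\ast}_{i},\,1-\lambda^{\ast}_{i})$ from each facility to its nearest vertex is $2b$-simple.

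The bookkeeping of the first paragraph — that freezing the type really linearises coverage, and that the exponentially many possible types are harmless since only existence is needed — is routine; the substance, and the step I expect to take the most care, is the half-integrality argument, specifically pinning down the origin of the factor $2$. That factor comes precisely from the ``sum-type'' link constraints (those attached to the first or last interior facility of an edge, or to the two reaches of an edge without an interior facility), which are the only ones that can produce a $+\,+$ rather than a $+\,-$ pair of coefficients: if these could always be oriented as difference constraints, $M$ would be a network matrix and hence totally unimodular, and one would even conclude that some optimal cover is $b$-simple; it is an unbalanced cycle among these constraints that forces the $2$ in ``$2b$-simple''. Finally, when $\delta\ge\tfrac{1}{2}$ one may additionally take $S^{0}$ neat by \Cref{lemma:n:1}, so that each edge carries at most one interior facility; this shortens every covering chain to length at most three and simplifies, but does not by itself eliminate, the analysis.
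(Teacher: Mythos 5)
First, a point of reference: the paper does not prove this lemma at all --- it is imported from \cite{DBLP:journals/mp/HartmannLW22} as a black box, so there is no in-paper proof to compare against, and your argument has to be judged on its own terms. On those terms it is a valid and essentially complete strategy. Freezing the combinatorial type, observing that replacing $d(w,S)$ by the frozen witness's linear expression only \emph{under}-estimates the endpoint reaches (so that every point of $Q$ is still a $\delta$-cover of cardinality at most $\OPT$, hence exactly $\OPT$ by minimality), and then extracting a vertex of the nonempty bounded polytope $Q$ is sound. The half-integrality step is also correct and is indeed where the content lies: after scaling by $b$ every row has integer right-hand side and either two nonzero entries in $\{\pm1\}$ or a single nonzero entry in $\{\pm1,\pm2\}$; a connected block on $m$ variables then has exactly $m-1$ two-variable rows plus one single-variable row (determinant $\pm1$ or $\pm2$) or is unicyclic (determinant $0$ or $\pm2$), so Cramer's rule gives $x^\ast\in(\tfrac12\mathbb Z)^{\OPT}$ and hence $2b$-simplicity. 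Your diagnosis that the factor $2$ is forced by the sum-type reach constraints is also accurate.

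One piece of the ``routine bookkeeping'' needs more care than you allow it. You cannot impose the no-gap constraints between \emph{every} pair of intervals that are consecutive in left-to-right order of their positions: a union of intervals can cover $[0,1]$ even though two position-consecutive members are disjoint (for instance, a reach interval $[0,\delta-d(u,S)]$ that extends far past the first interior facility and bridges the gap between the first and second interior facilities). With that literal reading, $\lambda^0\notin Q$ and the argument breaks. The fix is standard: freeze a covering \emph{sub-chain} of the intervals (chosen greedily so that consecutive members of the chain overlap and the chain covers $[0,1]$), impose the no-gap constraints only along that chain, and leave the facilities whose intervals are not on the chain with box constraints only. This preserves $\lambda^0\in Q$, still forces every point of $Q$ to be a cover, and does not change the two-nonzeros-per-row shape of the constraint matrix, so the rest of your proof is unaffected.
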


Hence, it suffices to $2b$-simple $\delta$-covers.
Further, as we show, to verify a solution,
	it suffices the consider the $4b$-simple points.
Thus, we may construct an equivalent \textsc{Set Cover} instance,
	which then allows a logarithmic approximation~\cite{johnson1973approximation,lovasz1975ratio,chvatal1979greedy}.

\begin{restatable}{lemma}{covLogApprox}
	\label{lemma:covLogApprox}
	There is an $\mathcal O(\log n)$-approximation algorithm for \cov.
\end{restatable}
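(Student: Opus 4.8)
The plan is to reduce \cov{} to \textsc{Set Cover} in a way that is small enough to run in polynomial time and faithful enough that an optimal set cover corresponds to an optimal $\delta$-cover, and then to invoke the classical greedy $\mathcal{O}(\log n)$-approximation for \textsc{Set Cover}. First I would fix the rational $\delta = \frac{a}{b}$ (a constant, since $\delta$ is fixed) and apply \Cref{lemma:cov2bsimple} to restrict attention to $2b$-simple $\delta$-covers: on each edge $\{u,v\}$ of $G$, the only candidate center locations are the points $p(u,v,\lambda)$ with $\lambda \in \{0, \tfrac{1}{2b}, \tfrac{2}{2b}, \dots, 1\}$, giving $\mathcal{O}(b)$ candidates per edge and hence $\mathcal{O}(b\cdot|E(G)|) = \mathcal{O}(n^2)$ candidate centers overall. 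These form the ground set of potential ``sets.''

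Next I would argue that verifying coverage can be done at a finite resolution as well. The claim (hinted by ``it suffices to consider the $4b$-simple points'') is that a set $S$ of $2b$-simple points is a $\delta$-cover of all of $P(G)$ if and only if it covers every $4b$-simple point of $P(G)$. The reason is a convexity/piecewise-linearity argument: along any single edge, the distance function from $p$ to the set $S$ (the lower envelope of finitely many ``tent'' functions, each of slope $\pm 1$) is piecewise linear with breakpoints only at points whose distance to the nearest vertex is a sum of a $2b$-simple number and a half-integer, i.e.\ at $4b$-simple points; so its maximum over each edge is attained either at a $4b$-simple point or it exceeds $\delta$ already at a $4b$-simple point on either side of a ``valley.'' More carefully, between two consecutive covering centers whose balls should overlap, the farthest uncovered candidate is the midpoint of the gap, which is $4b$-simple. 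Thus I define the universe $U$ to be the set of $4b$-simple points of $P(G)$ — again $\mathcal{O}(b\cdot|E(G)|)$ many — and for each candidate center $q$ the set $\mathcal S_q := \{\, x \in U : d(x,q) \le \delta \,\}$, which is computable in polynomial time by BFS-style shortest-path computations in the metric graph.

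Then \covn[\delta]{G} equals the minimum number of sets $\mathcal S_q$ whose union is $U$: the ``$\le$'' direction uses that a minimum $2b$-simple $\delta$-cover covers all of $P(G)\supseteq U$, and the ``$\ge$'' direction uses the finite-verification claim to turn a set cover of $U$ into a genuine $\delta$-cover of $P(G)$ of the same size. Running the greedy algorithm on this instance yields a solution of size at most $H_{|U|}\cdot\OPT_{\textsc{SC}} = \mathcal{O}(\log|U|)\cdot \covn[\delta]{G} = \mathcal{O}(\log n)\cdot\covn[\delta]{G}$, and translating the chosen sets back to their centers gives the desired $\delta$-cover. I would finally note that since we only need the result for $\delta \ge 3/2$ in this subsection (the case $\delta < 3/2$ being handled by constant-factor algorithms elsewhere), $b$ is a fixed constant and all the $\mathcal{O}(\cdot)$ bounds are genuinely polynomial.

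The main obstacle I expect is the finite-verification step: proving rigorously that checking coverage on the $4b$-simple points suffices. One has to handle points lying in the ``interior'' of an edge that are far from every vertex, show that the worst uncovered point on an edge is always $4b$-simple (tracking the interaction of $\pm 1$-slope distance functions emanating from $2b$-simple centers, including centers on neighbouring edges that reach onto this edge through a vertex), and rule out pathological cases where a narrow uncovered sliver sits strictly between two $4b$-simple grid points. Once that combinatorial-geometric lemma is in hand, the reduction to \textsc{Set Cover} and the appeal to the greedy bound are routine.
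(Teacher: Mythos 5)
Your proposal is correct and follows essentially the same route as the paper: restrict to $2b$-simple candidate centers via \Cref{lemma:cov2bsimple}, verify coverage only on the $4b$-simple points (your observation that the boundary of each center's covered interval on an edge is $2b$-simple, so an uncovered point forces an uncovered $4b$-simple midpoint, is exactly the paper's argument), and then run greedy \textsc{Set Cover} on the resulting instance of size $\mathcal O(b\cdot|E(G)|)$. The step you flag as the main obstacle is the one the paper also proves, and your sketch of it is sound.
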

\begin{proof}
Because of \cref{lemma:cov2bsimple},
	it suffices to approximate $2b$-simple $\delta$-covers.
Further, it suffices to consider a finite subset of $P(G)$ when dealing with $2b$-simple solutions.
Consider the set $P'(G) \subseteq P(G)$ constructed as follows:
For every edge $\{u, v\} \in E(G)$, the points $p_x = p(u, v, \frac{x}{4b})$ for $x \in \{0, \dots, 4b\}$ are in 
$P'(G)$.
Then a set $S \subseteq P(G)$ that is $2b$-simple $\delta$-covers $P(G)$,
	if and only if it $\delta$-covers $P'(G)$.
The if direction is trivial since $P'(G) \subset P(G)$, so for a contradiction assume $S$ is $2b$-simple and a 
$\delta$-cover for $P'(G)$, but there is a point $p \in P(G)$ that is not covered by $S$.
Then $p = p(u, v, \lambda)$ for some edge $\{u, v\}$, and there is a natural number $x \in \{0, \dots, 4b\}$ such 
that $x \leq \lambda \leq x+1$.
W.l.o.g. assume that $x$ is divisible by $2$, thus $x$ is $2b$-simple.
Therefore $p' = p(u, v, x+1)$ is not covered by $S$, since no point in $S$ can cover $p'$ without covering $p$, due 
to $S$ being $2b$-simple, which is a contradiction to the assumption that $S$ is a $\delta$-cover for $P'(G)$.

Now we can easily construct a (finite) \textsc{Set Cover} instance $(U, \mathcal S)$, taking $U = P'(G)$ as its 
universe, and for every point $p_x = p(u, v, \frac{x}{4b})$ for $x \in \{0, \dots, 4b\}$ where $x$ is even, the set 
$S_{p_x} = \{p \in P'(G) ~|~ \dist{p, p_x} \leq \delta\}$ is added to $\mathcal S$.
Then $U$ has size $4b \cdot |E(G)|$, and the simple greedy approximation algorithm for \textsc{Set Cover} achieves 
an approximation ratio of $\log(4b \cdot |E(G)|) \in \mathcal O (\log n)$ 
\cite{johnson1973approximation,lovasz1975ratio,chvatal1979greedy}.
\end{proof}

We note that if $b$ is not considered to be a constant but part of the input, a result by Tamir~\cite{Tamir87} can be used to obtain a $\delta^* = \frac{a^*}{b^*} \geq \delta$ such that the size of the optimal solution stays the same, and $b^*$ is polynomially bounded in the input size.

\subsection{\logAPX-hardness for $\delta \geq \frac{3}{2}$}

The constructions for \Cref{lemma:logapxhard1,lemma:logapxhard2} were already used in \cite{DBLP:journals/mp/HartmannLW22} to show W[2]-hardness (and thus NP-hardness) of \cov{} for $\delta \geq 2$ by reducing from \textsc{Dominating Set}.
We use them to show \logAPX-hardness, by additionally defining a function that translates \cov{} solutions to \textsc{Dominating Set} solutions, instead of translating the solution size.
Thus we obtain strict reductions showing \logAPX-hardness, since \textsc{Dominating Set} is \logAPX-hard \cite{chlebik2008approximation}.

\begin{restatable}{lemma}{logapxhardA}
	\label{lemma:logapxhard1}
	For every $\ell \in \mathbb{N}$ with $\ell \geq 2$ and every $\delta$ with $\ell \leq \delta < \ell + \frac{1}{2}$, \cov{} is \logAPX-hard.
\end{restatable}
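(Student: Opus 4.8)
The plan is to exhibit a strict reduction from \ds{} to \cov{}; since \ds{} is \logAPX-hard~\cite{chlebik2008approximation} and strict reductions preserve the approximation ratio (the instance-size blow-up will be linear, so an $\mathcal O(\log n)$ approximation would carry over), this proves the claim. Write $\ell := \lfloor\delta\rfloor \ge 2$ and $L := \ell-1 \ge 1$; the assumption $\ell \ge 2$ is exactly what makes $L \ge 1$, i.e.\ gives us room for a nonempty gadget (the range $\delta \in [\tfrac32,2)$ must be treated separately, which is the point of the companion \Cref{lemma:logapxhard2}). Given a graph $H$ (an instance of \ds{}), let $f(H)=G_H$ be obtained from $H$ by attaching to every $v \in V(H)$ a fresh pendant path $v = z^v_0, z^v_1, \dots, z^v_L$ of length $L$; this is the graph already used for the W[2]-hardness reduction in~\cite{DBLP:journals/mp/HartmannLW22}, and $|V(G_H)| = \ell\cdot|V(H)|$. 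The key feature of the length $L=\ell-1$ is that the tip $z^v_L$ is reachable only through $v$, so a facility $p$ not lying on $v$'s pendant path covers $z^v_L$ if and only if $\dist{p,v}\le\delta-L$; since $1 \le \delta-L < \tfrac32$, this happens exactly when $\dist{p,v}<\tfrac32$. Intuitively, ``$z^v_L$ is covered'' becomes equivalent to ``$v$ is dominated''.

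For the forward direction I would check that if $D\subseteq V(H)$ is a dominating set of $H$, then $D$ itself, viewed as a set of facilities of $G_H$, is a $\delta$-cover: using $\dist{v,D}\le 1$, every point of a pendant edge of $v$ is within distance $L+1=\ell\le\delta$ of $D$, and every point of an edge of $H$ is within distance $1+\half\le\delta$ of $D$. Hence $\covn[\delta]{G_H}\le\OPT_{\ds}(H)$. For the backward function $g$, given any $\delta$-cover $S$ of $G_H$ I would send each facility $p\in S$ to a single vertex $\pi(p)\in V(H)$ by the rule: if $p\in V(H)$ then $\pi(p):=p$; otherwise, if $p$ lies on the (then uniquely determined) pendant path of some $v$, then $\pi(p):=v$; otherwise $p$ lies in the interior of an edge $\{a,b\}\in E(H)$, and $\pi(p)$ is the endpoint at distance less than $\half$ from $p$ (breaking the tie at distance exactly $\half$ arbitrarily). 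Then $g(H,S):=\{\pi(p): p\in S\}$ has $|g(H,S)|\le|S|$, and it dominates $H$: for each $v$, pick a facility $p\in S$ covering $z^v_L$; by the observation above either $p$ lies on $v$'s pendant path, giving $\pi(p)=v\in N[v]$, or $\dist{p,v}<\tfrac32$, and then a short case analysis on the three rules forces $\pi(p)\in N[v]$. Thus $\OPT_{\ds}(H)\le|g(H,S)|$ for every $\delta$-cover $S$; together with the forward bound this gives $\OPT_{\ds}(H)=\covn[\delta]{G_H}$, so the strict-reduction inequality $|g(H,S)|/\OPT_{\ds}(H)\le|S|/\covn[\delta]{G_H}$ holds immediately.

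The step I expect to be the main obstacle is the last claim above — that the facility covering $z^v_L$ always rounds, under $\pi$, into $N[v]$. The only genuinely delicate case is when that facility $p$ lies in the interior of an edge $\{a,b\}$ of $H$: here one must use $\delta<\ell+\half$ to conclude $\dist{p,v}<\tfrac32$, then split on which endpoint of $\{a,b\}$ is nearer to $p$ and which is nearer to $v$. If these coincide, that endpoint is at distance $0$ or $1$ from $v$, hence in $N[v]$; if they differ, the endpoint nearer to $v$ is at distance strictly below $1$ from $v$, so it equals $v$, and then the endpoint nearer to $p$ (the one $\pi$ selects) is a neighbour of $v$, again in $N[v]$; at distance exactly $\half$ one endpoint equals $v$ and the other is its neighbour, so either choice works. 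Everything else is routine: verifying that the three cases defining $\pi$ are exhaustive and mutually exclusive, that the dominating-set placement covers all of $P(G_H)$, and that $|V(G_H)|$ is linear in $|V(H)|$ so that the $\mathcal O(\log n)$-inapproximability of \ds{} transfers to \cov{}. One should also recall, as noted in the preliminaries, that the hard \ds{} instances may be assumed connected and non-tree, so that the same holds for $G_H$.
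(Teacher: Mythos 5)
Your proposal is correct and follows essentially the same route as the paper: the same pendant-path construction of length $\ell-1$, the same rounding function $g$ (closest original vertex), and the same key observation that the facility covering the path tip $z^v_L$ must round into $N[v]$, yielding a strict reduction from \ds{}. The only differences are cosmetic — you argue domination directly via the tip-covering facility where the paper argues by contradiction, and your phrase ``this happens exactly when $\dist{p,v}<\tfrac32$'' should be ``only if'' (the converse is false when $\delta-L<\tfrac32$), but only the correct direction is used.
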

\begin{proof}
	We define the following two functions $f$ and $g$, where $f$ maps instances of \textsc{Dominating Set} to \cov{} and $g$ maps solutions of \cov{} back to solutions of the original \textsc{Dominating Set} instances.
	For a graph $G$, $f(G) = H$ is the graph were a path of $\ell-1$ edges is attached to each vertex of $G$.
	For a vertex $v \in V(G)$, we denote the last vertex in the path attached to it in $H$ with $v'$.
	For a $\delta$-cover $S \subseteq P(H)$, we set $g(G, S) = D$ such that for each $p \in S$, $D$ contains the vertex closest to $p$ in $G$, breaking ties arbitrarily.
	
	Let $D \subseteq V(G)$ be a dominating set in $G$, then $D$ is also a $\delta$-cover in $H$, as the point on $w'$ for every neighbor $w$ of a vertex $v \in D$ is $\delta$-covered by the point on $v$. 
	Since $\delta \geq 2$, every point in $P(G)$ is also covered.
	Further, if $S \subseteq P(H)$ is a $\delta$-cover in $H$, then $g(G, S)$ is a dominating set in $G$.
	Assume this is not the case, then there is a vertex $v \in V(G)$, such that $N[v] \cap g(G, S) = \emptyset$.
	Thus there is no point in $S$ on any edge adjacent to $v$ or on the path from $v$ to $v'$.
	Additionally, any point on an edge adjacent to a vertex $w \in N[v]$ must have distance at least $\frac{1}{2}$ from $w$.
	Then the closest point in $S$ to $v'$ has distance at least $\ell + \frac{1}{2} > \delta$, which is a contradiction to $S$ being a $\delta$-cover.
	
	From the fact that the set $g(G, S)$ cannot be larger than the set $S$ by construction of $g(G, S)$, it follows that the optimal solutions of $G$ and $H$ must have the same size, and thus also that
	\begin{align*}
		\frac{|g(G, S)|}{|\OPT_{\textsc{Dominating Set}}(G)|} \leq \frac{|S|}{|\OPT_{\cov}(H)|}
	\end{align*}
	Thus $(f, g)$ is a strict reduction.
\end{proof}

\begin{restatable}{lemma}{logapxhardB}
	\label{lemma:logapxhard2}
	For every $\ell \in \mathbb{N}$ with $\ell \geq 2$ and every $\delta$ with $\ell + \frac{1}{2} \leq \delta < \ell + 1$, \cov{} is \logAPX-hard.
\end{restatable}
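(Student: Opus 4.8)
The plan is to follow the template of \Cref{lemma:logapxhard1}: give a strict reduction from \ds, keeping the same backward map $g$, which sends a $\delta$-cover $S$ of the constructed graph $H=f(G)$ to the set $g(G,S)$ obtained by replacing every $p\in S$ with a vertex of $V(G)$ closest to $p$ (ties broken arbitrarily), so that $|g(G,S)|\le|S|$ holds by construction. All the work goes into choosing the pendant gadget $\Gamma_v$ that $f$ attaches to each $v\in V(G)$. Two requirements pin it down. For the forward direction I want every dominating set of $G$, placed at the vertices, to already be a $\delta$-cover of $H$; so the deepest point of $\Gamma_v$ must be reachable within $\delta$ from a neighbour of $v$, i.e.\ it must be at distance at most $\delta-1\in[\ell-\tfrac12,\ell)$ from $v$. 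For the backward direction I will use that, when $N[v]\cap g(G,S)=\emptyset$, no point of $S$ lies within distance $<\tfrac32$ of $v$ (any closer point would be attributed by $g$ to a vertex of $N[v]$); hence I need a point of $\Gamma_v$ at distance strictly more than $\delta-\tfrac32\in[\ell-1,\ell-\tfrac12)$ from $v$, so that it is left uncovered because $\delta<\ell+1$.

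The obstacle — and the reason the plain pendant path of \Cref{lemma:logapxhard1} no longer works — is that these two bounds can be met at once only if the deepest point of $\Gamma_v$ sits at distance exactly $\ell-\tfrac12$ from $v$, and a pendant \emph{tree} of depth at most $\ell-\tfrac12$ can only reach integer distances $\le\ell-1$. I would therefore let $\Gamma_v$ be a path of $\ell-2$ edges $v=u_0,u_1,\dots,u_{\ell-2}$ followed by a triangle on $\{u_{\ell-2},a_v,b_v\}$ (for $\ell=2$ just a triangle at $v$): every point of $\Gamma_v$ is within $\ell-\tfrac12$ of $v$, and its unique deepest point, the midpoint $m_v$ of the edge $\{a_v,b_v\}$, is at distance exactly $\ell-\tfrac12$. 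Since $\Gamma_v$ meets the rest of $H$ only in $v$, it is genuinely pendant: the vertex of $V(G)$ closest to any point of $\Gamma_v$ is $v$, and distances between vertices of $V(G)$ are unchanged (and integral) in $H$.

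It then remains to verify the two directions by distance bookkeeping. Forward: for a dominating set $D$ of $G$ placed at its vertices, every point of $P(G)$ is within $2\le\delta$ of $D$, and a point of $\Gamma_v$ at distance $d\le\ell-\tfrac12$ from $v$ is within $d\le\delta$ of $v$ if $v\in D$, and within $1+d\le\ell+\tfrac12\le\delta$ of some $w\in D\cap N(v)$ otherwise; so $D$ is a $\delta$-cover and $\OPT_{\cov}(H)\le\OPT_{\ds}(G)$. Backward: if $S$ is a $\delta$-cover of $H$ with $N[v]\cap g(G,S)=\emptyset$ for some $v$, then $S$ contains no point of $\Gamma_v$ and no point within distance $<\tfrac32$ of $v$ at all, so every point of $S$ has distance $\ge(\ell-\tfrac12)+\tfrac32=\ell+1>\delta$ to $m_v$ (as any such path must leave $\Gamma_v$ through $v$), contradicting that $S$ is a $\delta$-cover; hence $g(G,S)$ is always a dominating set and $\OPT_{\ds}(G)\le\OPT_{\cov}(H)$. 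With $\OPT_{\cov}(H)=\OPT_{\ds}(G)$ we get $\frac{|g(G,S)|}{\OPT_{\ds}(G)}\le\frac{|S|}{\OPT_{\cov}(H)}$, so $(f,g)$ is a strict reduction, and \logAPX-hardness of \ds{}~\cite{chlebik2008approximation} carries over to \cov. The one step I expect to require care is the claim that $S$ has no point within distance $<\tfrac32$ of $v$: a point in some $\Gamma_u$ is attributed by $g$ to $u$, which must then satisfy $u\notin N[v]$ and hence $d(v,u)\ge 2$; a point of $P(G)$ is attributed to an endpoint $a$ of its edge with $d(a,v)\ge 2$ and $d(a,\cdot)\le\tfrac12$; in both cases the distance to $v$ is at least $\tfrac32$.
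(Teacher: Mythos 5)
Your proposal is correct and matches the paper's proof: the paper uses exactly the same gadget (a pendant path of $\ell-2$ edges ending in a triangle) with the same closest-vertex map $g$, and argues "analogously" to \Cref{lemma:logapxhard1}. Your distance bookkeeping (deepest point at $\ell-\tfrac12$, every point of $S$ at distance $\ge\tfrac32$ from an undominated $v$, hence $\ge\ell+1>\delta$ from the triangle's midpoint) is precisely the analogous argument the paper leaves implicit.
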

\begin{proof}
	This proof is very similar to \Cref{lemma:logapxhard1}.
	For the function $f$, instead of attaching a path of $\ell-1$ edges to every vertex of a graph $G$, we attach a path of length $\ell-2$ edges, and then attach a triangle to the end of each path.
	The function $g$ is defined as in the proof of \Cref{lemma:logapxhard1}.
	By analogous arguments it then follows that $(f, g)$ is a strict reduction.
\end{proof}

The previous two lemmas settle all cases where $\delta \geq 2$.
It thus remains to show \logAPX-hardness for $\frac{3}{2} \leq \delta < 2$.
Since the construction from \Cref{lemma:logapxhard2} breaks down for $\ell = 1$, we use a different construction.
For that, we introduce the \emph{wreath product} of two graphs $G_1$ and $G_2$.
The graph $G_1 \wreath G_2$ is given by $V(G_1 \wreath G_2) = V(G_1) \times V(G_2)$ and 
\[
	E(G_1 \wreath G_2) = \{\{(v_1, v_2), (v_1', v_2')\} ~|~ \{v_1, v_1'\} \in E(G_1), \text{ or } v_1=v_1' \text{ and } \{v_2, v_2'\} \in E(G_2)\}.
\]

\begin{restatable}{lemma}{logapxhardC}
	\label{lemma:logapxhard3}
	For every $\delta$ with $\frac{3}{2} \leq \delta < 2$, \cov{} is \logAPX-hard.
\end{restatable}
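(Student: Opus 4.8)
The plan is to give a strict reduction from \textsc{Dominating Set}, which is \logAPX-hard~\cite{chlebik2008approximation}. For a graph $G$ I would set $f(G)=H\coloneqq G\wreath K_3$, where $K_3$ is the triangle; write $\Gamma_v\coloneqq\{v\}\times V(K_3)$ for the triangle-block of $v\in V(G)$, so that each $\Gamma_v$ induces a triangle and, for $\{u,v\}\in E(G)$, the blocks $\Gamma_u$ and $\Gamma_v$ are completely joined. For a $\delta$-cover $S$ of $H$, the map $g(G,S)$ collects, for every $p\in S$, the vertex $v$ whose block $\Gamma_v$ contains $p$, where a point in the interior of an inter-block edge is charged to the block of its nearer endpoint (ties broken arbitrarily); thus $|g(G,S)|\le|S|$ always. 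Since \textsc{Dominating Set} may be taken on connected graphs, $H$ is a connected non-tree graph, so this is a legitimate input.

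For the forward direction I would show that a dominating set $D$ of $G$ gives a $\delta$-cover of $H$ of size exactly $|D|$: place one facility on a vertex $a_v$ of $\Gamma_v$ for each $v\in D$. Using $\delta\ge\tfrac{3}{2}$, every point on a triangle-edge of a block $\Gamma_u$ lies within distance $1+\tfrac{1}{2}=\tfrac{3}{2}\le\delta$ of $a_v$ for any $v\in N[u]\cap D$; and every inter-block edge $\{x_u,y_w\}$ is covered half by a facility $a_v$ with $v\in N[u]\cap D$ (covering the points within $\tfrac{1}{2}$ of $x_u$, which are at distance $\le 1+\tfrac{1}{2}$) and half by one charged to $N[w]$, the midpoint lying in both halves. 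Hence $\OPT_{\cov}(H)\le\OPT_{\textsc{Dominating Set}}(G)$.

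For the backward direction — this is where $\delta<2$ is used — I would show that $g(G,S)$ is a dominating set whenever $S$ is a $\delta$-cover of $H$. The key elementary fact is that $d_H(x,y)=d_G(s,t)$ for $x\in\Gamma_s$, $y\in\Gamma_t$ with $s\ne t$, since any $x$--$y$ path in $H$ projects onto an $s$--$t$ walk in $G$ of no greater length. Now suppose some $v\in V(G)$ is not dominated; then, by the charging convention, every $p\in S$ lies within distance $\tfrac{1}{2}$ of some vertex of a block $\Gamma_t$ with $d_G(t,v)\ge 2$, hence at distance $\ge\tfrac{3}{2}$ from every vertex of $\Gamma_v$. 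Pick a triangle-edge $\{a_v,b_v\}$ inside $\Gamma_v$ and its midpoint $m$; every path from $p$ to $m$ passes through $a_v$ or $b_v$, so $d(p,m)\ge\tfrac{3}{2}+\tfrac{1}{2}=2>\delta$ for all $p\in S$, contradicting that $S$ covers $m$. Thus $g(G,S)$ dominates $G$, giving $\OPT_{\textsc{Dominating Set}}(G)\le\OPT_{\cov}(H)$; with the forward bound this yields $\OPT_{\cov}(H)=\OPT_{\textsc{Dominating Set}}(G)$, and therefore $\frac{|g(G,S)|}{\OPT_{\textsc{Dominating Set}}(G)}\le\frac{|S|}{\OPT_{\cov}(H)}$, so $(f,g)$ is a strict reduction and \cov{} is \logAPX-hard for $\delta\in[\tfrac{3}{2},2)$.

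I expect the main obstacle to be the bookkeeping in the backward direction: pinning down the charging convention for points on inter-block edges so that ``$v$ undominated'' genuinely forbids any point of $S$ from lying near $\Gamma_v$, and handling the boundary cases (a point sitting exactly on a block vertex, or at the midpoint of an inter-block edge) in the distance estimate. The forward direction is only a routine case distinction over the three kinds of points — triangle-edges of dominating blocks, triangle-edges of dominated blocks, and inter-block edges.
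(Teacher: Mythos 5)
Your proposal is correct and follows essentially the same route as the paper: a strict reduction from \textsc{Dominating Set} via a wreath product that gives every vertex a private block whose edge-midpoint can only be $\delta$-covered from within distance $1$ of that block, the only difference being that you use $G \wreath K_3$ where the paper uses $G \wreath K_2$. The triangle blocks work just as well as the single-edge blocks (the critical distances $3/2$ and $2$ come out the same), so this is an immaterial variation.
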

\begin{proof}
	Again, we give a strict reduction from \textsc{Dominating Set}.
	For a graph $G$ we define $f(G) = H = G \wreath K_2$, where $K_2$ is the clique on two vertices.
	Let $S \subseteq P(H)$ be a $\delta$-cover of $H$.
	For each $p \in S$, let $(v_1, v_2)$ be the closest vertex to $p$ in $H$ where $v_1 \in V(G)$, breaking ties arbitrarily. 
	Then $v_1 \in D$ and $g(G, S) = D$.
	
	Let $D \subseteq V(G)$ be a dominating set in $G$, then $D$ is also a $\delta$-cover in $H$.
	First observe that $D$ is actually also a dominating set in $H$, by construction of $H$.
	Thus the only interesting case are edges between vertices not in $D$.
	Let $e$ be any such edge.
	Then both endpoints of $e$ have a neighbor in $D$, and since $\delta \geq \frac{3}{2}$, $e$ is completely covered by points in $D$.
	
	Further, if $S \subseteq P(H)$ is a $\delta$-cover in $H$, then $g(G, S) = D$ is a dominating set in $G$.
	Assume this is not the case, then there is a vertex $v \in V(G)$, such that $N[v] \cap g(G, S) = \emptyset$.
	Then $S \cap (N[v] \wreath K_2) = \emptyset$, and also any point in $S$ must have distance at least $\frac{1}{2}$ from all points in $N[v]$.
	Then the point on the center of the edge $\{(v, w_1), (v, w_2)\}$ has distance at least $2 > \delta$ from all points in $S$, which is a contradiction to $S$ being a $\delta$-cover.
	
	From the fact that the set $g(G, S)$ cannot be larger than the set $S$ by construction of $g(G, S)$, it follows that the optimal solutions of $G$ and $H$ must have the same size, and thus also that
	\begin{align*}
		\frac{|g(G, S)|}{|\OPT_{\textsc{Dominating Set}}(G)|} \leq \frac{|S|}{|\OPT_{\cov}(H)|}
	\end{align*}
	Thus $(f, g)$ is a strict reduction.
\end{proof}

An example for the construction from \Cref{lemma:logapxhard3} can be seen in \Cref{fig:reduction-ds-cov}.
\begin{figure}[t]
	\centering
	\begin{tikzpicture}[scale=0.5]
		\node[shape=circle, draw=black] (v1) at (0,0) {};
		\node[shape=circle, draw=black] (v2) at (3,0) {};
		\node[shape=circle, draw=black] (v3) at (3,3) {};
		\node[shape=circle, draw=black] (v4) at (0,3) {};
		
		\draw (v1) -- (v2) -- (v3) -- (v4) -- (v1);
		\node[] (G) at (1.5, -1) {$G$};
		
		\node[shape=circle, draw=black] (v11) at (6.5,-0.5) {};
		\node[shape=circle, draw=black] (v21) at (10.5,-0.5) {};
		\node[shape=circle, draw=black] (v31) at (10.5,3.5) {};
		\node[shape=circle, draw=black] (v41) at (6.5,3.5) {};
		\node[shape=circle, draw=black] (v12) at (7.5,0.5) {};
		\node[shape=circle, draw=black] (v22) at (9.5,0.5) {};
		\node[shape=circle, draw=black] (v32) at (9.5,2.5) {};
		\node[shape=circle, draw=black] (v42) at (7.5,2.5) {};
		
		\foreach \x in {1,2,3,4} {
			\draw (v\x1) -- (v\x2);
		}
		
		\foreach \x in {1,2} {
			\foreach \y in {1,2} {
				\draw (v1\x) -- (v2\y);
				\draw (v2\x) -- (v3\y);
				\draw (v3\x) -- (v4\y);
				\draw (v4\x) -- (v1\y);
			}
		}
		\node[] (G) at (8.5, -1) {$f(G)$};
	\end{tikzpicture}
	\caption{Example for the reduction from \textsc{Dominating Set} to \cov{} for $\delta \in [\frac{3}{2}, 2)$.}
	\label{fig:reduction-ds-cov}
\end{figure}
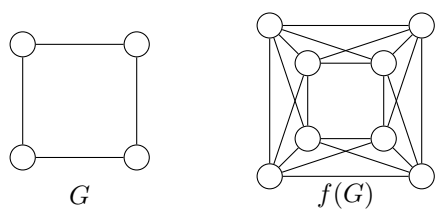

\subsection{\APX-hardness for $\delta < \frac{3}{2}$}
\label{section:apx:delta:leq:32}

To show \APX-hardness for $\delta < \frac{3}{2}$, we first show that the following two self-reductions for \cov{} are L-reductions for graphs that have a large enough solution size.
For a graph $G$ and a natural number $x$ we denote with $G_x$ the $x$-subdivision of $G$,
	that is the graph $G$ where every edge $\{u,v\}$ is replaced by a $u,v$-path of length $x$.

\begin{definition}[Subdivision Reduction]
	\label{def:SubRed}
	This reduction takes $x \in \mathbb{N}^+$ as a parameter. 
	We define $f^S_x(G) = G_x$.
	For an $(x\delta)$-cover of $G_x$,
	we define $g^S_x(G, S') = S$ as the set containing
	\begin{itemize}
		\item every point in $S' \cap V(G)$, and
		\item for every point $p' \in S' \setminus V(G)$,
		the point $p = p(u, v, \frac{\lambda}{x}) \in S$
		where
		$u,v$ are the two closest vertices in $V(G_x) \cap V(G)$
		and $\lambda$ is the distance from $u$ to $p'$ in $G_x$.
	\end{itemize}
\end{definition}

Graph $G$ has an optimal $\delta$-cover of size $k$, if and only if $G_x$ has an optimal $(x \delta)$-cover of size $k$~\cite{DBLP:journals/mp/HartmannLW22}.
The mapping $h$ of a point $p \in P'(G)$
	to the point $p(u,v,\lambda) \in P(G)$
	where $u,v,$ are the two closest vertices in $V(G_x) \cap V(G)$
	and $\lambda$ is the distance from $u$ to $p'$ in $G_x$,
	is a bijection.
We have that a point $p\in P(G_x)$ is $\delta$-covered by a point $q\in P(G_x)$,
	if and only if the point $h(p) \in P(G)$ is $\delta$-covered by the point $h(q) \in P(G)$.
We conclude:
\begin{restatable}{lemma}{subdivisionLRed}
	\label{a:lemma:subdivisionLRed}
	The Subdivision Reduction is an L-reduction for $\alpha = \beta = 1$.
\end{restatable}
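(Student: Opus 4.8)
The plan is to recognise $g^S_x(G,\cdot)$ as the restriction to $S'$ of the bijection $h\colon P(G_x)\to P(G)$ recalled just above, and then to read off all the L-reduction conditions from two facts about $h$: it preserves cardinalities of finite sets, and $p$ is $(x\delta)$-covered by $q$ in $G_x$ exactly when $h(p)$ is $\delta$-covered by $h(q)$ in $G$. First I would check the identification $g^S_x(G,S')=\{\,h(p')\mid p'\in S'\,\}$. For $p'\in S'\cap V(G)$ the nearest vertex of $V(G_x)\cap V(G)$ to $p'$ is $p'$ itself, so $h(p')=p'$, matching the first item of \Cref{def:SubRed}; for $p'\in S'\setminus V(G)$ the point $p'$ lies in the subdivided copy of some edge $\{u,v\}\in E(G)$ and $h(p')=p(u,v,\lambda/x)$ with $\lambda$ the distance from $u$ to $p'$ in $G_x$, which is exactly the second item. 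Since $h$ is injective, $|g^S_x(G,S')|=|S'|$, and the two items never produce the same point because $p(u,v,\lambda/x)\in V(G)$ forces $\lambda\in\{0,x\}$, i.e.\ $p'\in V(G)$.

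Next I would use the forward direction of the covering equivalence to see that $g^S_x$ sends feasible solutions to feasible solutions. The Subdivision Reduction is a self-reduction from $\cov[\delta]$ (instances $G$) to $\cov[x\delta]$ (instances $G_x$); if $S'$ is an $(x\delta)$-cover of $G_x$ and $r\in P(G)$ is arbitrary, write $r=h(p')$ for the unique $p'\in P(G_x)$, pick $q'\in S'$ with $p'$ being $(x\delta)$-covered by $q'$, and conclude that $h(q')\in g^S_x(G,S')$ $\delta$-covers $r$; hence $g^S_x(G,S')$ is a $\delta$-cover of $G$. Applying the equivalence to $h$ and to $h^{-1}$ in the same way, an optimal $\delta$-cover of $G$ pulled back through $h^{-1}$ and an optimal $(x\delta)$-cover of $G_x$ pushed through $h$ give covers of equal size, so the two optima coincide; this is also the statement of~\cite{DBLP:journals/mp/HartmannLW22} recalled before the lemma. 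Write $k$ for this common optimum value.

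With this in hand the two numerical conditions hold with $\alpha=\beta=1$. For $\alpha$: the optimum of $f^S_x(G)=G_x$ equals $k=1\cdot\OPT_{\cov[\delta]}(G)$. For $\beta$: for any $(x\delta)$-cover $S'$ of $G_x$ we have $\text{cost}(g^S_x(G,S'))=|g^S_x(G,S')|=|S'|=\text{cost}(S')$, and both relevant optima equal $k$, so $|\text{cost}(g^S_x(G,S'))-\OPT_{\cov[\delta]}(G)|=|\text{cost}(S')-k|$, which is the $\beta$-inequality with $\beta=1$ (in fact with equality). Polynomial-time computability of $f^S_x$ and $g^S_x$ is immediate: $G_x$ has $|V(G)|+(x-1)|E(G)|$ vertices, and $g^S_x$ only relabels each point of $S'$ by the explicit formula of \Cref{def:SubRed}.

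I expect the one step that needs genuine care to be the covering equivalence under $h$ that is stated (up to the evident rescaling of the range) just before the lemma: one has to verify that a shortest path in $G_x$ between two arbitrary points -- in particular when one or both of them are subdivision vertices or interior points of edges -- is mapped by $h$ to a shortest path in $G$ of exactly $1/x$ of its length, so that subdividing neither creates nor destroys a shortcut. Once that is established, the remainder is the bookkeeping above and the constants $\alpha=\beta=1$ drop out.
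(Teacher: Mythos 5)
Your proposal is correct and follows essentially the same route as the paper: the paper's argument is precisely the observation that $g^S_x$ is the restriction of the bijection $h$ between $P(G_x)$ and $P(G)$, which preserves cardinality and the covering relation (up to scaling the range by $x$), so feasibility is preserved, the two optima coincide, and the L-reduction inequalities hold with $\alpha=\beta=1$. The only step you flag as needing care (that subdivision scales all point-to-point distances by exactly $x$) is exactly the content the paper imports from Hartmann et al.
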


Another self-reduction transforms $\delta$ such that
	a $\delta$-cover needs exactly one more point on every edge.

\begin{definition}[Translation Reduction]
	\label{def:TransRed}
	We define $f^T(G) = G$.
	For a $\frac{\delta}{2\delta+1}$-cover $S'$ of $G$,
	we define $g^T(G,S')= S \subseteq P(G)$ as follows:
	For every edge $\{u,v\} \in E(G)$,
	\begin{itemize}
		\item if $|S' \cap P(G[\{u,v\}])|\leq 1$,
		then $S$ contains no point from $e$; and
		\item
		if $S' \cap P(G[\{u,v\}])= \{ p'_1, \dots, p'_k \}$ for a $k\geq 2$
		then $S$ contains the $k-1$ points $p_1, \dots, p_{k-1}$
		where $p_i = p(u, v, \lambda_i)$
		and $\lambda_1 = \mu \cdot (2\delta + 1)$ and $\lambda_{i+1} = \lambda_i + 2\delta$ for $i \in \{2,\dots,k-1\}$
		and $\mu = \min_{i \in [k]} \dist{p_i,u}$.
	\end{itemize}
\end{definition}

\begin{theorem}[\cite{DBLP:journals/mp/HartmannLW22}]
$f^T$ is a polynomial time reduction from \cov{} to \cov[\frac{\delta}{2\delta+1}]
	such that
	$g^T(G,S')$ is a $\delta$-cover of $G$
	for every $\frac{\delta}{2\delta+1}$-cover $S'$ of $G$.
\end{theorem}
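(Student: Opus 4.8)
The plan is to separate the statement into three pieces: that $f^T$ and $g^T$ are polynomial time, that $f^T$ (with a shift of the solution‑size threshold) is a correct reduction, and that $g^T$ sends $\tfrac{\delta}{2\delta+1}$‑covers to $\delta$‑covers. Throughout I would write $\delta' := \tfrac{\delta}{2\delta+1}$ and record the identity $\delta = (2\delta+1)\,\delta'$ (equivalently $\tfrac{1}{2\delta'} = 1 + \tfrac{1}{2\delta}$), together with $\delta' < \tfrac12$; the latter already gives that every edge carries at least one point of any $\delta'$‑cover, and that an edge carrying only one such point carries it in its interior. Polynomial time is immediate: $f^T$ is the identity on graphs, and for each edge $g^T$ only sorts the at most quadratically many points of $S'$ lying on it and evaluates the arithmetic of \Cref{def:TransRed}.

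For the reduction I would prove the size identity $\covn[\delta']{G} = \covn[\delta]{G} + |E(G)|$, from which a $\delta$‑cover of size $\le k$ exists if and only if a $\delta'$‑cover of size $\le k + |E(G)|$ does, so that $f^T$ together with the threshold shift $k \mapsto k + |E(G)|$ is a reduction. For ``$\le$'' I would start from an optimal $\delta$‑cover and, on every edge, add one point and re‑lay‑out that edge's cover‑points with spacing $2\delta'$ (an inverse of the rule of \Cref{def:TransRed}); using $\delta = (2\delta+1)\delta'$ one checks the result is a $\delta'$‑cover of size at most $|E(G)|$ larger, the one subtlety being that modifying one edge changes the coverage reaching its neighbours, so the construction is carried out on all edges simultaneously. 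For ``$\ge$'' I would feed an optimal $\delta'$‑cover $S'$ to $g^T$: since $\delta' < \tfrac12$, a vertex used by $S'$ lies only on edges carrying at least two $S'$‑points, on each of which the first point $g^T$ produces falls at that vertex, so the first points around a degree‑$d$ vertex collapse to one point; a short count then gives $|g^T(G,S')| \le |S'| - |E(G)|$, which with the feasibility of $g^T$ yields $\covn[\delta]{G} \le \covn[\delta']{G} - |E(G)|$.

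The heart is the feasibility of $g^T$. Fix an edge $e = \{u,v\}$, identify it with $[0,1]$ by distance from $u$ (taking $u$ to be the reference endpoint), and let $\mu_1 \le \dots \le \mu_k$ be the positions on $e$ of the points of $S'$ there, so $g^T$ puts on $e$ the $k-1$ points at distances $\lambda_i = \mu_1(2\delta+1) + (i-1)\,2\delta$ from $u$. I would first record the structural fact that consecutive $\mu_i$'s differ by at most $2\delta'$ unless the gap between their $\delta'$‑balls abuts an endpoint of $e$, in which case that gap is covered by points of $S'$ off $e$ (a gap in the middle of $e$ cannot be, as $\delta' < \tfrac12$). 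The key calculation is a \emph{compression inequality}: if the point of $S'$ nearest $u$ on an edge $e''$ carrying at least two $S'$‑points lies at distance $\pi \le \delta'$ from $u$, so that $g^T$ puts a point at distance $\pi(2\delta+1) \le \delta$ from $u$ on $e''$, then
\[
\delta - \pi(2\delta+1) \;\ge\; \delta' - \pi ,
\]
which is just $\pi \le \delta'$ rewritten via $\delta = (2\delta+1)\delta'$; in words, that new point reaches, past $u$, at least as far into the incident edges as the original $S'$‑point did, and combined with the structural fact it reaches up to the left end $\lambda_1 - \delta$ of the coverage of $\lambda_1, \dots, \lambda_{k-1}$ on $e$. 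For a point $p$ on $e$ covered by $S'$ through some $q' \in S'$ one then argues by cases on the location of $q'$: if $q'$ is on $e$, the relocated points — spaced exactly $2\delta$, hence covering $[\lambda_1-\delta, \lambda_{k-1}+\delta]$ without gaps — reach $p$; if $q'$ is reached through $u$ or through $v$, the compression inequality at that endpoint does. Vertices are covered the same way, by a point $g^T$ puts on an incident edge.

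The part I expect to be hardest is making this case analysis airtight when $S'$ is structurally thin: an edge carrying exactly one $S'$‑point receives no point of $g^T$ at all, so its entire coverage in the output must be furnished by $g^T$‑points on adjacent edges, and one must track carefully which $S'$‑points on those edges survive $g^T$ (only those on edges with at least two $S'$‑points) and verify, via the compression inequality applied at both endpoints, that their combined reach still covers the edge that lost its own point. Everything else should reduce to routine manipulation of the identity $\delta = (2\delta+1)\delta'$.
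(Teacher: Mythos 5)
The paper does not actually prove this theorem: it is stated as a result imported from Hartmann et al.~\cite{DBLP:journals/mp/HartmannLW22}, and the surrounding text only records its consequences (the exact size shift by $|E(G)|$, used for \Cref{lemma:translationLRed}). So there is no in-paper argument to compare yours against; I can only assess your plan on its own terms.

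Your route is the natural one, and the compression inequality is algebraically correct: with $\delta'=\frac{\delta}{2\delta+1}$ one has $\delta-\delta'=2\delta\delta'$, so $\delta-\pi(2\delta+1)\ge\delta'-\pi$ holds exactly when $\pi\le\delta'$. The genuine gap is that $g^T$ is anchored at \emph{one} endpoint only: $\mu$ is the minimum distance of the on-edge points of $S'$ to $u$, and all $k-1$ new points are laid out from $u$ with spacing exactly $2\delta$. This has two consequences your sketch does not handle. First, $\mu\le\delta'$ is not automatic: on an edge whose $u$-end is covered from elsewhere, the closest on-edge point to $u$ may lie at distance $>\delta'$ from $u$, in which case $\lambda_1=\mu(2\delta+1)>\delta$ (possibly even $>1$, so the prescribed point is not on the edge at all) and the compression inequality gives nothing; you must either show this cannot occur for the covers you feed to $g^T$ --- which effectively forces you to normalize $S'$ first (e.g.\ to a neat cover with equally spaced interior points) --- or treat it as a separate case. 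Second, coverage near the far endpoint $v$ is governed by $\lambda_{k-1}=\mu(2\delta+1)+(k-2)\,2\delta$, which depends on $\mu_1$ and $k$ but not on the position $\mu_k$ of the last original point, so ``the compression inequality applied at both endpoints'' is not symmetric; you need an explicit lower bound on $\lambda_{k-1}+\delta$ in terms of how far the original on-edge points reached toward $v$ (via $\mu_k\le\mu_1+(k-1)\,2\delta'$) together with the reach through $v$ of whatever covered $(\mu_k+\delta',1]$. Until these two points are nailed down --- most cleanly by normalizing $S'$ before applying $g^T$ --- the feasibility of $g^T$ is not established. Finally, the size identity in your second paragraph is more than the theorem asks for (the paper imports it separately), so that part can be dropped.
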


Intuitively, a $\frac{\delta}{2\delta+1}$-cover compared to a $\delta$-cover
	must contain exactly one more point on every edge.
Indeed, the minimum $\delta$-cover of $G$ has size $k$,
	if and only if the minimum $\frac{\delta}{2\delta + 1}$-cover of $G$ has 
size $k + |E(G)|$,
	as shown by Hartmann et al.~\cite{DBLP:journals/mp/HartmannLW22}.
Thus the function $g^T$ exactly translates the absolute error of any approximate solution for the $\frac{\delta}{2\delta + 1}$-cover of $G$.

\begin{restatable}{lemma}{translationRed}
	\label{lemma:translationLRed}
	The Translation Reduction $(f^T,g^T)$ is an L-reduction for graphs where $\covn{G} = 
	c\cdot|E(G)|$ for some 
	constant $c > 0$ with $\alpha = {1}/{c}$ and $\beta = 1$.
\end{restatable}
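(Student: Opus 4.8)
The preceding theorem already gives that $(f^T,g^T)$ is a valid polynomial-time reduction from $\cov[\delta]$ to $\cov[\frac{\delta}{2\delta+1}]$, since $f^T(G)=G$ maps instances and $g^T(G,S')$ is a $\delta$-cover of $G$ for every $\frac{\delta}{2\delta+1}$-cover $S'$ of $G$. So only the two quantitative L-reduction conditions remain. Write $\OPT_1(G)=\covn[\delta]{G}$ and $\OPT_2(f^T(G))=\covn[\frac{\delta}{2\delta+1}]{G}$; I will use the identity $\OPT_2(f^T(G))=\OPT_1(G)+|E(G)|$ of Hartmann et al.\ repeatedly.

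The first condition is a direct calculation: from $\OPT_2(f^T(G))=\OPT_1(G)+|E(G)|$ and the hypothesis $\covn{G}=c\cdot|E(G)|$, i.e.\ $|E(G)|=\OPT_1(G)/c$, we get $\OPT_2(f^T(G))=\bigl(1+\tfrac1c\bigr)\OPT_1(G)$, a fixed multiple of $\OPT_1(G)$, which is the required bound $\OPT_2(f^T(G))\le\alpha\cdot\OPT_1(G)$.

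For the second condition the plan is to prove the per-instance estimate
\[
	|g^T(G,S')|\;\le\;|S'|-|E(G)| \qquad\text{for \emph{every} } \tfrac{\delta}{2\delta+1}\text{-cover } S' \text{ of } G
\]
(indeed with equality). Granting it and writing $S=g^T(G,S')$, we get
\begin{align*}
	\bigl|\,\text{cost}_1(S)-\OPT_1(G)\,\bigr| &= |S|-\OPT_1(G) \;\le\; \bigl(|S'|-|E(G)|\bigr)-\OPT_1(G) \\
	&= |S'|-\OPT_2(f^T(G)) = \bigl|\,\text{cost}_2(S')-\OPT_2(f^T(G))\,\bigr|,
\end{align*}
the outer absolute values being vacuous since $S,S'$ are feasible and hence no smaller than their optima; this is the second condition with $\beta=1$. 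To prove the estimate I would go through the definition of $g^T$ edge by edge. Every edge $e$ has $n_e\coloneqq|S'\cap P(G[e])|\ge1$, for if $S'$ had no point on $e$ nor at its endpoints then the interior of $e$ could only be reached through its two endpoints, penetrating strictly less than $\tfrac12$ from each (the covering radius being $\tfrac{\delta}{2\delta+1}<\tfrac12$), leaving a central subinterval uncovered. Hence on an edge with $n_e$ points $g^T$ contributes $n_e-1$ points, and summing removes $|E(G)|$ points overall; the positions prescribed through $\mu$ are precisely those making the contributions of the edges around a common vertex $v$ coincide at $v$ exactly as often as summing $n_e$ over the edges at $v$ over-counts $v$, so the distinct total is $|S'|-|E(G)|$. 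This is essentially the same bookkeeping that proves the identity $\OPT_2(f^T(G))=\OPT_1(G)+|E(G)|$ in Hartmann et al.

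The delicate step is exactly this bookkeeping: checking that $g^T$'s placement rule neither under- nor over-counts the points that land on vertices shared by several edges, for arbitrary and possibly wasteful $\frac{\delta}{2\delta+1}$-covers. Everything else is the identity $\OPT_2(f^T(G))=\OPT_1(G)+|E(G)|$ together with elementary arithmetic. Note, finally, that the conditions are verified only on instances with $\covn{G}=c\,|E(G)|$, so this is a restricted L-reduction; it still transfers \APX-hardness once composed with a reduction whose images all meet that ratio.
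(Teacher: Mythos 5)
Your overall strategy is the right one and matches the paper's: both L-reduction conditions are meant to follow from the identity $\OPT_{\frac{\delta}{2\delta+1}}(G)=\OPT_{\delta}(G)+|E(G)|$ together with the per-solution inequality $|g^T(G,S')|\le |S'|-|E(G)|$, and your reduction of the $\beta=1$ condition to that inequality is correct (as is the observation that $n_e\ge 1$ for every edge because $\tfrac{\delta}{2\delta+1}<\tfrac12$). A small side remark: your computation gives $\OPT_2=(1+\tfrac1c)\OPT_1$, i.e.\ $\alpha=1+\tfrac1c$ rather than the $\tfrac1c$ in the statement; that discrepancy is immaterial for \APX-hardness but you should flag it rather than call $1+\tfrac1c$ ``the required bound.''

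The genuine gap is in the bookkeeping you yourself flag as delicate, and the one-sentence justification you offer for it is not correct for $g^T$ as defined. Since $n_e\ge1$ for all edges, the multiset of points $g^T$ produces has size $\sum_e(n_e-1)=\sum_e n_e-|E(G)|=|S'|-|E(G)|+\sum_{v\in S'\cap V(G)}(\deg(v)-1)$, so to reach $|g^T(G,S')|\le|S'|-|E(G)|$ you must save $\deg(v)-1$ coincidences at every vertex $v\in S'$, i.e.\ \emph{every} incident edge must re-place the point $v$. But the placement rule is anchored at the designated endpoint $u$ of each edge: it puts $v$ back only on incident edges for which $v$ plays the role of $u$ (so that $\mu=0$ and $\lambda_1=0$), or for which an edge with $n_e=1$ contributes nothing. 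On an incident edge $\{w,v\}$ anchored at $w\notin S'$ whose nearest solution point lies at distance $\mu>0$ from $w$, the placed points sit at $\mu(2\delta+1),\ \mu(2\delta+1)+2\delta,\dots$ and generically none of them is $v$; the over-count of $v$ on that edge is then not absorbed, and the claimed inequality (let alone ``equality'') fails. Moreover, for wasteful covers $S'$ the prescribed positions can exceed the edge length altogether, since $2\delta$ may be larger than $1$. Closing this requires an extra idea you don't supply: either first normalize $S'$ (e.g.\ to a neat cover in the sense of \ref{def:edge}, pushing superfluous points to the endpoints) before applying $g^T$, or argue that $g^T$ may be replaced by a placement that never produces more than $n_e-1$ distinct new points per edge while re-using shared vertices consistently. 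Without one of these, the $\beta=1$ bound — the heart of the lemma — is not established.
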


To show \APX-hardness for \cov{} for any interval of $\delta$ with the above reductions, we need an initial interval where \cov{} is \APX-hard.
Further, to use the Translation Reduction,
	we also need \APX-hardness on graphs that have an optimal solution size in $\Omega(|E(G)|)$.
The reductions to show \logAPX-hardness for the interval $\left[\frac{3}{2}, \infty\right)$ are done from \textsc{Dominating Set}.
While \logAPX-hardness already implies \APX-hardness, these reductions do not guarantee that the resulting instance of \cov{} has an optimal solution size in $\Omega(|E(G)|)$.
However, if we use the same reductions from \textsc{Dominating Set} on 3-regular graphs,
	which is \APX-hard \cite{ALIMONTI2000123},
	we obtain \cov-instances that have an optimal solution size in $\Omega(|E(G)|)$.
This is because on 3-regular graphs, any vertex can dominate at most four vertices (including itself).
Therefore, the size of any optimal dominating set is at least $\frac{1}{4}|V(G)|$, and $\frac{3}{2}|V(G)| = |E(G)|$ for 3-regular graphs.

\subsubsection{\APX-hardness for $1 < \delta < \frac{3}{2}$}
\label{sec:apxhardgreater1}
We use the same construction as used in \cite{DBLP:journals/mp/HartmannLW22} to show NP-hardness.
By starting the reduction chain with \textsc{Dominating Set} on 3-regular graphs, we obtain \APX-hardness for the entire interval $(1, \infty)$.
For $i \in \mathbb{N}^+$, define $\alpha_i = \frac{3^i}{3^i-1}$ and $A_i = [\alpha_{i+1}, \alpha_i)$.
Further, let $A_0 = [\frac{3}{2}, \infty)$.
Note that 
\begin{align*}
	3 \cdot \frac{\alpha_i}{2\alpha_i+1} = 3 \cdot \frac{\frac{3^i}{3^i-1}}{2 \cdot \frac{3^i}{3^i-1}+1} = \frac{\frac{3 \cdot 3^i}{3^i-1}}{\frac{2 \cdot 3^i + 3^i - 1}{3^i-1}} = \frac{3^{i+1}}{3^{i+1}-1} &= \alpha_{i+1} \qquad \text{and} \\
	3 \lim_{\delta \rightarrow \infty} \frac{\delta}{2\delta + 1} &= \frac{3}{2}.
\end{align*}
Thus by applying the Translation Reduction and the Subdivision Reduction for $x=3$, we can 
show \cov{} to be 
\APX-hard for $\delta \in A_{i+1}$, by a reduction from \cov{} for $\delta \in A_i$.
Since \cov{} is \APX-hard for $\delta \in A_0$ for graphs with a solution size in $\Omega(|E(G)|)$ due to \Cref{lemma:logapxhard1,lemma:logapxhard2,lemma:logapxhard3}, and $\alpha_i$ converges to $1$ as $i$ increases, we obtain \APX-hardness for the entire interval $(1, \infty)$.

\subsubsection{\APX-hardness for $\delta < 1$}

This section derives \APX-hardness of $\delta$-covering for every $\delta<1$
	that is not a unit-fraction.
Indeed for a unit-fraction $\frac{1}{b}$ for an integer $b$,
	a minimum $\frac{1}{b}$-covering set can be computed in polynomial time,
	as stated in \Cref{lemma:covering:p}.

\begin{restatable}{theorem}{apxhardevenb}
	\label{lemma:apxhardevenb}
	\cov{} is \APX-hard for every $\delta$ that is not a unit-fraction.
\end{restatable}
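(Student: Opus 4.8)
The plan is to extend the reduction chain of \Cref{sec:apxhardgreater1} --- which already establishes \APX-hardness of $\cov[\delta]$ for every $\delta > 1$ --- downwards to every non-unit-fraction $\delta \in (0,1)$; for unit fractions \cov{} is polynomial-time solvable by \Cref{lemma:covering:p}, and $\delta > 1$ is already handled, so this suffices. I would use only two building blocks: the Subdivision Reduction (\Cref{a:lemma:subdivisionLRed}), which is an L-reduction unconditionally, and the Translation Reduction (\Cref{lemma:translationLRed}), which is an L-reduction on graph families satisfying $\covn{G} = \Omega(|E(G)|)$. Applying the Translation Reduction $n$ times sends $\cov[\mu]$ to $\cov[\mu/(2n\mu+1)]$, and one further Subdivision Reduction with parameter $x$ sends that to $\cov[x\mu/(2n\mu+1)]$; since a composition of L-reductions is an L-reduction, \APX-hardness of $\cov[\mu]$ transfers to the target, provided the $\Omega(|E|)$ side condition survives every step.

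That survival is routine to check: the hard instances of \Cref{sec:apxhardgreater1} are obtained from \ds{} on $3$-regular graphs and hence satisfy $\covn[\mu]{G} = \Omega(|E(G)|)$; the Translation Reduction leaves the graph unchanged and merely increases the optimum by $|E(G)|$; and the Subdivision Reduction replaces $G$ by $G_x$ with $|E(G_x)| = x|E(G)|$ and $\covn[x\mu]{G_x} = \covn[\mu]{G}$, where $x$ is a fixed constant. So \Cref{lemma:translationLRed} may legitimately be invoked at every step of the chain.

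What remains --- and what I expect to take the most care --- is choosing $\mu$, $x$ and $n$. Write $\delta = a/b$ in lowest terms; then $a \ge 2$ (otherwise $\delta$ would be a unit fraction) and $b > a$. I would pick $x \in \mathbb N^+$ with $r := (xb \bmod 2a) \in \{1, \dots, a-1\}$ and $xb > 2a$, and set $n := \lfloor xb/(2a) \rfloor \ge 1$ and $\mu := a/r$. Using $xb = 2na + r$ one computes $x\mu/(2n\mu+1) = xa/(2na+r) = xa/(xb) = a/b = \delta$, while $\mu = a/r > 1$ because $1 \le r \le a-1$; hence $\cov[\mu]$ is \APX-hard by \Cref{sec:apxhardgreater1} and the chain above reduces it to $\cov[\delta]$. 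To see that a suitable $x$ exists I would split on the parity of $b$: the residues in $\{xb \bmod 2a : x \in \mathbb Z\}$ are exactly the multiples of $g := \gcd(b, 2a)$ modulo $2a$; if $b$ is odd then $g = 1$ (since $\gcd(b,2) = \gcd(b,a) = 1$), so $xb \equiv 1 \pmod{2a}$ is solvable and $1 \le a-1$; if $b$ is even then $a$ is odd, hence $a \ge 3$, and $g = 2$ (as $2 \mid g$, $g \mid 2a$ and $\gcd(g,a) \mid \gcd(b,a) = 1$ force $g \mid 2$), so $xb \equiv 2 \pmod{2a}$ is solvable and $2 \le a-1$. In either case infinitely many $x$ satisfy the congruence, so one can also demand $xb > 2a$, which finishes the plan.
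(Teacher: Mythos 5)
Your proposal is correct and follows essentially the same route as the paper: both start from the \APX-hardness on $(1,\infty)$ established in \Cref{sec:apxhardgreater1} and push it below $1$ by composing Translation Reductions with a Subdivision Reduction, while checking that the $\covn{G}=\Omega(|E(G)|)$ side condition needed for \Cref{lemma:translationLRed} survives each step. The only substantive difference is bookkeeping: the paper routes through the intervals $(\frac{1}{b+1},\frac{1}{b})$ for even $b$ and cites a number-theoretic fact from~\cite{DBLP:journals/mp/HartmannLW22}, whereas you choose $\mu$, $n$ and $x$ directly so that the chain lands exactly on $\delta$, justified by your own (correct) gcd argument.
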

\begin{proof}
We first obtain \APX-hardness for $\delta$ in the interval $\left(\frac{1}{3}, \frac{1}{2}\right)$,
	then follow \APX-hardness for $\delta$ in an interval $(\frac{1}{b+1},\frac{1}{b})$
	with an \emph{even} positive integer $b$,
	and finally obtain \APX-hardness also for $\delta$ in an interval $(\frac{1}{b+1},\frac{1}{b})$
	for every positive integer $b$.

From \Cref{sec:apxhardgreater1},
	we already know \APX-hardness for $\delta$ in the interval $(1, \infty)$.
By applying the Translation Reduction we also obtain \APX-hardness for $\delta$
	the interval $\left(\frac{1}{3}, \frac{1}{2}\right)$, as $\lim_{\delta \rightarrow 1} \frac{\delta}{2\delta + 1} = \frac{1}{3}$
	and 
	$\lim_{\delta \rightarrow \infty} \frac{\delta}{2\delta + 1} = \frac{1}{2}$.
Further, for $\delta = \frac{1}{b}$ for an integer $b$,
	we note that $\frac{\delta}{2\delta + 1} = \frac{1}{b+2}$.
	Thus repeatedly applying the Translation Reduction yields \APX-hardness for every 
	interval $(\frac{1}{b+1}, 
	\frac{1}{b})$ with even $b$.
Finally, for every  rational number $\delta$, there are $a, b \in \mathbb{N}^+$ such that $b$ is even and ${\frac{a}{b+1} < \delta < \frac{a}{b}}$~\cite{DBLP:journals/mp/HartmannLW22}.
Applying the Subdivision Reduction $a$ times then also yields \APX-hardness for
	the interval $({\frac{a}{b+1},\frac{a}{b}})$,
	hence also for $\delta$.
\end{proof}


\section{Hardness under Unique Game Conjecture}
\label{section:hardness:ugc}

This section derives lower bound on the approximation ratio for \cov,
	assuming the Unique Games Conjecture, see~\cite{khotUGC}.
We reduce from \textsc{Vertex Cover}, which is UG-hard to approximate within any factor better than $2$,
	as shown by Khot et al.~\cite{khot2008vertex}.
In particular, assuming the Unique Games Conjecture,
	it is \NP-hard to distinguish graphs that have a weighted vertex cover of size $\half + \varepsilon_1$ and 
	$1 - \varepsilon_2$, where the total weight of all vertices is $1$.
Further, Khot et al.\ note that their construction can be converted to an unweighted graph
	by the same technique as in~\cite{dinur2005hardness}, which essentially duplicates each vertex to model the 
	weights by the number of copies.
Therefore it is UG-hard to decide
	whether a graph with $n$ vertices has a vertex cover of size $\frac{n}{2} + \varepsilon_1$
	or no vertex cover of size $n - \varepsilon_2$.

We give several sets of reductions showing UG-hardness for different ranges of $\delta$,
	as also shown in \Cref{fig:ug-bounds}.
The core construction, given a vertex cover instance consisting of a graph $G$,
	adds a gadget to every vertex $u \in V(G)$,
	see \Cref{figure:ugc:constructions}.
The constructions dealing with the cases where $\frac{4}{5} \leq \delta < 1$ are shown in \Cref{fig:ug-bounds:high}.
For each gadget, we may assume that a minimum $\delta$-cover contains points from this gadget
	that cover a maximum area of points in the original graph $G$,
	some ball of radius $\ell < 1/2$ around each vertex $u \in V(G)$.
For $\delta \geq 1-\ell$, then a vertex cover of $G$ (as points in $P(G)$) covers all the remaining points $P(G)$.

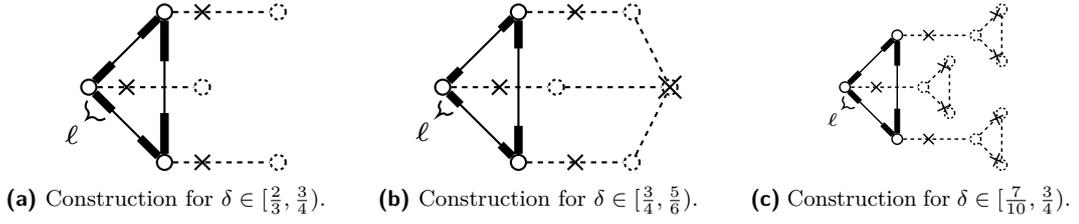
\begin{figure}[t]
	\begin{subfigure}{0.30\textwidth}
		\centering
		\resizebox{!}{0.1\textheight}{
			\begin{tikzpicture}[scale=0.5,
				node/.style = {shape=circle, draw, inner sep=0pt, minimum size=0.25cm},
				textnode/.style = {shape=circle, draw, inner sep=0pt, minimum size=0.4cm},
				smallnode/.style = {shape=circle, draw, inner sep=0pt, minimum size=0.1cm},
				box/.style = {rectangle, fill=gray!20, rounded corners, fill opacity=1, inner sep=1pt},
				cross/.style={cross out, draw=black, minimum size=0.1cm, inner sep=0pt, outer sep=0pt}]
				\node[smallnode] (n1) at (1, 1) {};
				\node[smallnode] (n2) at (0, 0) {};
				\node[smallnode] (n3) at (1, -1) {};
				\draw (n1) -- (n2) -- (n3) -- (n1);
				\node[smallnode, dash pattern=on 0.6pt off 0.6pt] (e1) at (2.5, 1) {};
				\node[smallnode, dash pattern=on 0.6pt off 0.6pt] (e2) at (1.5, 0) {};
				\node[smallnode, dash pattern=on 0.6pt off 0.6pt] (e3) at (2.5, -1) {};
				\draw[dash pattern=on 1pt off 1pt] (n1) -- (e1);
				\draw[dash pattern=on 1pt off 1pt] (n2) -- (e2);
				\draw[dash pattern=on 1pt off 1pt] (n3) -- (e3);
				\node[cross] (p1) at (1.5, 1) {};
				\node[cross] (p2) at (0.5, 0) {};
				\node[cross] (p3) at (1.5, -1) {};
				\node[inner sep=0pt, minimum size=0cm] (n12) at (0.66, 0.66) {};
				\node[inner sep=0pt, minimum size=0cm] (n13) at (1, 0.33) {};
				\draw[ultra thick] (n1) -- (n12) (n1) -- (n13);
				\node[inner sep=0pt, minimum size=0cm] (n21) at (0.33, 0.33) {};
				\node[inner sep=0pt, minimum size=0cm] (n23) at (0.33, -0.33) {};
				\draw[ultra thick] (n2) -- (n21) (n2) -- (n23);
				\node[inner sep=0pt, minimum size=0cm] (n32) at (0.66, -0.66) {};
				\node[inner sep=0pt, minimum size=0cm] (n31) at (1, -0.33) {};
				\draw[ultra thick] (n3) -- (n31) (n3) -- (n32);
				\draw[decoration={brace,mirror,raise=2pt},decorate] (n2) -- node[below left=1pt] {\tiny$\ell$} (n23);
			\end{tikzpicture}
		}
		\caption{Construction for $\delta \in [\frac{2}{3}, \frac{3}{4})$.}
		\label{fig:ugexample1}
	\end{subfigure}
	\hfill
	\begin{subfigure}{0.30\textwidth}
		\centering
		\resizebox{!}{0.1\textheight}{
			\begin{tikzpicture}[scale=0.5,
				node/.style = {shape=circle, draw, inner sep=0pt, minimum size=0.25cm},
				textnode/.style = {shape=circle, draw, inner sep=0pt, minimum size=0.4cm},
				smallnode/.style = {shape=circle, draw, inner sep=0pt, minimum size=0.1cm},
				box/.style = {rectangle, fill=gray!20, rounded corners, fill opacity=1, inner sep=1pt},
				cross/.style={cross out, draw=black, minimum size=0.1cm, inner sep=0pt, outer sep=0pt}]
				\node[smallnode] (n1) at (1, 1) {};
				\node[smallnode] (n2) at (0, 0) {};
				\node[smallnode] (n3) at (1, -1) {};
				\draw (n1) -- (n2) -- (n3) -- (n1);
				\node[smallnode, dash pattern=on 0.6pt off 0.6pt] (e1) at (2.5, 1) {};
				\node[smallnode, dash pattern=on 0.6pt off 0.6pt] (e2) at (1.5, 0) {};
				\node[smallnode, dash pattern=on 0.6pt off 0.6pt] (e3) at (2.5, -1) {};
				\node[smallnode, dash pattern=on 0.6pt off 0.6pt] (e4) at (3, 0) {};
				\draw[dash pattern=on 1pt off 1pt] (n1) -- (e1) -- (e4);
				\draw[dash pattern=on 1pt off 1pt] (n2) -- (e2) -- (e4);
				\draw[dash pattern=on 1pt off 1pt] (n3) -- (e3) -- (e4);
				\node[cross] (p1) at (1.75, 1) {};
				\node[cross] (p2) at (0.75, 0) {};
				\node[cross] (p3) at (1.75, -1) {};
				\node[cross, minimum size=0.15cm] (p4) at (3, 0) {};
				\node[inner sep=0pt, minimum size=0cm] (n12) at (0.75, 0.75) {};
				\node[inner sep=0pt, minimum size=0cm] (n13) at (1, 0.5) {};
				\draw[ultra thick] (n1) -- (n12) (n1) -- (n13);
				\node[inner sep=0pt, minimum size=0cm] (n21) at (0.25, 0.25) {};
				\node[inner sep=0pt, minimum size=0cm] (n23) at (0.25, -0.25) {};
				\draw[ultra thick] (n2) -- (n21) (n2) -- (n23);
				\node[inner sep=0pt, minimum size=0cm] (n32) at (0.75, -0.75) {};
				\node[inner sep=0pt, minimum size=0cm] (n31) at (1, -0.5) {};
				\draw[ultra thick] (n3) -- (n31) (n3) -- (n32);
				\draw[decoration={brace,mirror,raise=2pt},decorate] (n2) -- node[below left=1pt] {\tiny$\ell$} (n23);
			\end{tikzpicture}
		}
		\caption{Construction for $\delta \in [\frac{3}{4}, \frac{5}{6})$.}
		\label{fig:ugexample2}
	\end{subfigure}
	\hfill
	\begin{subfigure}{0.30\textwidth}
		\centering
		\resizebox{!}{0.1\textheight}{
			\begin{tikzpicture}[scale=0.5,
				node/.style = {shape=circle, draw, inner sep=0pt, minimum size=0.25cm},
				textnode/.style = {shape=circle, draw, inner sep=0pt, minimum size=0.4cm},
				smallnode/.style = {shape=circle, draw, inner sep=0pt, minimum size=0.1cm},
				box/.style = {rectangle, fill=gray!20, rounded corners, fill opacity=1, inner sep=1pt},
				cross/.style={cross out, draw=black, minimum size=0.1cm, inner sep=0pt, outer sep=0pt}]
				\node[smallnode] (n1) at (1, 1) {};
				\node[smallnode] (n2) at (0, 0) {};
				\node[smallnode] (n3) at (1, -1) {};
				\draw (n1) -- (n2) -- (n3) -- (n1);
				\node[smallnode, dash pattern=on 0.6pt off 0.6pt] (e1) at (2.5, 1) {};
				\node[smallnode, dash pattern=on 0.6pt off 0.6pt] (e11) at (3, 1.5) {};
				\node[smallnode, dash pattern=on 0.6pt off 0.6pt] (e12) at (3, 0.5) {};
				\node[smallnode, dash pattern=on 0.6pt off 0.6pt] (e2) at (1.5, 0) {};
				\node[smallnode, dash pattern=on 0.6pt off 0.6pt] (e21) at (2, 0.5) {};
				\node[smallnode, dash pattern=on 0.6pt off 0.6pt] (e22) at (2, -0.5) {};
				\node[smallnode, dash pattern=on 0.6pt off 0.6pt] (e3) at (2.5, -1) {};
				\node[smallnode, dash pattern=on 0.6pt off 0.6pt] (e31) at (3, -0.5) {};
				\node[smallnode, dash pattern=on 0.6pt off 0.6pt] (e32) at (3, -1.5) {};
				\draw[dash pattern=on 1pt off 1pt] (n1) -- (e1) -- (e11) -- (e12) -- (e1);
				\draw[dash pattern=on 1pt off 1pt] (n2) -- (e2) -- (e21) -- (e22) -- (e2);
				\draw[dash pattern=on 1pt off 1pt] (n3) -- (e3) -- (e31) -- (e32) -- (e3);
				\node[cross] (p1) at (1.6, 1) {};
				\node[cross, rotate=20] (p11) at (2.9, 1.4) {};
				\node[cross, rotate=340] (p12) at (2.9, 0.6) {};
				\node[cross] (p2) at (0.6, 0) {};
				\node[cross, rotate=20] (p21) at (1.9, 0.4) {};
				\node[cross, rotate=340] (p22) at (1.9, -0.4) {};
				\node[cross] (p3) at (1.6, -1) {};
				\node[cross, rotate=20] (p31) at (2.9, -1.4) {};
				\node[cross, rotate=340] (p32) at (2.9, -0.6) {};
				\node[inner sep=0pt, minimum size=0cm] (n12) at (0.7, 0.7) {};
				\node[inner sep=0pt, minimum size=0cm] (n13) at (1, 0.4) {};
				\draw[ultra thick] (n1) -- (n12) (n1) -- (n13);
				\node[inner sep=0pt, minimum size=0cm] (n21) at (0.3, 0.3) {};
				\node[inner sep=0pt, minimum size=0cm] (n23) at (0.3, -0.3) {};
				\draw[ultra thick] (n2) -- (n21) (n2) -- (n23);
				\node[inner sep=0pt, minimum size=0cm] (n32) at (0.7, -0.7) {};
				\node[inner sep=0pt, minimum size=0cm] (n31) at (1, -0.4) {};
				\draw[ultra thick] (n3) -- (n31) (n3) -- (n32);
				\draw[decoration={brace,mirror,raise=2pt},decorate] (n2) -- node[below left=1pt] {\tiny$\ell$} (n23);
			\end{tikzpicture}
		}
		\caption{Construction for $\delta \in [\frac{7}{10}, \frac{3}{4})$.}
		\label{fig:ugexample3}
	\end{subfigure}
	\caption{Constructions (a), (b) and (c) for \cref{theorem:uglbs} for $x=1$ and input graph $K_3$.
	The dashed edges and vertices are the added gadgets, and the crosses mark the optimal placement of points on 
	them. The 
	thick edge segments are covered by those points. By the choice of $\delta$, for each construction $1 - \delta 
	\leq \ell < \half$. Thus a vertex cover of the original graph covers the remaining edge segments.}
	\label{figure:ugc:constructions}
\end{figure}

\begin{restatable}{theorem}{uglbs}
\label{theorem:uglbs}
	For every $x \in \mathbb N^+$ and $\varepsilon > 0$, it is \UG-hard to approximate \cov{}
	\begin{description}
		\item[(a)] within $1 + \frac{1}{2x+1}$, for any $\delta \in [\frac{x+1}{2x+1}, \frac{2x+1}{4x})$.
		\item[(b)] within $1 + \frac{1}{2x+1+\varepsilon}$, for any $\delta \in [\frac{x+2}{2x+2}, 
		\frac{2x+3}{4x+2})$.
		\item[(c)] within $1 + \frac{1}{2x+5}$, for any $\delta \in [\frac{2x+5}{4x+6}, \frac{x+2}{2x+2})$.
		\item[(d)] within $\frac{6}{5}$, for any $\delta \in [\frac{4}{5}, 1)$.
	\end{description}
\end{restatable}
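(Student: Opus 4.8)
The plan is to give, for each of the four items, a gap-preserving reduction from \vc, using the UG-hard instances recalled above: a graph $G$ on $n$ vertices for which it is UG-hard to decide whether $\text{vc}(G)\le(\tfrac12+\varepsilon_1)n$ or $\text{vc}(G)\ge(1-\varepsilon_2)n$. From such a $G$ we build $H=f(G)$ by attaching to every vertex $u\in V(G)$ a copy of a pendant \emph{gadget} $\Gamma$ depending only on $\delta$ and $x$, identifying one gadget vertex with $u$; the four items use the four gadget families of \Cref{figure:ugc:constructions} and \Cref{fig:ug-bounds:high} — a pendant path of length $x$ for (a); a pendant path with an extra vertex for (b); a pendant edge capped by a triangle for (c); and, for (d), a pendant path of length $3$ when $\delta<\tfrac78$ and a pendant edge capped by a triangle when $\delta\ge\tfrac78$. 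Writing $\ell:=1-\delta$ (so $0<\ell<\tfrac12$, since $\delta\in(\tfrac12,1)$), each gadget is built so that \textbf{(G1)} every $\delta$-cover of $H$ places at least $c$ points strictly inside each gadget copy, where $c=c(x)$ equals $x$ for (a), $x+2$ for (c), $2$ for (d), and equals $x$ up to an additive slack $\tfrac\varepsilon2$ for (b); and \textbf{(G2)} one can place exactly $c$ points strictly inside the gadget that cover all of the gadget together with the closed ball $\balll{u}{\ell}$ of $P(G)$, while the internal gadget points at two adjacent vertices of $G$ can never together cover a whole $G$-edge.

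Given (G1) and (G2), the reduction is short. For the \emph{upper bound}, take a minimum vertex cover $D$ of $G$, place its vertices as points in $P(H)$, and add in each gadget copy the $c$ points of (G2); since $\delta=1-\ell$, a point at an endpoint of any $G$-edge $\{u,v\}$ covers all of that edge outside $\balll{v}{\ell}$, which the gadget at $v$ supplies, so this is a $\delta$-cover of size $cn+|D|$. For the \emph{lower bound}, let $S$ be any $\delta$-cover of $H$, taken neat by \Cref{lemma:n:1}. By (G1) at least $cn$ points of $S$ lie strictly inside gadgets, and the rest lie in $P(G)$; mapping each point of $S\cap P(G)$ to one endpoint of a $G$-edge it lies on yields a set $D'\subseteq V(G)$ with $|D'|\le|S|-cn$, and $D'$ is a vertex cover of $G$: for every $G$-edge $\{u,v\}$, the relatively interior segment of length $1-2\ell>0$ must be covered, which by (G2) cannot be done by gadget points alone, so $S$ has either a point on $P(G[\{u,v\}])$ (mapped into $\{u,v\}$) or a point within distance $<\tfrac12$ of $u$ or $v$ lying on another incident $G$-edge (again mapped into $\{u,v\}$). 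Hence $\covn{H}=cn+\text{vc}(G)$, up to an additive $\mathcal{O}(\varepsilon n)$ term in case (b). Plugging in the \vc{} gap gives $\covn{H}\le(c+\tfrac12+\varepsilon_1)n$ in the yes-case and $\covn{H}\ge(c+1-\varepsilon_2)n$ in the no-case, so \cov{} is UG-hard to approximate within $\frac{c+1-\varepsilon_2}{c+\tfrac12+\varepsilon_1}$, which as $\varepsilon_1,\varepsilon_2\to0$ tends to $\frac{2c+2}{2c+1}=1+\frac1{2c+1}$; substituting the four values of $c$ gives (a)--(d).

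The real work — and the main obstacle — is the gadget design together with the verification of (G1) and (G2). For a pendant path of length $m$ one checks that using exactly $c$ points forces the innermost point at least $r_{\min}:=m-(2c-1)\delta$ from $u$, so the gadget can reach exactly $\balll{u}{1-\delta}$ precisely when $m\le(2x{+}1)\delta-1$ (so $r_{\min}\le 2\delta-1$), and two such gadgets cannot jointly cover a $G$-edge precisely when $2r_{\min}>2\delta-1$, i.e.\ $m>2c\delta-\tfrac12$; thus $m$ must be an integer in the window $(2c\delta-\tfrac12,\,(2c+1)\delta-1]$, whose length $\delta-\tfrac12<\tfrac12$ because $\delta<1$. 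Item (a) is exactly the $\delta$-range in which $m=x$ lands in this window (from $x\le(2x+1)\delta-1$ one gets $\delta\ge\frac{x+1}{2x+1}$, from $2x\delta-\tfrac12<x$ one gets $\delta<\frac{2x+1}{4x}$) and in which $x-1$ points genuinely fail to cover the path. For $\delta$ in the gaps between these windows — near $\tfrac12$ for items (b), (c) and near $1$ for item (d) — no pendant path has the right parameters, which is why one switches to gadgets with a triangle or an extra vertex that give finer control over the radius reached into $G$, and why item (b), where even those only work in a limiting sense, pays the additive $\varepsilon$. Establishing (G1) for these variants — ruling out $\delta$-covers that are cheaper inside a gadget by borrowing points on $G$-edges or by sharing points across the two branches of a triangle or path gadget — is the technically most delicate step, handled by a careful exchange argument layered on neatness that tracks, for each gadget copy, the maximal distance into $G$ its internal points can reach.
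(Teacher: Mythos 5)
Your overall scheme --- attach to every vertex a gadget that any $\delta$-cover must pay $c$ points for, where those $c$ points can cover the gadget together with a ball of radius in $[1-\delta,\half)$ around the attachment vertex but two such gadgets can never jointly cover a $G$-edge, and then read off the ratio $1+\frac{1}{2c+1}$ from the UG vertex-cover gap --- is exactly the paper's, and your gadgets and point counts for (a), (c) and (d) coincide with the paper's (for (c) the gadget is a pendant path of length $x$ capped by a triangle, not a single pendant edge, but your count $c=x+2$ indicates you mean the general version). Your window computation for pendant paths is also essentially the calculation the paper performs in part (a).

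The genuine gap is part (b). A \emph{per-vertex} ``pendant path with an extra vertex'', i.e.\ a pendant path of length $x+1$ hanging off each $u$, provably fails for $\delta\in[\frac{x+2}{2x+2},\frac{2x+3}{4x+2})$. To cover the leaf, some branch point must lie at distance $\geq x+1-\delta$ from $u$, and $x$ points on the branch then cover at best the positions $[x+1-2x\delta,\,x+1]$; since $\delta<\frac{2x+3}{4x+2}$ gives $x+1-2x\delta>\frac{1}{2x+1}>0$, the bottom of every branch is uncovered and must be rescued either by an $(x{+}1)$-st branch point or by a point of $P(G)$ within distance $(2x+1)\delta-(x+1)<\half$ of $u$. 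These rescue regions are pairwise disjoint across the vertices of $G$, so every $\delta$-cover has size at least $(x+1)n$; conversely, $V(G)$ together with $x$ points per branch is a $\delta$-cover of size $(x+1)n$. Hence the optimum equals $(x+1)n$ independently of $\text{vc}(G)$, and the reduction carries no information. The paper's construction for (b) instead makes $u^*$ a \emph{single shared apex} adjacent to the far end $u_x$ of every branch: one point at the apex covers all branch tops simultaneously, which lets the $x$ points of each branch slide toward $u$ and reach $\balll{u}{(2x+1)\delta-(x+1)}$, whose radius lies in $[1-\delta,\half)$, restoring the vertex-cover structure. The lone apex point is the actual source of the $\varepsilon$ in $2x+1+\varepsilon$ --- a global $+1$, not a per-gadget slack as your condition (G1) suggests. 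You correctly observe that no plain pendant path has the right parameters in (b)'s range, but the fix you propose does not close that gap; the shared-apex idea is the missing ingredient.
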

\begin{proof}[Proof of (a)]
	We give an L-reduction from \textsc{Vertex Cover} on graphs that have a vertex cover of size at least $\half|V(G)|$, i.e., the graphs resulting from the construction of Khot et al.~\cite{khot2008vertex}:
	Given a \textsc{Vertex Cover} instance $G$, let $f(G)$ be the graph where for every vertex $u \in V(G)$ the new vertices 
	$u_1, \dots, u_x$ form a path and $u_1$ is connected to $u$.
	For our definition of $g$, let $\preceq$ be a total ordering of $V(G)$.
	Further, given a $\delta$-cover $S$ of the graph $f(G)$, let 
	\begin{align*}
		g(G, S) &= \{u \in V(G) ~|~ p = p(u, v, \lambda) \in S \text{ 
			with } \lambda < \half \text{ or } \lambda = \half, u \prec v\} \\
		&\phantom{{}={}} \cup \{u \in V(G) ~|~ x < |P(f(G)[\{u,u_1,\dots,u_x\}]) \cap S|\}.
	\end{align*}
	
	By the choice of $\delta$, an optimal solution on $f(G)$ places $x$ points on each path attached to a vertex, 
	w.l.o.g., in distance $2\delta$ of each other.
	Thus, for every vertex $u \in V(G)$,
	the points in the closed ball $\balll{v}{2x\cdot\delta - x}$ are already covered by a 
	point on the path attached to $u$.
	For $\delta \geq \frac{x+1}{2x+1}$, the radius of that ball satisfies $2x\cdot\delta - x \geq 
	2x\cdot\frac{x+1}{2x+1} - x = \frac{2x(x+1) - (2x+1)x}{2x+1} = \frac{x}{2x+1} \geq 1 - \delta$ and for $\delta < 
	\frac{2x+1}{4x}$, it satisfies $2x\cdot\delta - x < 2x\cdot\frac{2x+1}{4x} - x = \half$.
	Thus the midpoint of each edge of $G$ remains uncovered by the points on the added paths, and also a vertex 
	cover of $G$ covers all remaining uncovered points of $f(G)$.
	Since $\OPT_{\textsc{Vertex Cover}}(G) \geq \half|V(G)|$, it follows that $\OPT_{\cov}(f(G)) \leq (2x+1) \cdot \OPT_{\textsc{Vertex Cover}}(G)$.
	
	Further, it is easy to see, that the absolute error of a solution is not increased by $g$:
	Each vertex in $g(G, S)$ corresponds to a point in $S$.
	By the argument above the optimal solution of $f(G)$ has $x \cdot |V(G)| + \text{vc}(G)$ points,
	and there are at least $x \cdot |V(G)|$ points in $S$ that do not get translated to a vertex in $g(G, S)$.
	
	Thus $(f, g)$ is an L-reduction with $\alpha = 2x+1$ and $\beta = 1$.
\end{proof}
\begin{proof}[Proof of (b)]
	Again we give an L-reduction $(f, g)$ from vertex cover on graphs that have a vertex cover of size at least 
	$\half|V(G)|$.
	Let $f$ be defined as in the proof for part (a), except the last vertex $u_x$ of each path is connected to a new vertex 
	$u^*$.
	The function $g$ also needs to be slightly modified to account for $u^*$: 
	\begin{align*}
		g(G, S) &= \{u \in V(G) ~|~ p = p(u, v, \lambda) \in S \text{ with } \lambda < \half \text{ or } \lambda = \half \text{ and } u \prec v\} \\
		&\phantom{{}={}} \cup \{u \in V(G) ~|~ x < |P(f(G)[\{u,u_1,\dots,u_x,u^*\}]) \cap S \setminus \ball{u^*}{\delta}|\}.
	\end{align*}
	Further, if $g(G, S)$ would output $V(G)$, it outputs $V(G) \setminus \{v\}$ for an arbitrary $v \in V(G)$ instead, to make sure the absolute error does not increase when dealing with certain large solutions $S$ for \cov.
	Note that $V(G) \setminus \{v\}$ is a vertex cover for every graph $G$.
	
	By an analogous argument to the one of part (a), $\OPT_{\cov}(f(G)) \leq (2x+1+\varepsilon) \cdot \OPT_{\vc}(G)$ for any 
	$\varepsilon > 0$.
	
	If a solution $S$ of $f(G)$ places a point in $\ball{u^*}{\delta}$, the absolute error of $g(G, S)$ is not larger 
	than that of $S$ by the same arguments as in the proof of part (a).
	\begin{restatable}{claim}{pointOnApexVertex}
		\label{claim:pointOnApexVertex}
		If a solution $S$ of $f(G)$ places no point in $\ball{u^*}{\delta}$, $|g(G, S)| = |V(G)| - 1$.
	\end{restatable}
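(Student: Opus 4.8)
The plan is to show that, before the final correction in the definition of $g$, the set $g(G,S)$ already equals all of $V(G)$; the definition then makes $g$ output $V(G)\setminus\{v\}$ for some $v\in V(G)$, which has size $|V(G)|-1$. So suppose for contradiction that some $u\in V(G)$ lies in neither of the two defining sets of $g$. Let $Q_u$ be the pendant path joining $u$ to $u^*$ through $u_1,\dots,u_x$; it has $x+1$ unit edges, and for $t\in[0,x+1]$ write $q_t$ for the point of $Q_u$ at distance $t$ from $u^*$, so $q_0=u^*$ and $q_{x+1}=u$. Because $u$ misses the second defining set and $S\cap\ball{u^*}{\delta}=\emptyset$ (so the term $\setminus\ball{u^*}{\delta}$ is vacuous), at most $x$ points of $S$ lie on $Q_u$. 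Because $u$ misses the first defining set, every point of $S$ on an edge of $f(G)$ incident to $u$ has distance at least $\half$ from $u$ (up to a single measure-zero exception per edge, which the length estimate below ignores).

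The core of the argument is to pin down which $q_t$ can be covered by a point $p\in S$ lying off $Q_u$. Since the internal vertices $u_1,\dots,u_x$ of $Q_u$ have degree $2$ in $f(G)$, a geodesic from $p$ to $q_t$ must enter $Q_u$ at $u^*$ or at $u$, so $d(p,q_t)=\min\{d(p,u^*)+t,\ d(p,u)+(x+1-t)\}$. The first term is at least $\delta+t$ by the hypothesis $d(p,u^*)\ge\delta$, hence at most $\delta$ only when $t=0$. The second term being at most $\delta<1$ forces $d(p,u)<1$, so $p$ lies on an edge incident to $u$, which is a $G$-edge as $p\notin Q_u$; since $u$ misses the first defining set, $d(p,u)\ge\half$, and then $d(p,u)+(x+1-t)\le\delta$ forces $t\ge x+\tfrac{3}{2}-\delta$. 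Therefore no point of $S$ off $Q_u$ covers $q_t$ when $0<t<x+\tfrac{3}{2}-\delta$, so each such $q_t$ is covered by one of the at most $x$ points of $S$ on $Q_u$, each of which covers a subinterval of $Q_u$ of length at most $2\delta$; jointly they cover a subset of $Q_u$ of total length at most $2x\delta$. But $\delta<\frac{2x+3}{4x+2}$ is equivalent to $(2x+1)\delta<x+\tfrac{3}{2}$, i.e.\ $2x\delta<x+\tfrac{3}{2}-\delta$, while $\delta\ge\half$ ensures $x+\tfrac{3}{2}-\delta\le x+1$ so that the arc $\{q_t:0<t<x+\tfrac{3}{2}-\delta\}$ genuinely lies within $Q_u$. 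Its length exceeds $2x\delta$, so it cannot be fully covered, contradicting that $S$ is a $\delta$-cover of $f(G)$. Hence $g(G,S)$ contains every vertex of $G$ before the correction, and the claim follows.

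I expect the step requiring the most care to be the coverage dichotomy of the second paragraph, and in particular excluding coverage of the part of $Q_u$ adjacent to $u^*$: this is exactly where the hypothesis $S\cap\ball{u^*}{\delta}=\emptyset$ is used, since otherwise a point of $S$ placed on a neighbouring pendant path just inside $\ball{u^*}{\delta}$ could cover an initial segment of $Q_u$ and break the length count. One should also check the routine fact that the distance between two points of $Q_u$ is attained along $Q_u$ itself (so a point of $S$ on $Q_u$ covers an interval there of length at most $2\delta$), which holds because $Q_u$ can only be exited and re-entered through $u$ or $u^*$ and such detours are never shortcuts.
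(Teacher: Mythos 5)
Your proof is correct and follows essentially the same route as the paper's: both arguments show that if $u$ were omitted, the at most $x$ points of $S$ on the pendant path (none within $\delta$ of $u^*$) together with the far-away points off the path cannot cover the whole path, using the inequality $2x\delta < x+\tfrac32-\delta$ coming from $\delta<\tfrac{2x+3}{4x+2}$. The only difference is presentational: you replace the paper's ``w.l.o.g.\ the points are spaced $2\delta$ apart'' normalization by a total-length counting argument over the arc $\{q_t: 0<t<x+\tfrac32-\delta\}$, which is a slightly cleaner way to reach the same contradiction.
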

	\begin{claimproof}
		If there is no point in $\ball{u^*}{\delta}$, $p = p(u^*, u_x, \delta) \in S$ for every $u \in V(G)$.
		W.l.o.g., there are $x-1$ more points with distance $2\delta$ of each other on the path $u, \dots, u_x, u^*$.
		Since the path $u, \dots, u_x, u^*$ has length $x+1$ and $x+1-2\delta x < x + 1 - 2\frac{2x+3}{4x+2}x = 
		\frac{1}{2x+1}$, the point $p = p(u, u_1, \frac{1}{2x+1})$ is not covered by the $x$ points already on the path 
		$u, \dots, u_x, u^*$.
		As $(\balll{p}{\delta} \cap P(G)) \subseteq (\ball{u}{\half} \cap P(G))$, $S$ places at least $x+1$ points in 
		$\ball{u}{\half} \cup P(f(G)[u,\dots,u_x,u^*]) \setminus \ball{u^*}{\delta}$ for every $u \in V(G)$.
		Thus, $g(G, S)$ outputs $V(G) \setminus \{v\}$ for an arbitrary $v \in V(G)$ and the claim follows.
	\end{claimproof}
	Therefore in this case, the absolute error of $g(G, S)$ is not larger than that of $S$, again by the same argument 
	as in the proof of part (a).
	Thus $(f, g)$ is an L-reduction with $\alpha = 2x+1+\varepsilon$ for any $\varepsilon > 0$ and $\beta = 1$.
\end{proof}
\begin{proof}[Proof of (c)]
	We use the same reduction as for the proof of part (a), but to each vertex $u_x$ we attach two additional vertices 
	$u'_x$ and $u''_x$ that form a triangle with $u_x$.
	Then by analogous arguments to those in the proof of part (a), we get an L-reduction with $\alpha = 2x+5$ and $\beta = 
	1$.
\end{proof}

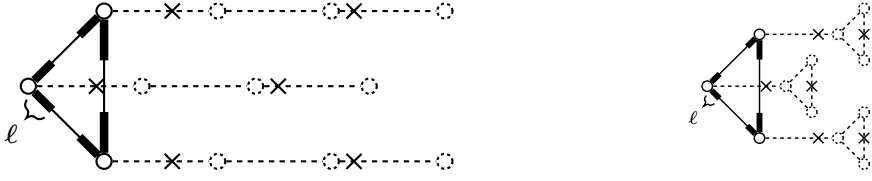
\begin{figure}[t]
	\begin{subfigure}{0.48\textwidth}
		\centering
		\resizebox{!}{0.1\textheight}{
			\begin{tikzpicture}[scale=0.5,
				node/.style = {shape=circle, draw, inner sep=0pt, minimum size=0.25cm},
				textnode/.style = {shape=circle, draw, inner sep=0pt, minimum size=0.4cm},
				smallnode/.style = {shape=circle, draw, inner sep=0pt, minimum size=0.1cm},
				box/.style = {rectangle, fill=gray!20, rounded corners, fill opacity=1, inner sep=1pt},
				cross/.style={cross out, draw=black, minimum size=0.1cm, inner sep=0pt, outer sep=0pt}]
				\node[smallnode] (n1) at (1, 1) {};
				\node[smallnode] (n2) at (0, 0) {};
				\node[smallnode] (n3) at (1, -1) {};
				\draw (n1) -- (n2) -- (n3) -- (n1);
				\node[smallnode, dash pattern=on 0.6pt off 0.6pt] (e1) at (2.5, 1) {};
				\node[smallnode, dash pattern=on 0.6pt off 0.6pt] (e11) at (4, 1) {};
				\node[smallnode, dash pattern=on 0.6pt off 0.6pt] (e12) at (5.5, 1) {};
				\node[smallnode, dash pattern=on 0.6pt off 0.6pt] (e2) at (1.5, 0) {};
				\node[smallnode, dash pattern=on 0.6pt off 0.6pt] (e21) at (3, 0) {};
				\node[smallnode, dash pattern=on 0.6pt off 0.6pt] (e22) at (4.5, 0) {};
				\node[smallnode, dash pattern=on 0.6pt off 0.6pt] (e3) at (2.5, -1) {};
				\node[smallnode, dash pattern=on 0.6pt off 0.6pt] (e31) at (4, -1) {};
				\node[smallnode, dash pattern=on 0.6pt off 0.6pt] (e32) at (5.5, -1) {};
				\draw[dash pattern=on 1pt off 1pt] (n1) -- (e1) -- (e11) -- (e12);
				\draw[dash pattern=on 1pt off 1pt] (n2) -- (e2) -- (e21) -- (e22);
				\draw[dash pattern=on 1pt off 1pt] (n3) -- (e3) -- (e31) -- (e32);
				\node[cross] (p1) at (1.9, 1) {};
				\node[cross] (p11) at (4.3, 1) {};
				\node[cross] (p2) at (0.9, 0) {};
				\node[cross] (p2) at (3.3, 0) {};
				\node[cross] (p3) at (1.9, -1) {};
				\node[cross] (p31) at (4.3, -1) {};
				\node[inner sep=0pt, minimum size=0cm] (n12) at (0.66, 0.66) {};
				\node[inner sep=0pt, minimum size=0cm] (n13) at (1, 0.33) {};
				\draw[ultra thick] (n1) -- (n12) (n1) -- (n13);
				\node[inner sep=0pt, minimum size=0cm] (n21) at (0.33, 0.33) {};
				\node[inner sep=0pt, minimum size=0cm] (n23) at (0.33, -0.33) {};
				\draw[ultra thick] (n2) -- (n21) (n2) -- (n23);
				\node[inner sep=0pt, minimum size=0cm] (n32) at (0.66, -0.66) {};
				\node[inner sep=0pt, minimum size=0cm] (n31) at (1, -0.33) {};
				\draw[ultra thick] (n3) -- (n31) (n3) -- (n32);
				\draw[decoration={brace,mirror,raise=2pt},decorate] (n2) -- node[below left=1pt] {\tiny$\ell$} (n23);
			\end{tikzpicture}
		}
		\caption{Construction for \Cref{theorem:uglbs}(d) for $\delta \in [\frac{4}{5}, \frac{7}{8})$.}
		\label{fig:ugexample4a}
	\end{subfigure}
	\hfill
	\begin{subfigure}{0.48\textwidth}
		\centering
		\resizebox{!}{0.1\textheight}{
			\begin{tikzpicture}[scale=0.5,
				node/.style = {shape=circle, draw, inner sep=0pt, minimum size=0.25cm},
				textnode/.style = {shape=circle, draw, inner sep=0pt, minimum size=0.4cm},
				smallnode/.style = {shape=circle, draw, inner sep=0pt, minimum size=0.1cm},
				box/.style = {rectangle, fill=gray!20, rounded corners, fill opacity=1, inner sep=1pt},
				cross/.style={cross out, draw=black, minimum size=0.1cm, inner sep=0pt, outer sep=0pt}]
				\node[smallnode] (n1) at (1, 1) {};
				\node[smallnode] (n2) at (0, 0) {};
				\node[smallnode] (n3) at (1, -1) {};
				\draw (n1) -- (n2) -- (n3) -- (n1);
				\node[smallnode, dash pattern=on 0.6pt off 0.6pt] (e1) at (2.5, 1) {};
				\node[smallnode, dash pattern=on 0.6pt off 0.6pt] (e11) at (3, 1.5) {};
				\node[smallnode, dash pattern=on 0.6pt off 0.6pt] (e12) at (3, 0.5) {};
				\node[smallnode, dash pattern=on 0.6pt off 0.6pt] (e2) at (1.5, 0) {};
				\node[smallnode, dash pattern=on 0.6pt off 0.6pt] (e21) at (2, 0.5) {};
				\node[smallnode, dash pattern=on 0.6pt off 0.6pt] (e22) at (2, -0.5) {};
				\node[smallnode, dash pattern=on 0.6pt off 0.6pt] (e3) at (2.5, -1) {};
				\node[smallnode, dash pattern=on 0.6pt off 0.6pt] (e31) at (3, -0.5) {};
				\node[smallnode, dash pattern=on 0.6pt off 0.6pt] (e32) at (3, -1.5) {};
				\draw[dash pattern=on 1pt off 1pt] (n1) -- (e1) -- (e11) -- (e12) -- (e1);
				\draw[dash pattern=on 1pt off 1pt] (n2) -- (e2) -- (e21) -- (e22) -- (e2);
				\draw[dash pattern=on 1pt off 1pt] (n3) -- (e3) -- (e31) -- (e32) -- (e3);
				\node[cross] (p1) at (2.125, 1) {};
				\node[cross] (p11) at (3, 1) {};
				\node[cross] (p2) at (1.125, 0) {};
				\node[cross] (p21) at (2, 0) {};
				\node[cross] (p3) at (2.125, -1) {};
				\node[cross] (p31) at (3, -1) {};
				\node[inner sep=0pt, minimum size=0cm] (n12) at (0.75, 0.75) {};
				\node[inner sep=0pt, minimum size=0cm] (n13) at (1, 0.5) {};
				\draw[ultra thick] (n1) -- (n12) (n1) -- (n13);
				\node[inner sep=0pt, minimum size=0cm] (n21) at (0.25, 0.25) {};
				\node[inner sep=0pt, minimum size=0cm] (n23) at (0.25, -0.25) {};
				\draw[ultra thick] (n2) -- (n21) (n2) -- (n23);
				\node[inner sep=0pt, minimum size=0cm] (n32) at (0.75, -0.75) {};
				\node[inner sep=0pt, minimum size=0cm] (n31) at (1, -0.5) {};
				\draw[ultra thick] (n3) -- (n31) (n3) -- (n32);
				\draw[decoration={brace,mirror,raise=2pt},decorate] (n2) -- node[below left=1pt] {\tiny$\ell$} (n23);
			\end{tikzpicture}
		}
		\caption{Construction for \Cref{theorem:uglbs}(d) for $\delta \in [\frac{7}{8}, 1)$.}
		\label{fig:ugexample4b}
	\end{subfigure}
	\caption{The dashed edges and vertices are the added gadgets, and the crosses mark the optimal placement of points on them. The thick edge segments are covered by those points. By the choice of $\delta$, for both constructions $1 - \delta \leq \ell < \half$. Thus a vertex cover on the original graph covers the remaining edge segments, and the shown points don't cover the entire graph on their own.}
	\label{fig:ugexample4}
\end{figure}

\begin{proof}[Proof of (d)]
	For $\delta \in [\frac{4}{5}, \frac{7}{8})$, we reuse the construction in the proof of part (a) with $x=3$.
	Then each of the added paths must contain at least two points.
	With analogous arguments to the proof of part (a), we obtain that this is an L-reduction with $\alpha = 5$ and $\beta = 
	1$.
	
	For $\delta \in [\frac{7}{8}, 1)$, we reuse the construction of the proof of part (c) with $x=1$.
	Then each of the added paths with a triangle must contain at least two points.
	Again with analogous arguments to the proof of part (a), we obtain that this is an L-reduction with $\alpha = 5$ and 
	$\beta = 1$.
\end{proof}

We note that every lower bound for a $\delta \in (\frac{1}{2}, \frac{3}{4})$
	also holds for $2\delta \in (1, \frac{3}{2})$,
	by applying Subdivision Reduction (see \Cref{def:SubRed}) for $x = 2$.


\section{Approximation Algorithm for $1 <\delta < \frac{3}{2}$}
\label{section:approx:large}

This section derives the approximation algorithms for $\delta> 1$.
We do so by constructively bounding the size of minimum $\delta$-cover $S_\delta$
	in the size of a $1$-cover by some factor $\alpha$.
Then a polynomial time computable $1$-cover constitutes an $\alpha$-approximation.
For $\delta < {3}/{2}$,
	we construct a $1$-cover of size at most twice that of a minimum $\delta$-cover,
	while for $\delta < {5}/{4}$ we reduce this factor to $5/3$
	and for $\delta < 7/6$ we further reduce this factor to $3/2$.

\renewcommand{\putmarkx}[1]{
	\pgfmathsetmacro{\xval}{(\x-0.5)*20}
	\draw (\xval,0) -- (\xval,-0.15);
	\node (putmarkx) at (\xval, -0.5) {$#1$};
}
\renewcommand{\putmarky}[1]{
	\pgfmathsetmacro{\yval}{(\y-1)*5}
	\draw (-0.2, \yval) -- (0, \yval);
	\node (putmarky) at (-0.5+0.1*\c, \yval) {$#1$};
}

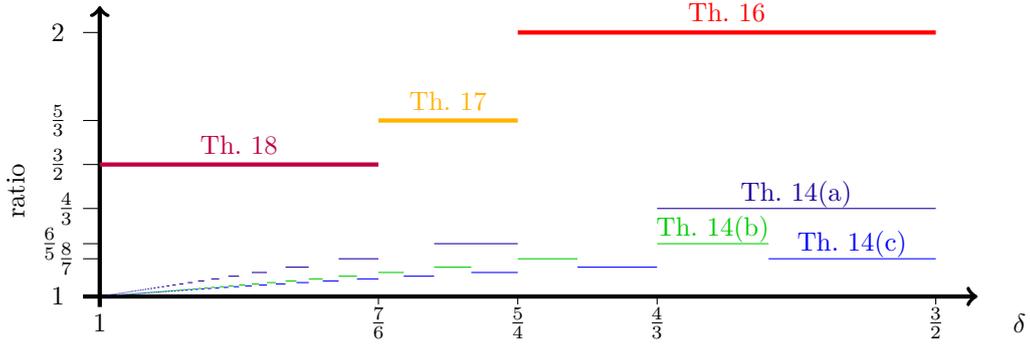
\begin{figure}[t]
	\centering
	\begin{tikzpicture}[yscale=0.7,xscale=1.1]
		\def\colora{dkblue}
		\def\colorb{lgreen}
		\def\colorc{blue}
		\def\colord{cyan}
		\def\colore{purple}
		\def\colorf{lorange}
		\def\colorg{red}
		\draw[->, ultra thick] (-0.2, 0) -- (10.5, 0);
		\node at (11, -0.5) {$\delta$};
		\draw[->, ultra thick] (0, -0.2) -- (0, 5.5);
		\node[rotate=90] at (-1, 2) {ratio};
		
		\def\x{0.5}\putmarkx{1}
		\foreach \a/\b in {7/6,5/4,4/3,3/2}{ 
			\def\x{(\a/\b)-0.5}\putmarkx{\frac{\a}{\b}}
		}
		
		\def\c{0}\def\y{2}\putmarky{2}
		\foreach \a/\b/\c in {8/7/1, 6/5/-1, 4/3/1, 3/2/0, 5/3/0} {
			\def\y{\a/\b}\putmarky{\frac{\a}{\b}}
		}
		\def\y{1}\putmarky{1}
		
		\foreach \x in {1, ..., 100} {
			\pgfmathsetmacro{\from}{((\x+1)/(2*\x+1)-0.5)*40}
			\pgfmathsetmacro{\to}{((2*\x+1)/(4*\x)-0.5)*40}
			\pgfmathsetmacro{\val}{((2*\x+2)/(2*\x+1)-1)*5}
			\ifthenelse{\x=1}{
				\draw[-, color=\colora] (\from, \val) -- node[above, yshift=-3] {Th.~\ref{theorem:uglbs}(a)} (\to, \val);
			}{
				\draw[-, color=\colora] (\from, \val) -- (\to, \val);
			}
			
			\pgfmathsetmacro{\from}{((\x+2)/(2*\x+2)-0.5)*40}
			\pgfmathsetmacro{\to}{((2*\x+3)/(4*\x+2)-0.5)*40}
			\pgfmathsetmacro{\val}{((2*\x+2)/(2*\x+1)-1)*5}
			\ifthenelse{\x=1}{
			}{\ifthenelse{\x=2}{
					\draw[-, color=\colorb] (\from, \val) -- node[above, yshift=-3] {Th.~\ref{theorem:uglbs}(b)} (\to, 
					\val);
				}{
					\draw[-, color=\colorb] (\from, \val) -- (\to, \val);
				}
			}
			
			\pgfmathsetmacro{\from}{((2*\x+5)/(4*\x+6)-0.5)*40}
			\pgfmathsetmacro{\to}{((\x+2)/(2*\x+2)-0.5)*40}
			\pgfmathsetmacro{\val}{((2*\x+6)/(2*\x+5)-1)*5}
			\ifthenelse{\x=1}{
				\draw[-, color=\colorc] (\from, \val) -- node[above, yshift=-3] {Th.~\ref{theorem:uglbs}(c)} (\to, \val);
			}{
				\draw[-, color=\colorc] (\from, \val) -- (\to, \val);
			}
 		}
 		\pgfmathsetmacro{\x}{1}
		\pgfmathsetmacro{\from}{((1/2)-0.5)*20}
		\pgfmathsetmacro{\to}{((4/6)-0.5)*20}
		\draw[-, color=\colore, ultra thick] (\from, 2.5) -- node[above] {Th.~\ref{lemma:approxsmaller76}} (\to, 2.5);

 		\pgfmathsetmacro{\x}{1}
		\pgfmathsetmacro{\from}{((4/6)-0.5)*20}
		\pgfmathsetmacro{\to}{((3/4)-0.5)*20}
		\pgfmathsetmacro{\val}{((5/3)-1)*5}
		\draw[-, color=\colorf, ultra thick] (\from, \val) -- node[above] {Th.~\ref{lemma:approx76to54}} (\to, 
		\val);
		
 		\pgfmathsetmacro{\x}{1}
		\pgfmathsetmacro{\from}{((\x+2)/(2*\x+2)-0.5)*20}
		\pgfmathsetmacro{\to}{((\x+1)/(2*\x)-0.5)*20}
		\pgfmathsetmacro{\val}{((\x+1)/(\x)-1)*5}
		\draw[-, color=\colorg, ultra thick] (\from, \val)
			-- node[above] {Th.~\ref{lemma:approx54to32}} (\to, \val);
	\end{tikzpicture}
	\caption{Upper bounds (as bold lines) and lower bounds under UGC (as thin lines)
	on the approximation ratio of \cov\ plotted for $\delta \in (1,3/2)$.
	The drawn intervals are half-open with the upper end excluded.
	Applying the Subdivision Reduction to \cref{theorem:uglbs} yields the lower bounds.}
	\label{fig:ug-bounds:high}
\end{figure}

For $\delta \geq {9}/{8}$, we show that these approximation ratios
	are best possible with this approach,
	as we give explicit examples where a $1$-cover and a $\delta$-cover differ asymptotically by these 
	factors. 

\begin{restatable}{theorem}{approxFiveQuarterToThreeHalves}
	\label{lemma:approx54to32}
	For every $\delta < \frac{3}{2}$ and graph $G$,
	$\covn[1]{G} \leq 2 \cdot \covn{G}$.
	Further, for $\delta \geq \frac{5}{4}$, this bound is asymptotically tight.
\end{restatable}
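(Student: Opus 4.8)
For the inequality $\covn[1]{G}\le 2\covn{G}$ the plan is a one-line rounding argument. Fix a minimum $\delta$-cover $S$ of $G$ (by \Cref{lemma:n:1} we may take it neat, although this is not needed), and build $D\subseteq V(G)$ by adding, for each $p\in S$, the endpoint(s) of the edge of $G$ on which $p$ lies (if $p$ is a vertex, just add $p$). Then $|D|\le 2|S|$, and the claim is that $D$ is a $1$-cover. First note that a set of vertices $1$-covers $G$ exactly when it is a vertex cover: an endpoint of an edge is within distance $1$ of every point of that edge, whereas the midpoint of any edge $\{a,b\}$ has distance at least $1+\tfrac12=\tfrac32>1$ from every vertex other than $a,b$. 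So it suffices to check that $D$ is a vertex cover.

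To that end, fix an edge $\{a,b\}$ with midpoint $m$ and pick $q\in S$ with $\dist{q,m}\le\delta$. If $q$ lay on an edge $\{x,y\}$ disjoint from $\{a,b\}$, every $q$–$m$ path would first reach $x$ or $y$ and then pass through $a$ or $b$, forcing $\dist{q,m}\ge 1+\tfrac12=\tfrac32>\delta$, a contradiction; the same bound rules out $q$ being such a vertex. Hence $q$ lies on an edge incident to $a$ or to $b$ (or equals $a$ or $b$), so $D$ contains $a$ or $b$. Thus $D$ is a vertex cover and $\covn[1]{G}\le|D|\le 2|S|=2\covn{G}$. (For $\delta\le 1$ this is anyway immediate, since a $\delta$-cover is a $1$-cover; the argument above is uniform for all $\delta<\tfrac32$, and note it equally shows $\mathrm{vc}(G)\le 2\covn{G}$, which we reuse below.)

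For asymptotic tightness when $\delta\ge\tfrac54$, the plan is to use the ``friendship graph of pentagons'' $G_m$: a common centre $c$ together with $m$ internally disjoint $5$-cycles through $c$, the $i$-th one being $c\,a_i\,b_i\,d_i\,e_i\,c$, so $|V(G_m)|=4m+1$. One shows $G_m$ is factor-critical (deleting $c$ leaves $m$ disjoint paths on four vertices, each with a perfect matching, and deleting any non-centre vertex again leaves a graph with a perfect matching), so by \eqref{equation:1:cover:size} we get $\covn[1]{G_m}=\nu(G_m)+1=2m+1$; even the weaker bound $\covn[1]{G_m}\ge 2m$ (the $m$ vertex-disjoint paths $a_ib_id_ie_i$ each need two points and cannot share) already suffices. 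For an upper bound on $\covn{G_m}$, place one point at $c$ — covering the two $c$-incident edges of each pentagon entirely and reaching distance $\delta-1$ past $a_i$ and past $e_i$ — and one point per pentagon at the midpoint of the still-uncovered sub-path of that pentagon; that sub-path runs $p(a_i,b_i,\delta-1)$–$b_i$–$d_i$–$p(d_i,e_i,2-\delta)$ and has length $5-2\delta\le\tfrac52\le2\delta$ whenever $\delta\ge\tfrac54$, so a single point of radius $\delta$ covers it. Hence $\covn{G_m}\le m+1$, and combining with the upper bound part of the theorem,
\[
  \frac{2m}{m+1}\ \le\ \frac{\covn[1]{G_m}}{\covn{G_m}}\ \le\ 2,
\]
so the ratio tends to $2$ as $m\to\infty$.

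The routine steps to fill in are the distance computations behind ``reaches distance $\delta-1$ past $a_i$'', the verification that the uncovered sub-path is exactly the one stated, and — if one wants the exact value $\covn[1]{G_m}=2m+1$ rather than just the lower bound $2m$ — checking via the Edmonds–Gallai structure that $\nu(G_m)=2m$ (equivalently, that no $1$-cover avoids paying one extra point to handle the edges at $c$). The only genuinely delicate point is the length bound $5-2\delta\le2\delta$: this is exactly where the hypothesis $\delta\ge\tfrac54$ enters, and it is also why this construction does not certify tightness below $\tfrac54$, where $\covn{C_5}=3$ rather than $2$.
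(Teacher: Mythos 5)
Your proof of the inequality $\covn[1]{G}\le 2\cdot\covn{G}$ is correct and essentially identical to the paper's: both round each interior point of an optimal $\delta$-cover to the two endpoints of its edge and then verify that the resulting vertex set is a vertex cover (the paper checks edge-by-edge that an edge without a point of $S_\delta$ must have an adjacent edge carrying one because $2\delta<3$; you check that the point covering each edge's midpoint must sit on an incident edge because $\delta<\tfrac32$ — the same calculation). For tightness, the paper attaches $k$ triangles to a common centre by pendant edges, whereas you thread $m$ pentagons through a common centre; both families give ratio tending to $2$, and your route to the lower bound via factor-criticality and \cref{equation:1:cover:size} (yielding $\covn[1]{G_m}=2m+1$) is valid. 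One caveat: the parenthetical shortcut ``the $m$ vertex-disjoint paths $a_ib_id_ie_i$ each need two points and cannot share'' is only correct if read in the measure-counting sense (at least two points of the cover must each cover a positive-length portion of the path, and every such point lies in pentagon $i$, since $c$ — the only shared point — covers measure zero of each path). Read literally as ``at least two points located on the path,'' it is false: the set $\{c,\,p(c,a_1,\tfrac12),\,p(c,e_1,\tfrac12),\,p(b_1,d_1,\tfrac12)\}$ $1$-covers pentagon $1$ while placing only one point on that path. Since your primary argument does not rely on the parenthetical, this does not affect the proof, but the remark should either be dropped or restated in the measure-counting form.
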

\begin{proof}
	First, we show that $\covn[1]{G} \leq 2 \cdot \covn{G}$ for every $\delta < \frac{3}{2}$.
	For that, let $S_\delta$ be an optimal $\delta$-cover of $G$, for any $\delta \in [1, \frac{3}{2})$.
	We construct a $1$-cover $S_1$ by considering each point $p \in S_\delta$ individually:
	\begin{itemize}
		\item If $p \in S_\delta \cap V(G)$, we add $p$ to $S_1$.
		\item If $p$ is on the interior of an edge $\{u, v\}$, we add $u$ and $v$ to $S_1$.
	\end{itemize}
	Then $|S_1| \leq 2|S_\delta|$.
	Further, $S_1$ is a $1$-cover of $G$.
	For that, consider an arbitrary edge $\{u, v\} \in E(G)$.
	If $S_\delta \cap P(G[\{u, v\}]) \neq \emptyset$, then at least one of $u$ or $v$ is in $S_1$, and therefore any point in 
	$P(G[\{u, v\}])$ is $1$-covered by $S_1$.
	On the other hand, if $S_\delta \cap P(G[\{u, v\}]) = \emptyset$, there is an edge $\{v, w\} \in E(G)$ with $S_\delta \cap 
	(P(G[\{v, w\}]) \setminus \{w\}) \neq \emptyset$, or an edge $\{u, w\} \in E(G)$ with $S_\delta \cap (P(G[\{u, w\}]) \setminus 
	\{w\}) \neq \emptyset$, since $2\delta < 3$.
	Therefore either $u$ or $v$ are contained in $S_1$, and thus any point in $P(G[\{u, v\}])$ is $1$-covered by $S_1$.
	
	Since an optimal $1$-cover can be at most as large as $S_1$, we obtain $\covn[1]{G} \leq 2 \cdot \covn{G}$ for 
	$\delta < \frac{3}{2}$.
	
	Second, we give an infinite family of graphs where this bound is asymptotically tight.
	For $2 < k \in \mathbb N$, let $G_k$ be the graph consisting of $k$ triangles, where one vertex of each triangle 
	is connected to a central vertex $v^*$. 
	An example of this construction is shown in \Cref{fig:cov3254lb}.
	For $\delta \geq \frac{5}{4}$, there is a $\delta$-cover of size $k+1$ for $G_k$:
	The central vertex $v^*$ and the center of the edge of each triangle opposite to the vertex connected to the 
	central vertex.
	On the other hand, a $1$-cover of $G_k$ must place $2$ points on each triangle or its connecting edge:
	In the best case, the edge connecting the triangle to $v^*$ is $1$-covered, for example by a point on $v^*$.
	However, a single point cannot $1$-cover a triangle.
	Thus, the approximation ratio on $G_k$ is $\frac{2k}{k+1}$, which converges to $2$ as $k$ goes to infinity.
\end{proof}

\input{tikz/1-cover-lbs}

\begin{restatable}{theorem}{approxSevenSixthToFiveQuarters}
	\label{lemma:approx76to54}
	For every graph $G$ and $\delta < \frac{5}{4}$,
	$\covn[1]{G} \leq \frac{5}{3} \cdot \covn{G}$.
	Further, for $\delta \geq \frac{7}{6}$, this bound is asymptotically tight.
\end{restatable}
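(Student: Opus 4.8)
The plan is to prove the bound $\covn[1]{G}\le\frac53\covn{G}$ by sharpening the per-point construction behind \Cref{lemma:approx54to32}, and then to obtain tightness for $\delta\ge\frac76$ from an explicit family extending the triangle example used there.

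First, fix $\delta\in[1,\frac54)$ and, by \Cref{lemma:n:1}, take a minimum $\delta$-cover $S$ of $G$ that is neat, so each edge carries at most one point of $S$ in its interior. As in \Cref{lemma:approx54to32} I build a vertex cover $S_1$ of $G$ --- which is automatically a $1$-cover --- by processing the points of $S$ one at a time: a point $p\in S\cap V(G)$ is kept, while for an interior point $p$ on an edge $\{u,v\}$ I add to $S_1$ only the endpoint nearest $p$ if $p$ lies within distance roughly $\frac14$ of an endpoint, and both $u$ and $v$ otherwise. To see that $S_1$ is a vertex cover one argues as in \Cref{lemma:approx54to32}: if an edge $f=\{a,b\}$ had no endpoint in $S_1$, its midpoint would be $\delta$-covered by some $q\in S$, and since $2\delta<\frac52$ this $q$ lies on $f$ itself or on an edge incident to an endpoint of $f$ at distance less than $\delta-\frac12<\frac34$ from that endpoint; in either case $q$ is an interior point whose processing placed an endpoint of $f$ into $S_1$, a contradiction.

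The heart of the argument is the counting. Kept vertex-points and interior points close to an endpoint cost one vertex each, whereas the remaining \emph{central} interior points --- those at distance in $(\frac14,\frac34)$ from both endpoints of their edge --- cost two, so $|S_1|\le|S|+|C|$ with $C\subseteq S$ the set of central points; it thus suffices to show $|C|\le\frac23|S|$. Here I would exploit that $\delta\ge1$ makes central points expensive for $S$: a central point $q$ on an edge $\{a,c\}$ has distance more than $1+\frac14>\delta$ to every vertex other than $a,c$ that is adjacent to $a$ or $c$, hence covers no such vertex, so all of those neighbouring vertices must be $\delta$-covered by points of $S$ other than $q$. Charging every central point to such witnesses --- taking care that the assignment can be made injective --- should give $|C|\le\frac23|S|$, and hence $|S_1|\le\frac53|S|=\frac53\covn{G}$. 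I expect making this charging injective to be the main obstacle, since in low-degree or short-cycle configurations two central points may compete for the same witness; a robust fallback is to bound the ratio locally, treating each central point together with its nearby points of $S$ as a single cluster.

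For tightness when $\delta\ge\frac76$, I would generalize the graph $G_k$ of \Cref{lemma:approx54to32} by replacing each triangle-and-connecting-edge by a gadget in which a triangle has each of its three vertices joined to one shared central vertex $v^*$ by a path of fixed length; this length is chosen so that the obstruction first appears exactly at $\delta=\frac76$, and longer paths move the threshold towards $1$ (cf.\ the remark near \Cref{fig:covlbs}). On such a gadget an optimal $\delta$-cover uses about three points --- one on each attached path, positioned so as to simultaneously cover the triangle and reach far enough back along that path --- together with a single shared point at $v^*$, while every $1$-cover is forced to spend about five points per gadget. Establishing this local lower bound on the gadget's optimal $1$-cover is the one nontrivial verification; granting it, with $k$ gadgets one obtains $\covn{G_k}=3k+O(1)$ and $\covn[1]{G_k}=5k+O(1)$, so the approximation ratio on $G_k$ tends to $\frac53$, matching the upper bound.
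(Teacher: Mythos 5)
Your upper-bound argument has a genuine gap at its central counting step, and it is not a missing detail: the inequality $|C|\le\frac{2}{3}|S|$ is false, and the very family used for tightness refutes it. Take $G_k$ to be $k$ triangles each of whose vertices is joined to a shared vertex $v^\star$ by a path of three edges, and $\delta=\frac{7}{6}$. The optimal $\delta$-cover $S$ consists of $v^\star$ together with, for each triangle vertex $v$, the point at distance $\delta-\frac12=\frac23$ from $v$ on its path; all $3k$ non-vertex points of $S$ therefore lie at distance $\frac13$ from their nearest endpoint and are ``central'' for any snap-threshold $t\le\frac14$ --- and your own vertex-cover argument forces $t\le\frac14$, since a point at distance $\mu\in[1-t,\frac34)$ from an endpoint $a$ of an otherwise uncovered edge would fail to place $a$ into $S_1$. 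Hence $|C|=3k$ while $\frac23|S|=2k+\frac23$, and your construction outputs $|S|+|C|=6k+1$ points against a minimum $1$-cover of size $5k$, i.e.\ it is only a $2$-approximation on this family. The charging you hoped for cannot be repaired, because the witnesses that must $\delta$-cover the neighbours of $a$ and $c$ are themselves central points, so in the worst case there is nothing non-central to charge to. The paper avoids paying $2$ for any point: it snaps \emph{every} point of $S_\delta$ to a single endpoint (cost exactly $|S_\delta|$), observes that the graph $G'$ induced by the edges still uncovered has at most $|S_\delta|$ vertices, and then covers $G'$ by an \emph{optimal $1$-cover} computed via \Cref{lemma:covering:p}, whose size is at most $\frac23|V(G')|\le\frac23|S_\delta|$ by the Edmonds--Gallai bound in \cref{equation:1:cover:size}. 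That last step has no analogue in your proposal and is where the gain lies: a vertex cover of $G'$ could have $|V(G')|-1\approx|S_\delta|$ vertices, which only reproduces the factor $2$ of \Cref{lemma:approx54to32}, whereas genuine $1$-covers of the residual graph (which may use edge-interior points and exploit factor-critical components) cost only $\frac23$ per vertex.

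Your tightness construction for $\delta\ge\frac76$ is essentially the paper's and is fine; like the paper, you leave the local lower bound of five points per gadget for a $1$-cover unverified, so there is no additional complaint there.
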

\begin{proof}
	We show that $\covn[1]{G} \leq \frac{5}{3} \cdot \covn{G}$ for every $\delta < \frac{5}{4}$.
	For that, let $S_\delta$ be an optimal $\delta$-cover of $G$, for any $\delta \in [1, \frac{5}{4})$.
	We construct a $1$-cover $S_1$ in two parts.
	First, we construct $S_1'$, a partial $1$-cover, by considering each point $p \in S_\delta$ individually:
	\begin{itemize}
		\item If $p \in S_\delta \cap V(G)$, then $p \in S_1'$.
		\item If $p = p(u, v, \lambda)$ and $\lambda < \half$, then $u \in S_1'$ and otherwise $v \in S_1'$
	\end{itemize}
	Then $|S_1'| = |S_\delta|$.
	Further, every edge of $G$ is either completely $1$-covered or not at all by $S_1'$, as $S_1'$ only contains points on vertices.
	Let $G'$ be the graph induced by the edges that are not $1$-covered by $S_1'$.
	Then $|V(G')| \leq |S_\delta|$, as for every vertex $u \in V(G')$ there must be at least one point in $S_\delta$ on an edge 
	$\{u,v\}$ adjacent to $u$ with $v \notin V(G')$.
	As we observed in \cref{equation:1:cover:size},
	$\covn[1]{G} \leq \frac{2}{3}|V(G)|$ for any graph $G$, 
	and hence $\covn[1]{G'} \leq \frac{2}{3}|S_\delta|$.
	We compute an optimal $1$-cover $S_1''$ of $G'$ in polynomial time using \cref{lemma:covering:p},
	and set $S_1 = S_1' \cup S_1''$.
	It is easy to see that $S_1$ is a $1$-cover of $G$, and that $|S_1| \leq \frac{5}{3}|S_\delta|$.
	
	Since an optimal $1$-cover can be at most as large as $S_1$, we obtain $\covn[1]{G} \leq \frac{5}{3} \cdot 
	\covn{G}$ for $\delta < \frac{5}{4}$.
	
	Further, we give an infinite family of graphs where this bound is asymptotically tight.
	For $2 < k \in \mathbb N$, let $G_k$ be the graph consisting of $k$ triangles, where each vertex of each 
	triangle is connected to a central vertex $v^*$ by a path of $3$ edges. 
	An example of this construction is shown in \Cref{fig:cov5476lb}.
	For $\delta \geq \frac{7}{6}$, there is a $\delta$-cover of size $3k+1$ for $G_k$:
	The central vertex $v^*$ and for each vertex $v$ in a triangle the point with distance $\delta - \half$ from $v$ 
	on the path to $v^*$.
	On the other hand, a $1$-cover of $G_k$ must place $5$ points for each triangle, as shown in 
	\Cref{fig:cov5476lb}.
	Thus, the approximation ratio on $G_k$ is $\frac{5k}{3k+1}$, which converges to $\frac{5}{3}$ as $k$ goes to infinity.
\end{proof}

For $\delta < \frac{7}{6}$,
	we refine this approach in several ways.
We construct the partial $1$-cover $S_1'$ more sensitive to $S_\delta$.
Then we analyze the remaining edges more carefully,
	and finally obtain a good approximation by considering two different alternatives.


\begin{restatable}{theorem}{approxsmallerSevenSixth}
	\label{lemma:approxsmaller76}
	For every graph $G$ and $\delta < \frac{7}{6}$, $\covn[1]{G} \leq \frac{3}{2} \cdot \covn{G}$.
	Further, for $\delta \geq \frac{9}{8}$, this bound is asymptotically tight.
\end{restatable}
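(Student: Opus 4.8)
As in \Cref{lemma:approx54to32,lemma:approx76to54}, the inequality is proved by turning an optimal $\delta$-cover into a $1$-cover; since a $\delta$-cover is already a $1$-cover when $\delta\le 1$, it suffices to treat $\delta\in[1,\tfrac76)$. First I would fix, via \Cref{lemma:n:1}, a neat optimal $\delta$-cover $S_\delta$ and split it into the points $S_\delta^V=S_\delta\cap V(G)$ on vertices and the remaining points $S_\delta^E$ in edge interiors. The target is a $1$-cover $S_1=S_1'\cup S_1''$, where $S_1'$ is a partial $1$-cover read off $S_\delta$ point by point and $S_1''$ optimally $1$-covers the graph $G'$ spanned by the edges that $S_1'$ fails to $1$-cover; the latter is computable by \Cref{lemma:covering:p}.

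The improvement over \Cref{lemma:approx76to54}, which always charges a point $p=p(u,v,\lambda)\in S_\delta^E$ to its nearer endpoint, is to make this choice depend on the surrounding points of $S_\delta$: the nearer endpoint in the generic case, but the other endpoint when that lets a small cluster of points of $S_\delta$ jointly cover more vertices. The goal of this sensitivity, together with a sharper analysis of $G'$, is to push $|V(G')|$ well below the crude bound $|V(G')|\le|S_\delta^E|$. The analysis uses $\delta<\tfrac76$ crucially: if an edge $\{u,v\}$ stays uncovered, then by neatness it carries no point of $S_\delta$ and no point of $S_\delta$ lies within distance $\tfrac12$ of $u$ or $v$ on an incident edge, so covering the midpoint of $\{u,v\}$ forces a point of $S_\delta^E$ on an edge at $u$ or $v$ at distance in $(\tfrac12,\delta-\tfrac12)\subseteq(\tfrac12,\tfrac23)$ from that vertex. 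I would upgrade this to: each vertex of $G'$ owns a private such witness point, which lies on an edge leaving $G'$ (forcing its other endpoint into $S_1'$), and then control how the witnesses of adjacent $G'$-vertices may coincide or cluster, hence the degrees and component sizes inside $G'$.

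The upper bound then follows by taking, of \emph{two alternatives}, whichever is cheaper: (i) $S_1'\cup S_1''$, bounded via $|S_1'|=|S_\delta|$ and $\covn[1]{G'}\le\tfrac23|V(G')|$ from \cref{equation:1:cover:size}; or (ii) extending $S_1'$ over $G'$ directly by adding, in matched pairs, the second endpoints of the witness edges. With the case split driven by the structure of $G'$ so that the cheaper alternative always has size at most $\tfrac32|S_\delta|$, and since an optimal $1$-cover is no larger, $\covn[1]{G}\le\tfrac32\covn{G}$. For asymptotic tightness when $\delta\ge\tfrac98$, I would exhibit an infinite family $G_k$ in the spirit of \Cref{lemma:approx54to32,lemma:approx76to54} --- triangles attached by paths to a central vertex, with the attachment lengths tuned to the interval $[\tfrac98,\tfrac76)$ --- with $\covn[1]{G_k}/\covn{G_k}\to\tfrac32$: a $\delta$-cover can use points that do double duty, covering part of an attachment while reaching into a triangle past the midpoints of its edges, whereas a $1$-cover must dedicate separate points to each triangle, a single point being unable to $1$-cover a triangle (as in \Cref{lemma:approx54to32}).

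The main obstacle I anticipate is exactly the sensitive construction of $S_1'$ together with the structural analysis of $G'$: the naive versions of both lose precisely the factor between $\tfrac53$ and $\tfrac32$, so the case analysis --- especially the bookkeeping of how the witness points of $S_\delta$ are shared between neighbouring uncovered edges --- must be organised carefully for the two alternatives to meet the claimed bound.
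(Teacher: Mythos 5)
Your plan identifies the paper's strategy correctly: normalize the optimal $\delta$-cover, build a partial $1$-cover $S_1'$ by a placement rule that is sensitive to the neighbouring points of $S_\delta$, analyze the leftover graph $G'$, and for each piece take the cheaper of two completions (an optimal $1$-cover of the piece via \Cref{lemma:covering:p}, or directly adding the relevant vertices). The tightness family you describe is also the paper's: the triangles of \Cref{lemma:approx54to32} attached to the central vertex by paths of three edges. However, what you submit is a plan, and the step you yourself flag as the main obstacle is precisely the content of the proof; as written there is a genuine gap between ``organise the case analysis carefully'' and an actual bound of $\frac{3}{2}$.

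Concretely, the paper's execution needs three ingredients you do not supply. First, the sensitivity is made precise by calling $p(u,v,\lambda)$ \emph{$u$-close} only when $\lambda\le\frac16$ and \emph{centric} when $\frac16<\lambda<\frac56$, and by first normalizing $S_\delta$ to a cover in which no two center edges are adjacent (\Cref{claim:niceMinimumCover}); this matching structure of the center edges inside $G'$ is what allows each pair of leftover vertices to be charged to a single point of $S_\delta$. Second, the leftover edges must be classified (the paper uses five types, one of which is eliminated by a further augmentation of $S_1'$) so that every component $C$ of $G'$ splits into vertices $V_a(C)$ incident to a center edge and vertices $V_{ce}(C)$ whose witness point lies outside $C$; the key count is $|S_\delta\cap P(G[C^*])|=\frac12|V_a(C)|+|V_{ce}(C)|$ over suitably enlarged, pairwise disjoint regions $C^*$. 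Third, the choice between your alternatives (i) and (ii) is made per component by comparing $|V_a(C)|$ with $2|V_{ce}(C)|$, and only with that threshold does each branch verify to at most $\frac32$ of the local optimum. Your ``private witness'' heuristic points in the right direction (and your observation that an uncovered edge forces a point at distance in $(\frac12,\delta-\frac12)$ from one of its endpoints is correct), but without the $\frac16$/$\frac56$ thresholds, the matching property, and the per-component accounting, the claim that the cheaper alternative always meets $\frac32|S_\delta|$ remains an assertion rather than a proof.
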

\begin{proof}
Let a point $p(u,v,\lambda)$ be \emph{$u$-close} if $\lambda \leq \frac{1}{6}$
	and \emph{centric} if $\frac{1}{6} < \lambda < \frac{5}{6}$.
Conveniently, let an edge be \emph{$u$-close} and be a \emph{center edge}
	if it contains a $u$-close point or a centric point, respectively.
We say that $\delta$-cover $S$ of $G$ is \emph{nice} if it is neat (i.e., satisfies \ref{def:edge}) and
\begin{itemize}
\item
 \labeltext{$($A1$)$}{def:two:centers}
no two center edges are adjacent.
\end{itemize}
\begin{restatable}{claim}{niceMinimumCover} 
	\label{claim:niceMinimumCover}
	For every graph $G$ and $\delta \in [1, \frac{3}{2})$, a nice minimum $\delta$-cover $S_\delta$ exists.
\end{restatable}
\begin{claimproof}
	Consider a minimum $\delta$-cover $S_\delta$ that is neat,
	which exists due to \cref{lemma:n:1}.
	For every vertex $u \in V(G)$, let $p_1 = p(u, v_1, \lambda_1), \dots, p_k = p(u, v_k, \lambda_k)$ denote the points 
	in $S_\delta \cap P(G[N[v]])$, and, w.l.o.g., let $\lambda_1 \leq \dots \leq \lambda_k$. If $k > 1$, we add $v_2, 
	\dots, v_k \in S_\delta$.
	Then $S_\delta$ remains neat and satisfies \ref{def:two:centers}.
\end{claimproof}

Now consider a minimum $\delta$-cover $S_\delta$ of $G$ that is nice.
	We construct a $1$-cover $S_1$ of $G$ from $S_\delta$ in two steps.
	First, we construct a partial $1$-cover $S_1'$:
	\begin{itemize}
		\item For every vertex $u \in V(G)$, if $S_\delta$ contains a $u$-close point, we add $u$ to $S_1'$. In particular, this includes all points on vertices.
		\item For every center edge $\{u, v\} \in E(G)$ with $N(v) \setminus \{u\} \subseteq S_1'$,
		we add $u$ to $S_1'$, unless $v$ was already added to $S_1'$ by this rule.
	\end{itemize}
	Let $G'$ be the graph induced by the edges $\{u, v\} \in E(G)$ where $u, v \notin S_1'$.
	Note that by construction of $S_1'$, there are no edges in $E(G')$ that are $u$-close for some vertex $u \in 
	V(G')$, and the center edges in $E(G')$ form a matching of $G'$.
	Then each edge $\{u, v\} \in E(G')$ satisfies at least one of the following properties.
	By symmetry, assume that $u$ has shorter or equal distance to a point in $S_\delta$ than $v$.
	\begin{description}
		\item[(a)] The edge $\{u, v\}$ is a center edge.
		\item[(b)]
		Vertices $u$ and $v$ are incident to center edges in $E(G')$.
		\item[(c)]
		Vertex $u$ is incident to a center edge in $E(G')$
		and $v$ is adjacent to a vertex $w \in V(G)$ with a $w$-close point in $S_\delta \cap  \{ p(w,v,\lambda) \mid \lambda \in (0,\frac{1}{6}] \}$.		
		\item[(d)]
		Vertex $u$ is incident to a center edge in $E(G')$
		and $v$ is adjacent to a vertex $w \in V(G)$ with a $w$-close point in $S_\delta \setminus \{ p(w,v,\lambda) \mid \lambda \in (0,\frac{1}{6}] \}$.
	\end{description}
We classify each edge $\{u, v\} \in E(G')$ as type-(a), -(b), -(c) or -(d)
	depending on the first property $\{u, v\}$ satisfies.
Let $f$ be the mapping that assigns type-(c) and type-(d) edges
	their unique incident center edge. 
Since $\delta < \frac{7}{6}$, for every type-(c) edge $\{u, v\}$,
	the center edge $f(\{u,v\})$ contains a point in distance $<\half$
	to the common vertex of $\{u,v\}$ and $f(\{u,v\})$.
Further, for every type-(d) edge $\{u, v\}$,
the center edge $f(\{u,v\})$ contains a point in distance $<\frac{1}{3}$
to the common vertex of $\{u,v\}$ and $f(\{u,v\})$.

If $E(G')$ contains a type-(d) edge $\{u,v\}$,
	we add the common vertex of $\{u,v\}$ and $f(\{u,v\})$ to $S_1'$, w.l.o.g., let this be $u$.
Then we delete $u$ from $G'$ and reclassify the type-(b) edges adjacent to the other endpoint of $f(\{u, v\})$ as type-(e).
After this step, $E(G')$ contains one less type-(d) edge.
We repeat this modification until $G'$ does not contain any type-(d) edge.
Finally, we delete all isolated vertices from $G'$.
	
	Now with the only edge types of $G'$ being (a), (b), (c) and (e), we proceed with the second part of the construction of $S_1$.
	For that, we consider each connected component $C$ of $G'$ separately.
	Let $V_a(C)$ denote the vertices of $C$ that are adjacent to a center edge.
	Further, let $V_{ce}(C) = V(C) \setminus V_a(C)$ denote the vertices of $C$ adjacent to a type-(c) or type-(e) 
	edge but not to a center edge.
	Note that both endpoints of a type-(b) edge are in $V_a(C)$.
For a component $C$ of $G'$, let $C^*$ be the component of $G$
	defined by 
	$$G[V(C) \cup \{v ~|~ v \notin V(C), u \in V(C) \text{ and } \{u,v\} \in E(G) \text{ and } |P(G[\{u,v\}]) \cap S_\delta| \geq 1\}].$$
	Informally, each component gets enlarged by the edges containing a point $\delta$-covering parts of the type-(c) and type-(e) edges.
	For every vertex in $V_{ce}(C)$, for every component $C$ of $G'$, there is a unique point in $S_\delta$, by definition of the type-(c) and type-(e) edges.
	Further, this point is contained in $P(G[C^*])$.
	Then for any two components $C_1, C_2$ of $G'$, the corresponding components $C^*_1$ and $C^*_2$ are disjoint.
	Additionally, for every component $C$, there are $\half|V_a(C)|$ unique points in $S_\delta$, that are different from the previous points, as the center edges form a matching in $G'$.
	Then, for each component $C$ of $G'$, we compute a $1$-cover $S_1^C$ depending on the size of $V_a(C)$:
	\begin{enumerate}
		\item If $|V_a(C)| < 2\cdot |V_{ce}(C)|$, we set $S_1^C = V_a(C)$.
		\item If $|V_a(C)| \geq 2\cdot |V_{ce}(C)|$, 
		we use \Cref{lemma:covering:p} to output an optimal $1$-cover of $C$ as $S_1^C$.
	\end{enumerate}
	Finally, we set $S_1 = S_1' \cup \bigcup_{C \in \CC(G')} S_1^C$
		where $\CC(G')$ are the connected components of $G'$.
	By the arguments above, we have $|S_\delta \cap P(G[C^*])| = \half|V_a(C)| + |V_{ce}(C)|$.
	
	In case 1, we have $|S_1 \cap P(G[C^*])| = |V_a(C)| + |V_{ce}(C)|$.
	Further, 
	\begin{align*}
		\half |V_a(C)| < |V_{ce}(C)| 
		&\Leftrightarrow 2 \cdot (|V_a(C)| + |V_{ce}(C)|) < 3 \cdot (\half|V_a(C)| + |V_{ce}(C)|) \\
		&\Leftrightarrow |S_1(C^*)| < \tfrac{3}{2} |S_\delta(C^*)|.
	\end{align*}
	
	In case 2, we have $|S_1 \cap P(G[C^*])| \leq 
	\frac{2}{3} |V_a(C)| + \frac{5}{3}|V_{ce}(C)|$.
	Further, 
	\begin{align*}
		|V_{ce}(C)| \leq \half |V_a(C)| 
		&\Leftrightarrow 4|V_a(C)| + 10|V_{ce}(C)| \leq \tfrac{9}{2}|V_a(C)| + 9|V_{ce}(C)| \\
		&\Leftrightarrow |S_1(C^*)| \leq \tfrac{3}{2} |S_\delta(C^*)|.
	\end{align*}
	
	Thus in both cases, the number of points $S_1$ places on $C^*$ is at most $\frac{3}{2}$ times the number of points $S_\delta$ places on $C^*$.
	Since the components $C^*$ for all $C \in \CC(G)$ are mutually disjoint and
		$S_1$ places the same number of points as $S_\delta$ in the parts of $G$ that are not part of a component $C^*$, we get an overall ratio of $\frac{3}{2}$.
		
	To show that this ratio is best possible for $\delta \geq \frac{9}{8}$, we can use the same construction as in the proof of \Cref{lemma:approx54to32}, but instead of connecting each triangle to the central vertex by an edge, we connect it by a path of three edges.
\end{proof}

\section{Approximation Algorithms for $\frac{1}{2} < \delta < 1$}
\label{section:simple:approx}

This section derives our approximation algorithms for $\delta \in (\half, 1)$.
We begin with a bound for the whole interval.
Since a minimum $\half$-cover for a connected non-tree graph $G$ simply is $V(G)$,
	the following gives a $2$-approximation of a $\delta$-cover of $G$ for $\delta \in (\half, 1)$.

\begin{restatable}{theorem}{boundDeltaSmallerOne}
	\label{lemma:bound:delta:smaller:1}
	$\covn{G} \geq \frac{1}{2}|V(G)|$ for every graph $G$ and $\delta < 1$.
\end{restatable}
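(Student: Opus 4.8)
The plan is a one-step double-counting argument: since $\delta < 1$, each point of a $\delta$-cover lies within distance $\delta$ of at most two vertices of $G$, so a $\delta$-cover cannot have fewer than $|V(G)|/2$ points.

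Concretely, I would fix an arbitrary $\delta$-cover $S\subseteq P(G)$ and, for every $v\in V(G)$, choose a point $s(v)\in S$ with $\dist{s(v),v}\le\delta$, which exists by definition of a $\delta$-cover. The heart of the proof is the claim that no point of $P(G)$ is within distance $\delta$ of three pairwise distinct vertices. I would verify this by distinguishing where a point $p\in P(G)$ sits. If $p=w$ is a vertex, then for any vertex $v\ne w$ we have $\dist{p,v}\ge 1>\delta$ because the graph is simple with unit-length edges, so $p$ covers only $w$ among the vertices. If $p=p(u,v,\lambda)$ lies in the interior of an edge $\{u,v\}$ with $\lambda\in(0,1)$, then any path from $p$ to a vertex $w\notin\{u,v\}$ must first reach $u$ or $v$, hence $\dist{p,w}=\min\!\big(\lambda+\dist{u,w},\,(1-\lambda)+\dist{v,w}\big)\ge 1+\min(\lambda,1-\lambda)>1>\delta$, so $p$ covers at most the two endpoints $u$ and $v$. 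Either way $p$ is within distance $\delta$ of at most two vertices.

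Putting it together: the map $v\mapsto s(v)$ sends the $|V(G)|$ vertices into $S$ and is at most $2$-to-$1$, so $|V(G)|\le 2|S|$; applying this to a minimum $\delta$-cover yields $\covn{G}\ge\tfrac12|V(G)|$. There is no genuine obstacle here — the only subtlety is the interior-point case, and this is also the unique place where the hypothesis $\delta<1$ is used: at $\delta=1$ a point placed on a vertex $v$ already covers all of $N[v]$, and the bound fails (for instance on stars).
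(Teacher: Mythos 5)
Your proof is correct, but it takes a genuinely different route from the paper. The paper derives the bound by combining two external results: Tamir's inequality $\covn{G} \geq \dispn[(2\delta+\varepsilon)]{G}$ relating covering to the dual dispersion problem, and Hartmann's identity $\covn[1]{G} + \dispn[2]{G} = |V(G)|$; adding $\covn{G}\geq\covn[1]{G}$ and $\covn{G}\geq\dispn[2]{G}$ then gives $2\covn{G}\geq|V(G)|$. You instead give a direct, self-contained double-counting argument: for $\delta<1$, any point of $P(G)$ is within distance $\delta$ of at most two vertices (the endpoints of the edge carrying it), because distinct vertices are at distance at least $1$ and a path from an interior point $p(u,v,\lambda)$ to any $w\notin\{u,v\}$ has length at least $1+\min(\lambda,1-\lambda)>1$. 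Your case analysis is sound, the assignment $v\mapsto s(v)$ being at most $2$-to-$1$ is exactly the right packaging, and your remark on where $\delta<1$ is needed (failure on stars at $\delta=1$) is apt. What each approach buys: yours is elementary and avoids any dependency on the dispersion literature; the paper's is a two-line corollary of machinery it cites anyway, and the dispersion identity is reused elsewhere (e.g., in Lemma~\ref{lemma:simpleApproxLargerEven}), so the authors get the bound essentially for free. Either proof is acceptable.
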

\begin{proof}
Dual to \covn{G} is \dispn{G}, 
	which is the maximum size subset of points $I \subseteq P(G)$
	such that no two points in $I$ have distance less than $\delta$.
	It is easy to see that $\covn{G} \geq \covn[1]{G}$ for $\delta < 1$.
	Tamir \cite{Tamir1991} showed that for every graph $G$ and $\delta, \varepsilon > 0$, it holds that $\covn{G} \geq \dispn[(2\delta+\varepsilon)]{G}$.
	Thus also $\covn{G} \geq \dispn[2]{G}$ for $\delta < 1$.
	Further, Hartmann \cite{hartmann2022facility} observed that $\covn[1]{G} + \dispn[2]{G} = |V(G)|$, for every graph $G$.
	It therefore follows that $\covn{G} \geq \frac{1}{2}|V(G)|$ for every graph $G$ and $\delta < 1$.
\end{proof}

\subsection{General Approximation for $\frac{1}{2} < \delta < \frac{2}{3}$}
\label{section:better:approx:medium:delta}

Our family of approximation algorithms for $\delta \in (\half,\frac{2}{3})$,
	rely on bounding the size of a $\delta$-cover linearly in $|V(G)|$.
Intuitively, such a $\delta$-cover $S$ can be smaller than $V(G)$
	on a long path $Q$ by spacing the points $S$ far apart
	such that eventually $Q$ contains an edge $e$ where $S \cap P(G[e])=\emptyset$.

\smallskip

Consider a nice minimum $\delta$-cover $S$ of a graph $G$.
We say that 
	a point $p(u,v,\lambda) \in P(G)$ with $\lambda \in [0,1-\delta)$ is \emph{$u$-close}.
We call a $\delta$-cover $S$ \emph{humble} if it is neat (i.e., satisfies \ref{def:edge}) and

\begin{itemize}
\item \labeltext{$($N2$)$}{def:cycle}
there is no cycle $C$ in $G$
	where each edge contains a point from $S$ in its interior,
\item
\labeltext{$($N3$)$}{def:close}
for every vertex $u \in V(G)$,
	set $S$ contains at most one $u$-close point.
\end{itemize}

\begin{restatable}{lemma}{niceSmallerOne}
	\label{lemma:niceSmallerOne}
	For $\delta \in (\half,\frac{2}{3})$,
	there is a humble minimum $\delta$-cover $S$ of $G$.
\end{restatable}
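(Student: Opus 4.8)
The plan is to fix, once and for all, a neat minimum $\delta$-cover (which exists by \Cref{lemma:n:1}) that is extremal with respect to a two-level objective: first minimize the number $\Phi(S)$ of points of $S$ lying in the interior of an edge, and among those minimize $\Psi(S) \coloneqq \sum_{u \in V(G)} \max\{0, c_u(S) - 1\}$, where $c_u(S)$ is the number of $u$-close points of $S$. Such an $S$ is neat, so it only remains to verify \ref{def:cycle} and \ref{def:close}; I would do this by showing that every violation admits a local modification that either keeps $S$ a neat minimum cover while strictly decreasing $\Phi$, or keeps $\Phi$ fixed while strictly decreasing $\Psi$ — contradicting extremality either way. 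Throughout I use $\delta > \half$: the two endpoints of an edge together $\delta$-cover it, every point of an edge is within $\half < \delta$ of some endpoint, and on a neat cover an edge carrying an interior point of $S$ carries no other point of $S$ and has neither endpoint in $S$.

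For \ref{def:cycle}: suppose $C = v_0 e_1 v_1 \cdots e_m v_m = v_0$ is a cycle ($m \ge 3$) with each $e_i$ containing an interior point $q_i \in S$. I would replace $\{q_1,\dots,q_m\}$ by $\{v_0,\dots,v_{m-1}\}$. Then $\bigcup_i \balll{q_i}{\delta} \subseteq \bigcup_j \balll{v_j}{\delta}$: a point within $\delta$ of $q_i$ either lies on $e_i$, hence within $\half$ of an endpoint of $e_i$, or else a shortest path to it leaves $e_i$ through $v_{i-1}$ or $v_i$, which is then within $\delta$ of it. So the new set is still a $\delta$-cover, of the same size (the $v_j$ are distinct and not in $S$ by neatness); re-applying the neatness procedure of \Cref{lemma:n:1} yields a neat minimum cover with $\Phi$ smaller by at least $m \ge 3$ — contradiction.

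For \ref{def:close}: suppose a vertex $u$ has $u$-close points $p_1 = p(u,v_1,\lambda_1),\dots,p_k = p(u,v_k,\lambda_k)$ with $k \ge 2$ and $\lambda_1 \le \dots \le \lambda_k < 1-\delta$; by neatness these lie on distinct edges and $u \notin S$. If $\lambda_1 \le 2\delta-1$, replace $p_2,\dots,p_k$ by $v_2,\dots,v_k$ but keep $p_1$: the only coverage that can be lost is on each $e_j$ ($j\ge2$) outside $\balll{v_j}{\delta}$, i.e.\ within distance $1-\delta$ of $u$ on $e_j$; there $p_1$ reaches distance $\delta-\lambda_1 \ge 1-\delta$ from $u$, so together with $v_j$ it covers all of $e_j$. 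Re-neatifying then gives a neat minimum cover with smaller $\Phi$ — contradiction. Otherwise $\lambda_1 > 2\delta-1$, whence (using $\delta < \tfrac23$, so $2\delta-1 < 1-\delta$) every $\lambda_i \in (2\delta-1, 1-\delta)$; now slide $p_2$ along $e_2$ toward $v_2$ to the point $p_2'$ at distance exactly $1-\delta$ from $u$. Then $p_2'$ $\delta$-covers all of $e_2$ and $v_2$, and reaches distance $2\delta-1$ from $u$ into the other edges at $u$; the only loss, the sub-segment between distances $2\delta-1$ and $\delta-\lambda_2$ from $u$ on each such edge, is recovered by $p_1$, which reaches $\delta-\lambda_1 \ge \delta-\lambda_2$ from $u$ on every edge at $u$ (and even farther on $e_1$). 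This keeps $S$ neat, minimum, and \ref{def:cycle}-satisfying (the set of edges with an interior point is unchanged), keeps $\Phi$ unchanged, but makes $p_2$ no longer $u$-close while $p_2'$ is close to no vertex (it is at distance $1-\delta$ from $u$ and $\delta > 1-\delta$ from $v_2$), so $\Psi$ strictly decreases — contradiction.

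The main obstacle is precisely the last case of \ref{def:close}. When the closest $u$-close point already sits at distance more than $2\delta-1$ from $u$, one cannot collapse a $u$-close point onto $u$ or onto its far endpoint, since a sliver of edge near that far endpoint would be orphaned; the remedy is the ``slide to distance $1-\delta$'' move, and verifying that this move loses no coverage is exactly where the hypothesis $\delta < \tfrac23$ is used. It is also the reason the objective needs the secondary term $\Psi$: this move strictly increases the total distance from the points of $S$ to $V(G)$, so it cannot be obtained from a single ``pull points toward vertices'' potential, and one must track the number of surplus close points separately.
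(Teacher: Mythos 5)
Your proof is correct, and its two local modifications are exactly the ones the paper uses: trading the interior points on a cycle for the cycle's vertices, and sliding a surplus $u$-close point out to $p(u,v,1-\delta)$ while letting the $u$-close point nearest to $u$ absorb the coverage lost near $u$ --- including the correct identification that $\delta<\frac{2}{3}$ is what makes the slid point's residual reach $2\delta-1$ fall short of $1-\delta$, so that a loss genuinely has to be argued away. What differs is the packaging. The paper performs the modifications in a fixed order (first \ref{def:close}, then \ref{def:cycle}, then \ref{def:edge}) and checks after each stage that the earlier properties survive because all newly placed points are vertices; you instead take a neat minimum cover that is lexicographically extremal for $(\Phi,\Psi)$ and derive a contradiction from any violation. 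Your wrapper spares you that preservation bookkeeping, at the price of having to exhibit a strict decrease of the potential for every violation --- which is precisely why you need the secondary term $\Psi$ and the case split on whether $\lambda_1\le 2\delta-1$. That case split is in fact unnecessary: the slide to $p(u,v_j,1-\delta)$ works verbatim in the first case too (the lost segment $(2\delta-1,\delta-\lambda_j]$ around $u$ is still recovered by $p_1$, which reaches $\delta-\lambda_1\ge\delta-\lambda_j$), and it always keeps $\Phi$ fixed while strictly decreasing $\Psi$; this is how the paper treats all non-nearest $u$-close points uniformly. One phrasing in your Case 1 is slightly imprecise --- replacing $p_j$ by $v_j$ also forfeits the coverage $p_j$ provided \emph{through} $u$ on edges other than $e_j$, not only the segment of $e_j$ within $1-\delta$ of $u$ --- but your recovery via $p_1$ covers that loss as well, so nothing breaks.
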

\begin{proof}
	Consider a minimum $\delta$-cover $S$.
	We modify $S$ without increasing its size
	such that it eventually satisfies property \ref{def:edge}, \ref{def:cycle} and \ref{def:close}.
	We first address \ref{def:close}.
	For every vertex $u \in V(G)$ where $S$ contains a $u$-close point,
	assign $u$-close point of $S$ with minimum distance to $u$.
	Any point is $u$-close to at most one vertex $u$.
	For every not-assigned point $p(u,v,\lambda)$ of $S$ that is $u$-close,
	replace $u$ in $S$ by the non-$u$-close point $p(u,v,1-\delta)$.
	Then the modified set $S$ satisfies \ref{def:close} and is a $\delta$-cover of $G$ of minimum size.
	
	Regarding \ref{def:cycle},
	if there is a cycle $C$, where every edge contains a point of $S$ in its interior,
	we may replace the points of $S \cap P(G[C])$ in $S$ by the vertices of $C$.
	Applying this modification repeatedly for every cycle
	yields a minimum $\delta$-cover $S$ that satisfies \ref{def:cycle}.
	The new set still satisfies \ref{def:close} as none of the new points is in the interior of an edge.
	
	Regarding \ref{def:edge},
	If there is an edge $\{u,v\}$, where $|S \cap P(G[\{u,v\}])| \geq 2$
	but $S \cap P(G[\{u,v\}]) = \{u,v\}$,
	we replace the points $S \cap P(G[\{u,v\}])$ in $S$ by the two points $u,v$.
	Applying this modification for every edge yields a minimum $\delta$-cover $S$ that satisfies \ref{def:edge},
	The new set still satisfies \ref{def:close} and \ref{def:cycle}
	as none of the new points is the interior of an edge.
	Hence this final set $S$ is a humble minimum $\delta$-cover.
\end{proof}

For an integer $x \geq 2$ and $\delta \in [ \frac{x+1}{2x+1}, \frac{x}{2x-1} )$,
	we give an approximation of a $\delta$-cover
	by bounding its size in $|V(G)|$,
	which is the size of a $\half$-cover for connected non-tree graphs $G$.
E.g., for $x=2$ and $\delta \in [\frac{3}{5}, \frac{2}{3})$, this gives a $\frac{3}{2}$-approximation.

\begin{restatable}{lemma}{boundHalfToTwoThirds}
	\label{lemma:bound:1:2:to:2:3}
	Let $x \geq 2$ be integer and $\delta \in [ \frac{x+1}{2x+1}, \frac{x}{2x-1} )$.
	Then $|V(G)| \leq \frac{x+1}{x} \cdot \covn[\delta]{G}$,
	for connected graphs $G$ with $|E(G)| \geq x$.
	This bound is asymptotically tight.
\end{restatable}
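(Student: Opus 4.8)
The plan is to work with a humble minimum $\delta$-cover $S$ of $G$, which exists by \Cref{lemma:niceSmallerOne} since $[\tfrac{x+1}{2x+1},\tfrac{x}{2x-1})\subseteq(\half,\tfrac23)$. Split $S=S_V\sqcup S_I$ into the points lying on vertices and the points in edge interiors; the claim $|V(G)|\le\tfrac{x+1}{x}\covn{G}$ is equivalent to $x\cdot(|V(G)|-|S|)\le|S|$, so the task reduces to bounding the ``savings'' $|V(G)|-|S|$.

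First I would build the structural skeleton. As $\delta<1$, a point of $S$ lying on a vertex covers no other vertex, and a point at distance $\le\delta<1$ from a vertex $v$ cannot lie beyond an edge incident to $v$; hence every vertex $v\notin S_V$ is covered by an interior point on an edge incident to $v$. Let $H$ be the subgraph of $G$ consisting of the edges carrying an interior point of $S$. By \ref{def:edge} each such edge carries exactly one interior point, so $|E(H)|=|S_I|$, and by \ref{def:cycle} the graph $H$ is a forest, so $|V(H)|=|E(H)|+c(H)$ with $c(H)$ the number of components of $H$. Since $V(G)\setminus S_V\subseteq V(H)$ by the previous sentence, $|V(G)|=|S_V|+|V(G)\setminus S_V|\le|S_V|+|S_I|+c(H)=|S|+c(H)$, that is
\[
  |V(G)|-|S|\ \le\ c(H).
\]
It thus remains to prove the key inequality $x\cdot c(H)\le|S|$.

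The hard part is this key inequality, i.e.\ that on average each component of $H$ costs at least $x$ points of $S$. A component $K$ with $|E(K)|\ge x$ already contributes $|E(K)|\ge x$ interior points of its own. For a ``small'' component $K$ with $j:=|E(K)|<x$, one must locate $x-j$ further distinct points of $S$ in the vicinity of $K$: since $G$ is connected with $|E(G)|\ge x$, there are edges of $G$ meeting $V(K)$ but not lying in $E(H)$, and I would analyse those edges. Here \ref{def:close} (at most one $u$-close point per vertex) together with the spacing forced by $\delta<\tfrac{x}{2x-1}$ — an edge of $G$ carrying no point of $S$ has, near its two endpoints, points of $S$ at total distance less than $2\delta-1<\tfrac1{2x-1}$, so it essentially needs a dedicated point near an end — should force enough extra points, charged disjointly to the small components. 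Combining $x\cdot c(H)\le|S|$ with the displayed bound yields $x|V(G)|\le(x+1)|S|$, and optimality of $S$ turns this into $|V(G)|\le\tfrac{x+1}{x}\covn{G}$.

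For asymptotic tightness I would take, for $k\to\infty$, the spider $G_k$: a centre $c$ with $k$ legs, each leg a path of length $x+1$ attached at $c$, so $|V(G_k)|=1+k(x+1)$ and $|E(G_k)|=k(x+1)\ge x$. Put one point at $c$ and, on each leg, $x$ points at distances $2\delta,4\delta,\dots,2(x-1)\delta$ and $(x+1)-\delta$ from $c$; the only nontrivial coverage check is that the last two points of a leg leave no gap, which amounts exactly to $(2x+1)\delta\ge x+1$, i.e.\ $\delta\ge\tfrac{x+1}{2x+1}$. Hence $\covn{G_k}\le1+kx$, so $|V(G_k)|/\covn{G_k}\ge(1+k(x+1))/(1+kx)\to\tfrac{x+1}{x}$, showing the bound is asymptotically best possible.
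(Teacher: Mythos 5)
Your reduction of the lemma to the inequality $x\cdot c(H)\le |S|$ is sound and matches the paper's setup: the paper likewise passes to a humble minimum $\delta$-cover, lets $E'$ be the set of edges carrying an interior point of the cover, uses \ref{def:cycle} to conclude that these edges induce a forest, and counts vertices component by component. Your tightness example (the spider with $k$ legs of length $x+1$, i.e.\ the $(x+1)$-subdivision of $K_{1,k}$) is also the paper's example, and your verification that $1+kx$ points suffice precisely when $\delta\ge\frac{x+1}{2x+1}$ is complete and correct.

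The gap is exactly where you say the hard part is: you never prove $x\cdot c(H)\le|S|$. For a component $K$ with $j<x$ edges you only assert that one ``must locate $x-j$ further distinct points of $S$ in the vicinity of $K$'' and that the spacing forced by $\delta<\frac{x}{2x-1}$ ``should force enough extra points, charged disjointly.'' This charging scheme is not carried out, and it is not clear it can be made to work: the ``dedicated point near an end'' of an edge of $G$ carrying no point of $S$ is itself either a vertex point or an interior point belonging to some \emph{other} component of $H$, and several small components may compete for the same nearby points, so disjointness of the charges is precisely the delicate issue. The paper reaches the same inequality by a different and cleaner route: it shows that a component $C$ with fewer than $x$ edges cannot exist unless $G[C]$ is all of $G$ (in which case $|E(G)|\ge x$ settles the count). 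Concretely, starting from an edge $\{u_0,u_1\}$ leaving $C$, it grows a path $u_0,u_1,\dots,u_{x'+1}$ inside $C$, tracking the position $\lambda_i$ of the unique $u_i$-close point guaranteed by \ref{def:close}, and proves by induction that $\lambda_i\le i(2\delta-1)$, since otherwise the points of $\{u_{i-1},u_i\}$ between $p(u_{i-1},u_i,\lambda_{i-1}+\delta)$ and $p(u_{i-1},u_i,\lambda_i-\delta+1)$ would be uncovered; the procedure can only terminate once $\lambda_{x'}\ge 1-\delta$, forcing $x'\ge\frac{1-\delta}{2\delta-1}>x-1$ and hence $|V(C)|\ge x+1$, i.e.\ $|E(C)|\ge x$. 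This inductive drift bound is the substantive content of the lemma and is missing from your proposal; to complete it you would need either to reproduce such an argument or to make your charging scheme precise, including the disjointness of the charges.
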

\begin{proof}
	Let $S_{\delta}$ be a humble minimum $\delta$-cover of $G$.
Let $E'$ be the set of edges $e$ where $S$ contains a point in the interior of $e$.
Then $E'$ induces a forest in $G$
	by property \ref{def:cycle}.
Consider a component $C$ of $G[E']$.
We claim that $|V(C)| \leq \frac{x+1}{x} |S_\delta \cap P(G[C])|$.
This then also implies $|V(G)| \leq \frac{x+1}{x} |S_\delta|$
	since every point $p \in S_\delta$ is either a vertex
	or is contained in $P(G[C])$ of a component $C$ of $G[E']$.

By property \ref{def:edge} we have $|S_{\delta} \cap P(G[C])| = |E(C)|$.
Let $\beta(C)$ be the set of edges of $G$
	incident to $C$ and incident to $V(G)\setminus C$.
In case $\beta(C)=\emptyset$, then $G[C]$ is the whole graph $G$.
We have that $|S_{\delta}| = |E(C)| = |E(G)| \geq x$ and $|V(C)|\leq x+1$.
That means $|V(G)|  \leq \frac{x+1}{x} |S_\delta|$.

In case $\beta(C)\neq\emptyset$, we show that $|V(C)|\geq x+1$.
We have $|E(C)|\geq x$,
	such that similarly to before, $V(C)$ has size at most
	$\frac{x+1}{x} |S_\delta \cap P(G[C]) |$.
We fix an edge $\{u_0,u_1\}\in\beta(C)$ where $u_1 \in V(C)$.
Inductively, we construct a path $Q= (u_{0},u_1,\dots,u_{x'},u_{x'+1})$ of some length $x'+1\geq 2$.
In step $i$, for increasing $i\geq 0$,
	we identify a point $p_i$ having distance $1-\lambda_i$ to $u_{i+1}$..
For step $i=0$,
	we use that there is a unique $u_{0}$-close point $p_{0}$,
	which is not in the interior of the edge $\{u_0,u_1\}$.
We define $-\lambda_0$
	as the distance of $p_0$ to $u_{0}$.
Consequently, $p_{0}$ has distance $1-\lambda_0$ to $u_1$.
We have $-\lambda_0 \leq \delta-1 \leq 1-\delta$, as $\delta \leq 1$.
We proceed with step $i=1$.
Step $i\geq 1$ is defined as follows.
\begin{itemize}
\item 
If there is a $u_{i}$-close point $p_i = p(u_i,u_{i+1},\lambda_i)$,
	let $\{u_{i},u_{i+1}\}$ be the edge containing the unique $u_i$-close point $p_i$,
	which must be distinct from the previous edge $\{u_{i-1},u_i\}$.
They define vertex $u_{i+1}$ and edge position $\lambda_{i}$.
Then proceed with step $i+1$.
\item
Otherwise, there is an incident edge $\{u_{i},u_{i+1}\}$
	where $S$ contains a point $p(u_i,u_{i+1},\lambda_i)$ with $\lambda_i \geq 1-\delta$,
	which exists as otherwise vertex $u_i$ is not covered.
They define vertex $u_{i+1}$ and edge position $\lambda_i$.
Again $\{u_{i},u_{i+1}\}$ is different from $\{u_{i-1},u_i\}$.
In this case, our procedure terminates.
Let $Q= (u_{0},u_1,\dots,u_{x'},u_{x'+1})$ be the resulting path.
\end{itemize}
	
	We claim that our procedure terminates and that $u_1,\dots,u_{x'+1}$ are distinct vertices in $V(C)$,
	such that $|V(C)|\geq x'+1$.
	Indeed, edge $\{u_i,u_{i+1}\}$ is distinct from its predecessor edge $\{u_{i-1},u_i\}$ by construction.
	By property \ref{def:cycle}, $G[C]$ is a tree
	and hence all of $u_1,u_2,\dots,u_{x'+1}$ are distinct.
	
	We claim that $\lambda_i \leq i(2\delta - 1)$ for $i\in\{0,\dots,x'\}$.
	Our proof works by induction over $i\in\{0,\dots,x'\}$.
	For $i=0$, we have $\lambda_0 \leq 0$ by definition.
	Consider $i\geq 1$.
	In case that $p_i$ is $u_i$-close, no edge incident to $u_i$ other than $\{u_{i},u_{i+1}\}$
	contains a $u_i$-close point, by property \ref{def:close}.
	The nearest point to $u_i$ on edge incident to $u_{i-1}$
	must be $u_{i-1}$-close and hence has distance $1-\lambda_{i-1}$ to $u_i$;
	which particularly is also true for the special case $i=1$ with $\lambda_0$.
	Hence $\lambda_i \leq \lambda_{i-1} + 2\delta -1$
	as otherwise the points $p(u_{i-1},u_i,\lambda)$
	with $\lambda \in (\lambda_{i-1}+\delta, \lambda_i-\delta+1)\neq\emptyset$ are not covered.
	By the induction hypothesis, $\lambda_{i-1} \leq (i-1) (2\delta-1)$.
	Thus $\lambda_i \leq i(2\delta-1)$.
	
	At termination, we have
	$1-\delta \leq \lambda_{x'} \leq x' (2\delta -1)$
	and hence that $x' \geq \frac{1-\delta}{2\delta-1}$, which increases with decreasing $\delta$.
	As $\delta < \frac{x}{2x-1}$, we conclude that $x'> x-1$ and hence $|V(C)| \geq x'+1 \geq x+1$.
	Finally, $|S_\delta \cap P(G[C])| = |E(C)| = |V(C)|-1 $.
	Hence $|V(C)|
	\leq \frac{x+1}{x} |S_{\delta} \cap P(G[C])|$.
	
	\smallskip
	
	The bound of \cref{lemma:bound:1:2:to:2:3} is asymptotically tight.
	Consider the $(x+1)$-subdivision of the star $K_{1,k}$, for an arbitrary large $k$.
	We have that $|V(G)|= x(k+1)+1$.
	For $\delta < \frac{x}{2x-1}$ there is a $\delta$-cover of size $1+kx$
	consisting of the center vertex and all points in distance $2\delta i$ from the center for $i \in \{1, \dots, k\}$.
\end{proof}

To compute a $\frac{x+1}{x}$-approximation of a $\delta$-cover 
	for instances with $|E(G)|<x$, we may use a brute-force algorithm as $x$ can be considered a constant.

\begin{restatable}{theorem}{ToTwoThirdsTheorem}
	\label{lemma:approx:1:2:to:2:3}
	Let $x \geq 2$ be integer and $\delta \in [ \frac{x+1}{2x+1}, \frac{x}{2x-1} )$.
	Then $\covn[\delta]{G}$ allows a polynomial time $\frac{x+1}{x}$-approximation algorithm.
\end{restatable}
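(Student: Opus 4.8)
The plan is to combine the structural bound of \Cref{lemma:bound:1:2:to:2:3} with the exact algorithms already available in the easy cases. Since $\delta$ is a fixed constant, we may write $\delta=a/b$ with $a,b\in\mathbb N^+$ fixed, and $x$ is a fixed constant as well. First I would reduce to connected components: an optimal $\delta$-cover of $G$ is the disjoint union of optimal $\delta$-covers of the connected components of $G$, and conversely the union of $\delta$-covers of the components is a $\delta$-cover of $G$. Hence it suffices to output, for each connected component $C$, a $\delta$-cover of $C$ of size at most $\frac{x+1}{x}\covn[\delta]{C}$; summing over components preserves the ratio.

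Then I would handle each connected component $C$ in one of three ways. If $C$ is a tree, compute an optimal $\delta$-cover of $C$ in polynomial time by Megiddo et al.~\cite{MegiddoT1983}, giving ratio $1$. If $C$ is not a tree and $|E(C)|\ge x$, output $V(C)$: this is a $\delta$-cover because every point $p(u,v,\lambda)$ lies within distance $\min(\lambda,1-\lambda)\le\tfrac12\le\delta$ of the vertex set (using $\delta\ge\frac{x+1}{2x+1}>\tfrac12$), and by \Cref{lemma:bound:1:2:to:2:3} we have $|V(C)|\le\frac{x+1}{x}\covn[\delta]{C}$. Finally, if $C$ is not a tree and $|E(C)|<x$, then $C$ has at most $x-1$ edges and at most $x$ vertices, i.e.\ constant size; compute an optimal $\delta$-cover of $C$ by brute force. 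For the brute force I would invoke \Cref{lemma:cov2bsimple} to restrict attention to $2b$-simple covers, and then use the finite candidate set $P'(C)$ from the proof of \Cref{lemma:covLogApprox}, which has $O(b\cdot|E(C)|)=O(1)$ points and suffices both to represent and to verify every $2b$-simple $\delta$-cover; enumerating all of its subsets and checking the covering condition on $P'(C)$ then finds an optimal $\delta$-cover in constant time.

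Putting the three cases together over all components yields a polynomial-time $\frac{x+1}{x}$-approximation. I do not expect a genuine obstacle here, as the substance is already in \Cref{lemma:bound:1:2:to:2:3}; the points requiring care are (1) checking that $V(C)$ is indeed a $\delta$-cover, which is precisely where the left endpoint $\delta\ge\frac{x+1}{2x+1}$ of the interval enters; (2) confirming that the per-component approximation ratio adds up to the same global ratio; and (3) spelling out that, because $\delta$ and hence $b$ are constants, the brute-force case $|E(C)|<x$ only costs constant time per component.
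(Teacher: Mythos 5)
Your proposal is correct and follows essentially the same route the paper intends: decompose into connected components, solve trees exactly via Megiddo--Tamir, brute-force the constant-size components with $|E(C)|<x$ using the $2b$-simple discretization, and for the remaining components output $V(C)$ (a $\delta$-cover since $\delta>\tfrac12$) whose size is bounded by \Cref{lemma:bound:1:2:to:2:3}. No gaps.
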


\subsection{$3/2$-Approximation Algorithm for $\delta \in [\frac{2}{3},\frac{3}{4})$}

For $\delta$ in the interval $[\frac{2}{3},1)$,
	\Cref{lemma:approx54to32}
	(and similarly the idea of \Cref{lemma:bound:1:2:to:2:3}) only give a $2$-approximation.
In case $\delta < \frac{3}{4}$, we improve this upper bound to $3/2$.
As the lower bound proof of \Cref{theorem:uglbs}(a) reveals,
	the neighbors of the leaves hide a vertex cover instance.
Our algorithm computes an approximate vertex cover for these vertices.
Then our analysis makes use of a carefully chosen partition of the input graph.

\smallskip

We define levels of the input graph $G$ as follows.
Let $L_0$ be the vertices of degree one of $G$,
	and $L_i = \{ u \in V(G) \mid d(u,u_0)=i, u_0 \in L_0\}$, for $i\in\{1,2\}$,
	possibly $L_1 \cap L_2 \neq \emptyset$.
For $i,j\in \{0,1,2\}$, with $i\neq j$, let $E_{i,j}$ be set of tuples $(u_i,u_j)$
	where $\{u_i,u_j\}\in E(G)$ and $u_i \in L_i$ and $u_j \in L_j$.
For $i \in \{0,1,2\}$, let $E_{i,i}$ be the set of edges $E(G) \cap (L_i \times L_i)$.

Our approximation algorithm proceeds as follows.
\begin{enumerate}
\item
Let $L$ be the set of points $p(u_0,u_1,\frac{2}{3})$ for every leaf-edge $(u_0,u_1) \in E_{0,1}$.
\item
We compute a $2$-approximation $X$ of the vertex cover of $G[E_{1,1}]$.
\item
Let $W = V(G)\setminus (L_0 \cup L_1)$ and output $S = L \cup X \cup W$.
\end{enumerate}

Since a $2$-approximation of a vertex cover
	can be computed in polynomial time,
	the above algorithm runs in polynomial time.
Further, we observe that the output $S$ is a $\frac{2}{3}$-cover,
	and hence also a $\delta$-cover for $\delta \in [\frac{2}{3},\frac{3}{4})$.
Indeed, the points $L \cup W$ $\frac{2}{3}$-cover every point on every edge $\{u_1,u_1'\} \in E(G) \setminus E_{1,1}$
	as well as every point in $\balll{u}{\frac{1}{3}}$ for every $u \in L_1$.
The remaining points are the points on every edge $\{u_1,u_1'\}\in E_{1,1}$
	with distance $\leq \frac{2}{3}$ to $u_1$ and to $u_1'$,
	which are covered by the vertex cover $X$ of $G[E_{1,1}]$. 
	
To bound the approximation ratio, we compare the output $S$ with
	a humble minimum (i.e., satisfying \ref{def:edge}-\ref{def:close})
	$\delta$-cover $S^\star$
	that also satisfies the following properties:
\begin{itemize}
\item \labeltext{$($B1$)$}{def:mid}
$p(u_1,u_1',\half) \notin S^\star$ for every edge $\{u_1,u_1'\} \in E_{1,1}$.
\item \labeltext{$($B2$)$}{def:1:2:far}
Every point $p(u_1,u_2,\lambda)\in S^\star$ with $(u_1,u_2)\in E_{1,2}$ and $\lambda\geq  1-\delta$,
	has $\lambda = 1$.
\item \labeltext{$($B3$)$}{def:vc}
$X^{\star\star} \coloneqq \{ u_1 \in L_1 \mid S^\star \cap P_{u_1} \neq \emptyset \} $
	is a vertex cover of the graph $G[E_{1,1}]$,
	where $P_{u_1} \coloneqq \balll{u_1}{\delta - \half} \cup (\ball{u_1}{\half} \cap P(G[E_{1,1}]))$
	for $u_1 \in L_1$.
\end{itemize}

\begin{restatable}{claim}{niceAOneATwoAThree}
	\label{claim:niceA1A2A3}
	There is a humble minimum $\delta$-cover $S^\star$ that satisfies properties \ref{def:mid}-\ref{def:vc}.
\end{restatable}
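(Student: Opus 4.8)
The plan is to start from a humble minimum $\delta$-cover $S^\star$ of $G$, which exists by the same surgery used in the proof of \Cref{lemma:niceSmallerOne} (that argument only relies on $\delta>\half$), and then to apply a short sequence of \emph{size-preserving} local modifications, each of which keeps $S^\star$ a minimum $\delta$-cover, keeps it humble (preserves \ref{def:edge}, \ref{def:cycle}, \ref{def:close}), and successively installs \ref{def:mid}, then \ref{def:vc}, then \ref{def:1:2:far}. Essentially every coverage check below reduces to the two inequalities $1-\delta\le 2\delta-1$ and $\delta-\half\le 2\delta-1$, both of which follow from $\delta\ge\frac23$. The first step is to normalise the leaf edges: for each leaf $u_0\in L_0$ with neighbour $u_1\in L_1$, if $u_1\notin S^\star$ then $S^\star$ meets the leaf edge in a single point (by \ref{def:edge}) at distance $\ge 1-\delta$ from $u_1$, and I move it to $p(u_0,u_1,\delta)$, i.e.\ to distance exactly $1-\delta$ from $u_1$. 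This point still covers $u_0$ and the whole leaf edge, additionally covers $\balll{u_1}{2\delta-1}\supseteq\balll{u_1}{\delta-\half}$, is neither $u_0$-close nor $u_1$-close, and does not change which edges carry interior points, so humbleness is untouched; afterwards every $u_1\in L_1$ has a point of $S^\star$ at distance $\le 1-\delta$ from $u_1$ on one of its leaf edges (or $u_1\in S^\star$).

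Next I would install \ref{def:mid} and deduce \ref{def:vc}. For every $\{u_1,u_1'\}\in E_{1,1}$ with $p(u_1,u_1',\half)\in S^\star$ — by \ref{def:edge} the only point of $S^\star$ on that edge — replace it by $p(u_1,u_1',1-\delta)$; since $1-\delta<\half$ this removes the midpoint and the new point is neither $u_1$- nor $u_1'$-close. It covers the whole edge together with $\balll{u_1}{2\delta-1}$, hence everything the midpoint covered except possibly $\balll{u_1'}{\delta-\half}$, which is covered by the leaf-edge point at $u_1'$ from the first step (using $2\delta-1\ge\delta-\half$); no point is moved onto a vertex and the edges with interior points are unchanged, so \ref{def:edge}--\ref{def:close} survive. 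Now \ref{def:vc} follows for free: fix $\{u_1,u_1'\}\in E_{1,1}$ and a point $q\in S^\star$ covering its midpoint $m$; since $q\neq m$ by \ref{def:mid}, either $q$ lies on this edge at a position $\neq\half$, whence $q\in\ball{u_1}{\half}\cap P(G[E_{1,1}])$ or symmetrically, or a shortest $q$--$m$ path enters the edge through an endpoint $u$ with $d(q,u)\le\delta-\half$, whence $q\in\balll{u}{\delta-\half}$; in every case an endpoint of the edge lies in $X^{\star\star}$, so $X^{\star\star}$ is a vertex cover of $G[E_{1,1}]$.

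For \ref{def:1:2:far} I would relocate, for each $(u_1,u_2)\in E_{1,2}$, every point $p(u_1,u_2,\lambda)\in S^\star$ with $\lambda\in[1-\delta,1)$ so that afterwards $\lambda=1$ (the point sits on $u_2$) or $\lambda<1-\delta$. I move it onto $u_2$ whenever that does not create a second $u_2$-close point, and otherwise push it past position $1-\delta$ toward $u_1$; in the first case the existing $u_2$-close point of $S^\star$ together with $u_2$ keeps coverage near $u_2$, and in the second the leaf-edge point at $u_1$ covers $\balll{u_1}{2\delta-1}$, which contains all edge positions $<1-\delta$, so nothing is lost; any edge that thereby acquires two points is re-neatified. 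Since this surgery only touches $E_{1,2}$-edges and vertices of $L_1\cup L_2$ and never places a point at the midpoint of an $E_{1,1}$-edge, it does not disturb \ref{def:mid} or \ref{def:vc}, so the resulting $S^\star$ is a humble minimum $\delta$-cover satisfying \ref{def:mid}--\ref{def:vc}.

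The main obstacle is precisely this last surgery. One has to argue that a point on an $E_{1,2}$-edge with $\lambda\in[1-\delta,1)$ can always be pushed into \emph{one} of the two permitted regimes without violating \ref{def:close} — that is, rule out or separately dispose of the configuration in which $S^\star$ simultaneously carries a $u_1$-close point and a $u_2$-close point with $p$ wedged strictly between them (a minimality argument: such a triple would admit an equally cheap reconfiguration) — and one must ensure that the cascade of re-neatifications triggered by placing points onto vertices does not re-create a forbidden close point, which I would handle by re-running the humble-ification procedure of \Cref{lemma:niceSmallerOne} and observing that its only nontrivial relocations put points at position $1-\delta\neq\half$ (harmless for \ref{def:mid}) and, on an $E_{1,2}$-edge, can instead be diverted onto the endpoint $u_2$ so as not to break \ref{def:1:2:far}. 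The coverage estimates themselves are all routine instances of the two inequalities quoted at the start.
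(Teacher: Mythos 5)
Your first three steps coincide with the paper's proof: the leaf-edge normalization (moving the unique point on each leaf edge to $p(u_0,u_1,\delta)$, which is exactly the paper's auxiliary property \ref{def:leaf}), the replacement of midpoints of $E_{1,1}$-edges by $p(u_1,u_1',1-\delta)$ for \ref{def:mid}, and the observation that \ref{def:vc} then comes for free because the midpoint of each $E_{1,1}$-edge must be $\delta$-covered by a point lying in $P_{u_1}\cup P_{u_1'}$. These parts are correct and essentially identical to the paper's argument.

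The gap is exactly where you flag it, in the surgery for \ref{def:1:2:far}, and it is a genuine gap rather than a routine verification. Your second branch (pushing $p(u_1,u_2,\lambda)$ with $\lambda\in[1-\delta,1)$ to a position $<1-\delta$) loses the coverage that $p$ provided near and beyond $u_2$ --- your justification only accounts for coverage of positions near $u_1$ --- and it can itself collide with \ref{def:close} at $u_1$ if a $u_1$-close point already exists, so the ``wedged'' configuration you describe is not actually disposed of. The paper avoids the case split entirely: it \emph{always} replaces $p$ by the vertex $u_2$, argues coverage via the leaf point $q\in\balll{u_1}{1-\delta}$ guaranteed by \ref{def:leaf} (every point covered by $p$ is covered by $q$ or reaches $p$ via $u_2$, and $d(p,u_2)\le\delta$), and preserves humbleness on the grounds that the new point is not in the interior of any edge. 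Your worry that placing a point on $u_2$ creates a second $u_2$-close point comes from reading $p(u_2,w,0)$ as $u_2$-close; the paper consistently treats \ref{def:close} as constraining only interior points (the proof of \Cref{lemma:niceSmallerOne} already replaces interior points by vertices and declares \ref{def:close} preserved ``as none of the new points is in the interior of an edge''). Under that reading your case split, and with it the unresolved obstacle, disappears; under your stricter reading the claim still needs the missing argument you defer.
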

\begin{claimproof}
	Let $S^\star$ be a humble minimum $\delta$-cover of $G$.
	We modify $S^\star$ such that it remains minimum,
	satisfies properties \ref{def:mid}, \ref{def:1:2:far} and \ref{def:vc}
	and additionally:
	\begin{itemize}
		\item \labeltext{$($B0$)$}{def:leaf}
		For every vertex $u_1 \in L_1$,
		$S^\star$ contains a point in $\balll{u_1}{1-\delta}$.
	\end{itemize}
	First, we modify $S^\star$ such that its satisfies \ref{def:leaf}.
	For every edge $(u_0,u_1) \in E_{0,1}$
	where $|S^\star \cap P(\{u_0,u_1\})|=\{p\}$,
	it holds that $p \in \balll{u_0}{\delta}$,
	and we replace $p$ in $S^\star$ by the point $p(u_0,u_1,\delta)\in S^\star$.
	Every edge $(u_0,u_1) \in E_{0,1}$,
	where $|S^\star \cap P(\{u_0,u_1\})|\geq 2$,
	already satisfies \ref{def:leaf} since by \ref{def:edge} we have $u_1 \in S^\star$.
	After this modification step, $S^\star$ is a minimum $\delta$-cover that satisfies \ref{def:leaf}.
	Further, $S^\star$ still satisfies \ref{def:edge}, \ref{def:cycle}, \ref{def:close}
	since $\{u_0,u_1\}$ contains at most one point from $S^\star$,
	edge $\{u_0,u_1\}$ is not on a cycle and $p(u_0,u_1,\delta)$ is not $u_1$-close.
	
	Regarding \ref{def:mid},
	for every edge $\{u_1,u_1'\}\in E_{1,1}$ where there is a point $p=p(u_1,u_1',\half) \in S^\star$,
	we replace $p$ in $S^\star$ by the point $p(u_1,u_1',1-\delta)$
	(where $u_1$ is an arbitrary chosen end vertex.)
	We note that 
	there are vertices $u_0,u_0'$ with neighborhood $N(u_0)=\{u_1\}$ and $N(u_0')=\{u_1'\}$.
	By property \ref{def:leaf}, $S^\star$ contains a point $q \in \balll{u_1}{1-\delta}$
	as well as a point $q' \in \balll{u_1'}{1-\delta}$.
	Points $q,q'$ cover every point covered by $p=p(u_1,u_1',\half)$
	besides a subset of the points on edge $\{u_1,u_1'\}$ with distance at most $2-2\delta\geq\frac{2}{3}$ to $u_1$.
	Hence $S^\star$ where $p$ is replaced by $p(u_1,u_1',1-\delta)$, remains a $\delta$-cover.
	Applying this modification exhaustively
	yields a minimum $\delta$-cover $S^\star$ that satisfies~\ref{def:mid},
	that still satisfies \ref{def:edge}, \ref{def:cycle}, \ref{def:close} and \ref{def:leaf}
	as the edges with points from $S^\star$ in their interior is the same
	and no new point is close to a vertex.
	
	Regarding~\ref{def:1:2:far},
	for every edge $\{u_1,u_2\}$ with $(u_1,u_2) \in E_{1,2}$,
	if there is a point $p= p(u_1,u_2,\lambda) \in S^\star$ with $\lambda \geq 1-\delta$,
	we replace $p$ in $S^\star$ by $u_2$.
	By property \ref{def:leaf}, $S^\star$ contains a point $q \in \balll{u_1}{1-\delta}$.
	Since $\lambda \geq 1-\lambda$, every point covered by $p$
	is also covered by $q$ or has a shortest path to $p$ via $u_2$.
	Further, $p$ and $u_2$ have distance at most $1+(1-\delta)\leq 2\delta$.
	Hence $S^\star$ where $p$ is replaced by $u_2$, remains a $\delta$-cover.
	Applying this modification exhaustively
	yields a minimum $\delta$-cover $S^\star$ that satisfies~\ref{def:1:2:far}
	that also satisfies \ref{def:edge}, \ref{def:cycle}, \ref{def:close}, \ref{def:leaf}
	and \ref{def:mid},
	since no new point is located in the interior of an edge.
	
	Regarding \ref{def:vc},
	for every edge $\{u_1,u_1'\}\in E_{1,1}$,
	we have $S^\star \cap \balll{p(u_1,u_1',\half)}{\delta} \neq \emptyset$
	in order to $\delta$-cover $p(u_1,u_1',\half)$.
	Recall that $p(u_1,u_1',\half)\notin S^\star$ by \ref{def:mid}.
	Hence for every edge $\{u_1,u_1'\}$ of $G[E_{1,1}]$, we have $S^\star \cap (P_{u_1} \cup P_{u_2})\neq\emptyset$,
	where $P_{u_1} \coloneqq \balll{u_1}{\delta - \half} \cup (\ball{u_1}{\half} \cap P(E_{1,1}))$.
	In other words, $X^{\star\star} = \{ u_1 \in L_1 \mid S^\star \cap P_{u_1} \neq \emptyset \} $
	is a vertex cover of the graph $G[E_{1,1}]$.
	Note that $P_{u_1}$ for $u_1 \in L_1$ are pairwise disjoint.
\end{claimproof}

Our goal is to identify disjoint sets $P_{V^\star}, P_{\CC_{\geq 2}}, P_L \subseteq P(G)$
	such that $S \subseteq P_{V^\star} \cup P_{\CC_{\geq 2}} \cup P_L$ and
	$|S \cap P|\leq \frac{3}{2} |S^\star \cap P|$ for each set $P \in \{P_{V^\star},P_{\CC_{\geq 2}}, P_L\} $,
	which we do below in (I), (II) and (III).
Then also $|S|\leq \frac{3}{2} |S^\star|$,
	and hence our algorithms outputs a $\frac{3}{2}$-approximation.
Let $V^\star \coloneqq S^\star \cap W$
	and let $P_{V^\star} \coloneqq \ball{V^\star}{1}$.
\smallskip

(I)
We claim that $|S \cap P_{V^\star}|\leq |S^\star \cap P_{V^\star}|$.
Any point $p \in S^\star$ contained in the interior of an edge incident to $V^\star \subseteq S^\star$
	contradicts property \ref{def:edge}.
In other words, $S^\star \cap P_{V^\star} = V^\star$.
Since also $S$ satisfies property \ref{def:edge} and $V^\star \subseteq W \subseteq S$,
	analogously to $S^\star$, we have $S \cap P_{V^\star} = V^\star$.
Then, trivially, $|S \cap P_{V^\star}|\leq |S^\star \cap P_{V^\star}|$.

\smallskip

Let $E_\emptyset^\star$ be the set of edges $\{u,v\}\in E(G)$ where $S^\star \cap P(G[\{u,v\}])=\emptyset$,
	which particularly means that $u,v \notin S^\star$.
Let $G^{\star}$ be the graph resulting from $G$ after removing vertices $V^\star \cup L_0 \cup L_1$ from $V(G)$
	with their incident edges and edges $E_\emptyset^\star$ from $E(G)$.
Let $\CC \subseteq 2^{V(G)}$ be the components of $G^\star$.
Let $P_{\CC_{\geq 2}}$ be the union of $P(G[C])$ over all components $C \in \CC$
	where $C$ contains at least $2$ edges.

\smallskip

(II)
We claim that $|S \cap P_{\CC_{\geq 2}} | \leq \frac{3}{2} |S^\star \cap P_{\CC_{\geq 2}}|$.
For every component $C \in \CC_{\geq2}$,
	set $S$ contains $|V(C)|\leq |E(C)|+1$ vertices while $S^\star$ contains $|E(C)|\geq 2$.
Hence $|S \cap P(G[C])| \leq \frac{3}{2} |S^\star \cap P(G[C])|$.
Since components $C\in \CC$ are disjoint, we conclude that $|S \cap P_{\CC_{\geq 2}} | \leq \frac{3}{2} |S^\star \cap P_{\CC_{\geq 2}}|$.

\smallskip

To show (III), we first consider $\CC$ more closely.
It is easy to see that no component $C \in \CC$ consists solely of single vertex $u$.
Indeed, the point of $S^\star$ closest to the vertex $u$
	is either a neighbor $v \in N(u)$
	or
	a point $p(v,u,\lambda)$ for a neighbor $v\in L_1$ with $\lambda<1-\delta$ by \ref{def:1:2:far},
	which in both cases do not $\delta$-cover $u$.
Let $\CC_1\subseteq\CC$ be the subset of components that consist of exactly 1 edge.
Let $X^\star \coloneqq S^\star \cap P_1$ where $P_1 \coloneqq \bigcup_{u_1 \in L_1} P_{u_1}$,
	and $P_{u_1}$ defined as in \ref{def:vc}.
By definition, $|X^{\star\star}| \leq |X^\star|$. 

\begin{restatable}{claim}{mappingThreeHalvesApprox}
	\label{claim:mapping32Approx}
	There is an injective mapping $f: \CC_1 \to X^\star$
	such that every component $C \in \CC_1$
	consists of an edge $\{u,v\}$ incident to the vertex $f(\{u,v\})$.
\end{restatable}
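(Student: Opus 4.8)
The plan is to show that every single-edge component of $G^\star$ forces a ``private'' point of $S^\star$ inside some $P_{u_1}$, and to let $f$ pick such a point, with injectivity secured through a matching argument. The first and most concrete step is to determine the local structure of a component $C=\{u,v\}\in\CC_1$. Since $u,v\in W\setminus V^\star$ we have $u,v\notin S^\star$, and since $\{u,v\}\in E(G^\star)$ it is not in $E_\emptyset^\star$; so by \ref{def:edge} there is a unique $p_C\in S^\star\cap P(G[\{u,v\}])$ and it lies in the interior. As $C$ is a single edge, every other edge incident to $u$ is incident to $V^\star\cup L_0\cup L_1$ or lies in $E_\emptyset^\star$, and likewise for $v$. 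I would then argue that $u$ is $\delta$-covered only by $p_C$: a covering point must lie on an edge incident to $u$ (otherwise its distance to $u$ is at least $1>\delta$); it is not on an edge into $V^\star$ (by \ref{def:edge} such an edge carries only that $V^\star$-vertex, at distance $1$), not on an $E_\emptyset^\star$-edge (empty), and not on an edge to $L_1$ (that edge lies in $E_{1,2}$, and by \ref{def:1:2:far} its $S^\star$-points other than the $L_2$-endpoint are within $1-\delta$ of the $L_1$-endpoint, hence at distance $>\delta$ from $u$); and $u$ has no neighbour in $L_0$ since $u\notin L_1$. Symmetrically, $v$ is covered only by $p_C$. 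Hence $d(u,p_C)\le\delta$ and $d(v,p_C)\le\delta$, so $\lambda_C:=d(u,p_C)\in[1-\delta,\delta]$, and since $\delta<\tfrac34$ the point $p_C$ reaches at most $2\delta-1<\tfrac12$ into every edge incident to $u$ or $v$; in particular it covers the midpoint of no such edge.

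Using this I would reduce the claim to a matching problem. Along the edges incident to $u$ and $v$ one can track, edge by edge, which points of $S^\star$ are responsible for the portions that $p_C$ does not reach; each such ``hand-off'' is to a vertex in $V^\star$, a vertex in $L_1$, or along an $E_\emptyset^\star$-edge into another single-edge component (a surviving edge to another $W\setminus V^\star$ vertex would contradict $C\in\CC_1$). Because $[2-2\delta,2\delta-1]=\emptyset$ for $\delta<\tfrac34$, at most one of $u,v$ can be served through a $V^\star$-neighbour, so the hand-off leads, possibly after following a finite chain of $E_\emptyset^\star$-edges and single-edge components, to a vertex $u_1\in L_1$ whose neighbourhood forces a point of $S^\star$ into $P_{u_1}$ (here \ref{def:close} and the fact that $\delta-\half<1-\delta$ are used to place the forced point inside $\balll{u_1}{\delta-\half}\subseteq P_{u_1}$). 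This defines a bipartite ``reachability'' graph between $\CC_1$ and $X^\star$ in which each $C$ has a neighbour, and in which a matched pair always has $f(C)$ lying on an edge incident to the relevant endpoint of $C$'s chain.

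For injectivity I would invoke Hall's theorem on this bipartite graph. The sets $P_{u_1}$ are pairwise disjoint (proof of \ref{claim:niceA1A2A3}) and each carries at most one $u_1$-close point of $S^\star$ by \ref{def:close}, so collisions can only come from several chains competing for the same $u_1$; a Hall-deficient set $\mathcal A\subseteq\CC_1$ would let us contract the corresponding $E_\emptyset^\star$-chains and redistribute the points $\{p_C:C\in\mathcal A\}$ to obtain a $\delta$-cover of $G$ smaller than $S^\star$, contradicting its minimality. A matching saturating $\CC_1$ then gives the desired $f$, and the incidence statement of the claim holds by the construction of the reachability graph.

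The step I expect to be hardest is the reduction to and verification of the Hall condition: tracking the $E_\emptyset^\star$-chains precisely enough to produce, for each single-edge component, a genuinely private element of $X^\star$, and turning a hypothetical shortage into an explicit smaller $\delta$-cover. The local structure of a single component, by contrast, is a routine case distinction driven by $\delta<\tfrac34$ and properties \ref{def:edge}, \ref{def:cycle}, \ref{def:close} and \ref{def:1:2:far}.
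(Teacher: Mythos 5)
Your first paragraph is sound and essentially reproduces the opening of the paper's proof; the conclusion $d(u,p_C),d(v,p_C)\in[1-\delta,\delta]$ is even a slight sharpening of the paper's ``w.l.o.g.\ $\lambda\geq\half$''. The difficulty is that everything after it --- the part that actually produces an element of $X^\star$ for each component and makes the map injective --- is only gestured at, and the gesture points in the wrong direction. There is no chain to follow: the argument is purely local. Fix the endpoint $u$ of $C$ with $d(u,p_C)\geq\half$ and any neighbour $u'\neq v$ (one exists since $u\notin L_0$). The points of $\{u,u'\}$ just beyond the reach of $p_C$ lie at distance greater than $\delta$ from $u'$, because $\tfrac32-\delta>\delta$ when $\delta<\tfrac34$; so they can be covered neither through $u'$ nor (by your own case analysis at $u$) through $u$, and must be covered by a point in the interior of $\{u,u'\}$ itself. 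This simultaneously excludes $\{u,u'\}\in E_\emptyset^\star$ and $u'\in V^\star$ (the latter via \ref{def:edge}), and since $C$ is a single-edge component of $G^\star$ and $u\notin L_1$, it forces $u'\in L_1$ together with an interior point of $\{u,u'\}$ near $u'$. In particular no ``hand-off along a chain of $E_\emptyset^\star$-edges and single-edge components'' ever arises --- and if your assignment did travel along such a chain, the assigned point would not lie on an edge incident to $C$, so the incidence part of the claim would fail. Note also that membership of the forced point in $\balll{u_1}{\delta-\half}$ is asserted rather than proved: the covering constraint yields $\lambda'\geq\tfrac32-2\delta$ and \ref{def:1:2:far} yields $\lambda'<1-\delta$, and since $1-\delta>\delta-\half$ for $\delta<\tfrac34$ (the very inequality you cite), the window $(\delta-\half,1-\delta)$ still has to be closed; \ref{def:close} does not do that job.

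The second genuine gap is injectivity. Your argument rests entirely on the sentence that a Hall-deficient subset of $\CC_1$ would let one ``contract the chains and redistribute the points to obtain a smaller $\delta$-cover''; this is not a construction, and it is not at all clear it can be turned into one. The paper needs none of this machinery: once the assignment is local, $f(C)$ lies in the interior of an edge $\{u,u'\}$ with $u\in V(C)$ and $u'\in L_1$, distinct components of $G^\star$ are vertex-disjoint, so distinct components are assigned points on distinct edges and hence distinct points. Making the existence argument local, as above, renders injectivity a one-line observation and the Hall condition superfluous.
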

\begin{claimproof}
	Consider a component $C\in \CC_1$ consisting of a single edge $\{u,v\}$.
	Then $S^\star \cap P(G[\{u,v\}])$ contains exactly one point $p(u,v,\lambda)$
	with, up to symmetry, $\lambda \geq \half$, by property \ref{def:edge}.
	We have that $u,v \notin L_0 \cup L_1$ by the definition of $G^\star$.
	Particularly, there is at least one neighbor $u' \in N(u) \setminus  \{v\}$.
	Then there is a point $p(u',u,\lambda') \in S^\star$
	with $\lambda' \geq \frac{3}{2}-2\delta \in (0,\frac{1}{6}]$,
	as otherwise the points around $p(u',u, \frac{3}{2}-\delta)$ are not $\delta$-covered.
	Hence $\{u',u\} \notin E^\star_\emptyset$.
	Further $C$ as a component of $G^\star$ is not incident to an edge $\{v',v''\}$
	with $(v',v'') \in E_{0,1} \cup E_{1,1}$.
	The only remaining possibility is that $(u',u) \in E_{1,2}$.
	We have $\frac{3}{2}-2\delta \leq \delta-\half$ since $\delta \geq \frac{3}{2}$.
	That means $p(u,u',\lambda') \in S^\star \cap P_{u_1}$.
	We assign $f(C)$ to $p(u,u',\lambda')$.
	Since $\lambda'>0$ and no components of $\CC_1$ share incident edges incident to $L_1$,
	mapping $f$ is injective.
\end{claimproof}

(III)
Let $P_{L} \coloneqq P(G[E_{0,1}]) \cup \bigcup_{u_1 \in L_1} P_{u_1} \cup P_{\CC_1}$
	where $P_{\CC_1}$ is the union of $P(G[\{u,v\}])$ for every edge $\{u,v\}$ forming a component in $\CC_1$.
We claim that $|S \cap P_L| \leq \frac{3}{2}|S^\star \cap P_L|$.
Let $L^\star \coloneqq S^\star \cap \balll{L_0}{\delta}$,
	which has $|L^\star| = |L_0| = |L|$.
Let $X^\star_C \coloneqq S^\star \cap \bigcup_{C\in \CC_1} P(G[C])$.
We note that $|X_C^\star| \leq |\CC_1|$ since $S^\star$ contains at most one vertex of every $P(G[C])$,
	by property \ref{def:edge}.
Since further $f: \CC_1 \to X^\star$ is injective, we have $|X_C^\star| \leq |\CC_1| \leq |X^\star|$.
The approximation ratio restricted to $P_L$ then is
$$
\frac{|S \cap P_L|}{|S^\star \cap P_L|}
	= \frac{|L \cup X \cup \bigcup_{C\in\CC_1} V(C)|}{|L^\star \cup X^\star \cup X_C^\star|}
	\leq \frac{|L|+|X|+2|X^\star|}{|L|+2|X^\star|}.
$$
We recall that $|X^{\star\star}|\leq |X^\star|$.
The set $X$, of size at most $|L_1|\leq|L|$,
	is a $2$-approximation of $X^{\star\star}$,
	and hence $|X|\leq \min\{ |L|, 2|X^\star|\}$.
In case $|X^\star|\leq \half|L|$,
	the above ratio is at most $(2|L|+4|X^\star|) / (|L|+ 2|X^\star|)$, which is at most $\frac{3}{2}$.
Otherwise, in case $|X^\star| > \half|L|$,
	the above ration is  at most $(2|L|+2|X^\star|)/(|L|+2|X^\star|)$,
	which also is at most $\frac{3}{2}$.

\smallskip

Sets $P_{V^\star}, P_{\CC_{\geq 2}}, P_L$ are disjoint
	by their definition.
It remains to show that $S \subseteq P_{V^\star} \cup P_{\CC_{\geq 2}} \cup P_L$.
Every point $p \in S \setminus V(G)$, has form $p(u_0,u_1,\frac{2}{3})$ for some edge $(u_0,u_1) \in E_{0,1}$,
	and hence $p \in P(G[E_{0,1}]) \subseteq P_L$.
Every point $p \in S \cap V(G)$
	is either contained in $V^\star$ or in $V(G^\star)$.
In the former case, $p \in P_{V^\star}$.
In the latter case $p \in P_{\CC_1} \subseteq P_L \cup P_{\CC_{\geq 2}}$.
Hence: 

\begin{restatable}{theorem}{approxThreeHalvesToThreeQuartersTheorem}
	\label{lemma:approx:3:2:to:3:4}
	For every $\delta \in [\frac{2}{3}, \frac{3}{4})$,
	there is a $\frac{3}{2}$-approximation algorithm for $\delta$-\textsc{Covering}.
\end{restatable}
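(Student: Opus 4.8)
The plan is to collect the pieces established in the preceding discussion. First I would observe that the three-step procedure (place the points $L$, compute a $2$-approximate vertex cover $X$ of $G[E_{1,1}]$, output $S = L \cup X \cup W$) runs in polynomial time, since a $2$-approximation for \vc{} is computable in polynomial time and the other steps are linear. Then I would record correctness: $S$ is a $\frac{2}{3}$-cover, hence a $\delta$-cover for every $\delta \in [\frac{2}{3}, \frac{3}{4})$, because $L \cup W$ covers every point of every edge outside $E_{1,1}$ as well as the ball $\balll{u}{\frac{1}{3}}$ around each $u \in L_1$, while the vertex cover $X$ of $G[E_{1,1}]$ covers the remaining segments of the edges in $E_{1,1}$.

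For the ratio, I would fix the humble minimum $\delta$-cover $S^\star$ satisfying \ref{def:mid}--\ref{def:vc}, which exists by \Cref{claim:niceA1A2A3}, and compare $S$ with $S^\star$ through the three pairwise-disjoint point sets $P_{V^\star}$, $P_{\CC_{\geq 2}}$, $P_L$ defined above, establishing $|S \cap P| \leq \frac{3}{2}\,|S^\star \cap P|$ for each of them. On $P_{V^\star}$ neatness (\ref{def:edge}) forces $S \cap P_{V^\star} = S^\star \cap P_{V^\star} = V^\star$, so the ratio is $1$. On $P_{\CC_{\geq 2}}$ the elementary bound $|V(C)| \leq |E(C)| + 1 \leq \frac{3}{2}|E(C)|$ for components with $|E(C)| \geq 2$ gives the ratio $\frac{3}{2}$. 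On $P_L$ I would use the injective map $f \colon \CC_1 \to X^\star$ from \Cref{claim:mapping32Approx} to bound $|X_C^\star|$ by $|X^\star|$, then bound $|X| \leq \min\{|L|, 2|X^\star|\}$ (the first because $|L_1| \leq |L|$, the second because $X$ is a $2$-approximation and $|X^{\star\star}| \leq |X^\star|$), and finish with the two-case estimate on whether $|X^\star| \leq \frac{1}{2}|L|$. Since the three sets are disjoint and $S$ is contained in their union, summing the local inequalities yields $|S| \leq \frac{3}{2}|S^\star| = \frac{3}{2} \cdot \covn{G}$, which is the claimed approximation guarantee.

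The substantive arguments — the existence of a well-behaved $S^\star$ (\Cref{claim:niceA1A2A3}), the injectivity of $f$ (\Cref{claim:mapping32Approx}), and the exact accounting of how many points $S$ and $S^\star$ place in each $C^\star$ — are the real content and are already in place, so the only remaining obstacle is to verify exhaustiveness, i.e.\ that $S \subseteq P_{V^\star} \cup P_{\CC_{\geq 2}} \cup P_L$. The hard part there is just the bookkeeping: every interior point of $S$ has the form $p(u_0,u_1,\frac{2}{3})$ with $(u_0,u_1) \in E_{0,1}$ and thus lies in $P(G[E_{0,1}]) \subseteq P_L$, and every vertex of $S$ lies in $V^\star \subseteq P_{V^\star}$ or in $V(G^\star)$, where it is covered by $P_{\CC_1} \cup P_{\CC_{\geq 2}} \subseteq P_L \cup P_{\CC_{\geq 2}}$. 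With exhaustiveness in hand the theorem follows immediately.
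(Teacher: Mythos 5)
Your proposal is correct and follows the paper's own argument essentially verbatim: the same algorithm, the same correctness check that $S$ is a $\frac{2}{3}$-cover, the same decomposition into $P_{V^\star}$, $P_{\CC_{\geq 2}}$, $P_L$ with the per-set ratio bounds via \Cref{claim:niceA1A2A3} and \Cref{claim:mapping32Approx}, and the same exhaustiveness check. Nothing to add.
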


There is a gap to the lower bound under UGC of $\frac{4}{3}$, as shown 
\Cref{theorem:uglbs}(a). 
However, our algorithm as a \emph{uniform} algorithm for $\delta \in [\frac{2}{3}, \frac{3}{4})$,
	is reasonably close to the lower bound in the following sense.
The upper and lower bound only differ by a factor of $\frac{3}{2}/\frac{4}{3} = \frac{9}{8}$.
Hence any improvement of the approximation factor for such a uniform algorithm implies
	a better approximation guarantee for the class of graphs $\GG$
	where the minimum size of a $\delta$-cover can be apart by a factor of asymptotically 
	$\frac{9}{8}$. 
An example is
	a path of length $24x$ for $x\geq 1$.
A minimum $\frac{3}{4}$-cover has size $18(x-1)$
	while a minimum $(\frac{3}{4}-\varepsilon)$-cover has size $16(x-1)+1$ for a small enough $\varepsilon>0$.

\section{Approximation Algorithms for $\delta < \frac{1}{2}$}
\label{section:approx:small}

By the Subdivision Reduction (\Cref{def:SubRed}), an upper bound for $\delta > \half$
	implies the same upper bound for $\frac{\delta}{c}$ for $c \in \mathbb{N}^+$.
This can be further improved by using the Translation Reduction (\Cref{def:TransRed}).
Let $\Delta'(G)$ denote the average vertex degree of $G$.

\begin{restatable}{lemma}{simpleApproxLargerEven}
	\label{lemma:simpleApproxLargerEven}
	For every $k \in \mathbb N^+$, $\delta \in (\frac{1}{2k+2}, \frac{1}{2k+1})$ and graph $G$, $\covn[\frac{1}{2k+2}]{G} \leq {(1 + \frac{1}{k\Delta'(G) + 1})} \cdot \covn{G}$.
\end{restatable}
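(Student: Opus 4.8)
The plan is to turn the statement into a one-line comparison of two closed forms: I use the Translation Reduction (\Cref{def:TransRed}) to compute $\covn[\frac{1}{2k+2}]{G}$ exactly, and the Subdivision Reduction (\Cref{def:SubRed}) together with \Cref{lemma:bound:delta:smaller:1} for a matching lower bound on $\covn{G}$.

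\textbf{Upper bound on $\covn[\frac{1}{2k+2}]{G}$.} Since every point of an edge lies within distance $\half$ of one of its endpoints, $V(G)$ is a $\half$-cover, so $\covn[\half]{G}\le|V(G)|$. Solving $\frac{\delta}{2\delta+1}=\frac{1}{2k+2}$ gives $\delta=\frac{1}{2k}$, so the size identity of the Translation Reduction from~\cite{DBLP:journals/mp/HartmannLW22} (a minimum $\delta$-cover has size $m$ iff a minimum $\frac{\delta}{2\delta+1}$-cover has size $m+|E(G)|$) yields $\covn[\frac{1}{2k+2}]{G}=\covn[\frac{1}{2k}]{G}+|E(G)|$. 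Iterating this identity $k$ times along $\frac{1}{2k+2},\frac{1}{2k},\frac{1}{2k-2},\dots,\half$ gives $\covn[\frac{1}{2k+2}]{G}=\covn[\half]{G}+k|E(G)|\le|V(G)|+k|E(G)|$.

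\textbf{Lower bound on $\covn{G}$.} Apply the Subdivision Reduction with parameter $x=2k+1$. Its size identity gives $\covn{G}=\covn[(2k+1)\delta]{G_{2k+1}}$, and $(2k+1)\delta<1$ because $\delta<\frac{1}{2k+1}$; hence \Cref{lemma:bound:delta:smaller:1} applied to $G_{2k+1}$ with covering range $(2k+1)\delta$ gives $\covn[(2k+1)\delta]{G_{2k+1}}\ge\half|V(G_{2k+1})|$. Subdividing every edge into $2k+1$ edges introduces $2k$ new vertices per edge, so $|V(G_{2k+1})|=|V(G)|+2k|E(G)|$ and therefore $\covn{G}\ge\half\bigl(|V(G)|+2k|E(G)|\bigr)$.

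\textbf{Combining, and the main point.} Dividing the two bounds,
\[
\frac{\covn[\frac{1}{2k+2}]{G}}{\covn{G}}
\;\le\; \frac{|V(G)|+k|E(G)|}{\half\bigl(|V(G)|+2k|E(G)|\bigr)}
\;=\; 1+\frac{|V(G)|}{|V(G)|+2k|E(G)|}
\;=\; 1+\frac{1}{1+k\Delta'(G)},
\]
where the last equality divides numerator and denominator of the fraction by $|V(G)|$ and uses $\Delta'(G)=2|E(G)|/|V(G)|$. I do not expect a genuine technical obstacle here; the only real insight is recognizing that these two reductions collapse everything into this computation. The things I would check with care are that the chain of Translation Reductions is legitimate at each intermediate unit fraction $\tfrac{1}{2k},\tfrac{1}{2k-2},\dots$, that no connectivity or non-tree hypothesis is secretly needed (the $\half$-cover estimate is used only in its easy direction, and \Cref{lemma:bound:delta:smaller:1} and both size identities hold for arbitrary graphs), and the degenerate case $E(G)=\varnothing$, where both sides reduce to $|V(G)|$ versus $\half|V(G)|$ and the inequality is immediate.
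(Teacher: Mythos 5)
Your proof is correct and lands on exactly the same two quantities as the paper's: an upper bound of $|V(G)|+k|E(G)|$ on $\covn[\frac{1}{2k+2}]{G}$ and a lower bound of $\tfrac12|V(G)|+k|E(G)|$ on $\covn{G}$, followed by the same ratio computation. The only genuine difference is in how the lower bound is obtained. The paper applies the Translation Reduction $k$ times to pull the bound $\covn[\delta']{G}\ge\tfrac12|V(G)|$ for $\delta'\in(\tfrac12,1)$ down to $\delta<\frac{1}{2k+1}$, picking up $k|E(G)|$ along the way; you instead apply a single Subdivision Reduction with $x=2k+1$ and invoke \Cref{lemma:bound:delta:smaller:1} on $G_{2k+1}$, where the extra $k|E(G)|$ appears through the $2k|E(G)|$ subdivision vertices. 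Both are legitimate since the subdivision identity $\covn[\delta]{G}=\covn[x\delta]{G_x}$ and \Cref{lemma:bound:delta:smaller:1} hold for arbitrary graphs; your route has the mild advantage of invoking the lemma only once with no iteration, while the paper's route keeps everything on the original graph. For the upper bound you chain the Translation Reduction size identity down through $\frac{1}{2k},\frac{1}{2k-2},\dots,\tfrac12$, which is just the bookkeeping behind the paper's explicit cover ($V(G)$ plus $k$ evenly spaced points per edge); the caveats you flag (validity at the intermediate unit fractions, no connectivity hypothesis, $E(G)=\emptyset$) all check out against the statements as given in the paper.
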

\begin{proof}
	By \cref{lemma:bound:delta:smaller:1}, for every $\delta \in (\frac{1}{2}, 1)$
	and non-tree connected graph $G$,
	$\covn[\half]{G} = |V(G)| \leq 2 \cdot \covn[\delta]{G}$.
	Consider a positive integer $k$.
	Applying the Translation Reduction $k$ times yields that a $\delta$-cover of $G$,
	for $\delta < \frac{1}{2k+1}$,
	has size at least $k|E(G)| + \frac{1}{2}|V(G)|$.
	On the other hand there is a $\frac{1}{2k+2}$-cover $S$ of $G$ with size $k|E(G)| + |V(G)|$
	consisting of a $V(G)$ and on every edge $k$ centered points with pairwise distance at least $2\delta$.
	Hence $S$ is an approximation with factor
	\[
	\frac{k|E(G)| + |V(G)|}{k|E(G)| + \frac{1}{2}|V(G)|} = 
	\frac{\left(k\frac{|E(G)|}{|V(G)|}+1\right)|V(G)|}{\left(k\frac{|E(G)|}{|V(G)|}+\frac{1}{2}\right)|V(G)|} = 
	\frac{\frac{k}{2}\Delta'(G)+1}{\frac{k}{2}\Delta'(G)+\frac{1}{2}} = 1+\frac{1}{k\Delta'(G)+1}.
	\]
\end{proof}

To derive bounds for the remaining intervals, $(\frac{1}{3},\half), (\frac{1}{5},\frac{1}{4})$ and so on,
	we recall the \Cref{equation:1:cover:size}:
$
	|S^*| \;=\; \nu(G_0) + \nu(G_{\geq 3}) + c_{\geq 3} + \text{vc}(G_1)
	\;\leq\; \tfrac{1}{2}(|V(G)| + c_{\geq 3})
	\;\leq\; \tfrac{2}{3}|V(G)|.
$
Further, we use that no component of $G_{\geq 3}$ is a tree, and thus $|E(G)| \geq |V(G)| - 1 + c_{\geq 3}$.

\begin{restatable}{lemma}{simpleApproxLargerOdd}
	\label{lemma:simpleApproxLargerOdd}
	Let $k \in \mathbb N^+$, $\delta \in (\frac{1}{2k+1}, \frac{1}{2k})$.
	For every $\varepsilon > 0$, there is an $n_0$ such that for every graph $G$
	on at least $n_0$ vertices,
	$\covn[\frac{1}{2k+1}]{G} \leq \min\{{1 + \frac{4}{3k\Delta'(G)}},\allowbreak {1 + \frac{1}{2k} + \varepsilon}\} \cdot \covn{G}$.
\end{restatable}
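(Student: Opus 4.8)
The plan is to express $\covn[\frac{1}{2k+1}]{G}$ exactly and to lower-bound $\covn{G}$, both through iterated applications of the Translation Reduction (\cref{def:TransRed}), and then to control the ratio with the two complementary upper bounds on $\covn[1]{G}$ that the Edmonds--Gallai decomposition supplies. Write $m=|E(G)|$, $n=|V(G)|$, and assume (as the preliminaries permit) that $G$ is connected and non-tree, so $\Delta'(G)=2m/n$. Put $\tau(x)=\frac{x}{2x+1}$, so that the identity of Hartmann et al.\ reads $\covn[\tau(x)]{G}=\covn[x]{G}+m$ for every $x>0$. Since $\tau(\frac{1}{2j+1})=\frac{1}{2j+3}$, iterating from $x=1$ a total of $k$ times gives $\covn[\frac{1}{2k+1}]{G}=\covn[1]{G}+km$. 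For the denominator, $\tau^{k}$ is an increasing bijection of $(1,\infty)$ onto $(\frac{1}{2k+1},\frac{1}{2k})$ (as $\tau^{k}(1)=\frac{1}{2k+1}$ and $\tau^{k}(x)\to\frac{1}{2k}$ as $x\to\infty$), so there is a $\delta''>1$ with $\tau^{k}(\delta'')=\delta$; applying the identity along $\delta''\mapsto\tau(\delta'')\mapsto\dots\mapsto\tau^{k}(\delta'')=\delta$ gives $\covn{G}=\covn[\delta'']{G}+km\ge 1+km\ge km$, using only that a cover of a nonempty graph is nonempty.

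Now \cref{equation:1:cover:size} gives $\covn[1]{G}\le\frac12(n+c_{\geq 3})$. Combined with $c_{\geq 3}\le\frac13 n$ this yields $\covn[1]{G}\le\frac23 n$; combined with $m\ge n-1+c_{\geq 3}$ (no component of $G_{\geq 3}$ is a tree) it yields $\covn[1]{G}\le\frac12(m+1)$. Hence
\[
  \frac{\covn[\frac{1}{2k+1}]{G}}{\covn{G}}\;\le\;\frac{\covn[1]{G}+km}{km}\;=\;1+\frac{\covn[1]{G}}{km},
\]
and plugging in the first bound (with $\Delta'(G)=2m/n$) turns the right-hand side into $1+\frac{4}{3k\Delta'(G)}$, while the second turns it into $1+\frac{1}{2k}+\frac{1}{2km}$, which is at most $1+\frac{1}{2k}+\varepsilon$ as soon as $m\ge n\ge n_0:=\lceil\frac{1}{2k\varepsilon}\rceil$. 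Taking the minimum of the two estimates proves the claimed inequality; algorithmically, $\frac{1}{2k+1}$ is a unit fraction, so a minimum $\frac{1}{2k+1}$-cover is computable in polynomial time by \cref{lemma:covering:p} and, since $\frac{1}{2k+1}<\delta$, is a valid $\delta$-cover.

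The one step that deserves care is the choice of lower bound for $\covn{G}$. The tempting move is to unfold the Translation identity only $k-1$ times, landing in $(\frac13,\frac12)$ where \cref{lemma:bound:delta:smaller:1} would give $\covn{G}\ge(k-1)m+\frac n2$; but this estimate is too weak for dense graphs (for $k=1$ it is vacuous, and for $k\ge2$ it only yields ratio $\tfrac{k}{k-1}>1+\tfrac{1}{2k}$ as $\Delta'(G)\to\infty$). Unfolding the full $k$ steps and settling for the cruder $\covn{G}\ge km$ is the right choice, because the resulting slack is then absorbed by using \emph{both} bounds on $\covn[1]{G}$ — the $\frac23 n$ bound governing the ratio when $\Delta'(G)$ is large, the $\frac12(m+1)$ bound when $\Delta'(G)$ is near $2$. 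Everything else is routine; as a sanity check the bound is essentially tight on odd cycles $C_n$, where $\covn{C_n}=kn+\Theta(1)$ and $\covn[\frac{1}{2k+1}]{C_n}=\frac{n+1}{2}+kn$.
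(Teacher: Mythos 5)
Your proof is correct and follows essentially the same route as the paper: bound $\covn[\frac{1}{2k+1}]{G}$ by $\covn[1]{G}+k|E(G)|$ via the Translation Reduction, lower-bound $\covn{G}$ by $k|E(G)|$, and split on the density of $G$ using the two Edmonds--Gallai bounds on $\covn[1]{G}$. The only cosmetic difference is that you obtain $\covn{G}\geq k|E(G)|$ by inverting the Translation identity $k$ times, whereas the paper proves it directly by noting that fewer than $k$ interior points on an edge leave a positive fraction $1-2k\delta$ of that edge uncovered.
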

\begin{proof}
	Consider a graph $G$, integer $k \in \mathbb N^+$ and $\delta < \frac{1}{2k}$.
	We claim that a $\delta$-covering set $S$ has size at least $k|E(G)|$.
	Assuming otherwise, $S$ contains at most $k-1$ points from the interior of some edge $\{u,v\}$ of $G$.
	Let these points be $p(u,v,\lambda_1), \dots, p(u,v,\lambda_{k-1})$.
	We may assume that $\lambda_{i+1} - \lambda_{i} = 2\delta$ for all $i \in \{1,\dots,k-2\}$.
	Then, even if $S$ contains $u$ and $v$, the fraction of points in $P(G[\{u,v\}])$ that are covered is at 
	most 
	$2k\delta < 1$, which is a contradiction to $S$ being a $\delta$-cover.
	
	Further, by the Translation Reduction, there is a $\frac{1}{2k+1}$-cover of $G$
	of size at most $k |E(G)| + \covn[1]{G}$.
	Hence $\covn[\frac{1}{2k+1}]{G} \leq (1 + \frac{\covn[1]{G}}{k\cdot |E(G)|}) \cdot \covn[\delta]{G}$.
	Depending on the average degree $\Delta'(G)$, we bound this ratio as follows.
	First, if $\Delta'(G)$ is large, we use the above analysis that $\covn[1]{G} \leq \frac{2}{3}|V(G)|$ to 
	bounds 
	the factor to $1 + \frac{4}{3k\Delta'(G)}$.
	Otherwise, we combine the observation from \Cref{equation:1:cover:size} that $\covn[1]{G} \leq \frac{1}{2}(|V(G)| + c_{\geq 3})$ 
	with 
	$|E(G)| \geq |V(G)| + c_{\geq 3} - 1$ to bound the factor to $1 + \frac{1}{2k}\frac{|V(G)| + c_{\geq 
			3}}{|V(G)| + 
		c_{\geq 3} - 1}$.
	The latter follows from the fact that due to the components of $G_{\geq 3}$ being factor-critical, they cannot be trees.
	Since $\lim_{n \rightarrow \infty}\frac{n + c_{\geq 3}}{n + c_{\geq 3} - 1} = 1$, for every 
	$\varepsilon > 0$, 
	there is an $n_0$ such that the factor for any graph with at least $n_0$ vertices is at most $1 + 
	\frac{1}{2k} + 
	\varepsilon$.
\end{proof}

As there are only constant many graphs on less than $n_0$ vertices,
	we obtain an approximation of $\delta$-cover with approximation ratio
	$\min\{{1 + \frac{4}{3k\Delta'(G)}},\allowbreak {1 + \frac{1}{2k} + \varepsilon}\}$
	for any $\varepsilon >0$.

\newpage
\bibliographystyle{plain}
\bibliography{lit}

\begin{thebibliography}{10}

\bibitem{ALIMONTI2000123}
Paola Alimonti and Viggo Kann.
\newblock Some {APX}-completeness results for cubic graphs.
\newblock {\em Theor. Comput. Sci.}, 237(1-2):123--134, 2000.
\newblock URL: \url{https://doi.org/10.1016/S0304-3975(98)00158-3}.

\bibitem{chlebik2008approximation}
Miroslav Chleb{\'{\i}}k and Janka Chleb{\'{\i}}kov{\'{a}}.
\newblock Approximation hardness of dominating set problems in bounded degree
  graphs.
\newblock {\em Inf. Comput.}, 206(11):1264--1275, 2008.
\newblock URL: \url{https://doi.org/10.1016/j.ic.2008.07.003}.

\bibitem{chvatal1979greedy}
Vasek Chv{\'{a}}tal.
\newblock A greedy heuristic for the set-covering problem.
\newblock {\em Math. Oper. Res.}, 4(3):233--235, 1979.
\newblock URL: \url{https://doi.org/10.1287/moor.4.3.233}.

\bibitem{crescenzi1997short}
Pierluigi Crescenzi.
\newblock A short guide to approximation preserving reductions.
\newblock In {\em Proceedings of the Twelfth Annual {IEEE} Conference on
  Computational Complexity, Ulm, Germany, June 24-27, 1997}, pages 262--273.
  {IEEE} Computer Society, 1997.
\newblock URL: \url{https://doi.org/10.1109/CCC.1997.612321}.

\bibitem{Dearing1974}
Perino~M. Dearing and Richard~Lane Francis.
\newblock A minimax location problem on a network.
\newblock {\em Transportation Science}, 8(4):333--343, 1974.

\bibitem{dinur2005hardness}
Irit Dinur and Samuel Safra.
\newblock On the hardness of approximating minimum vertex cover.
\newblock {\em Annals of mathematics}, pages 439--485, 2005.

\bibitem{drezner1996facility}
E.~Drezner.
\newblock Facility location: A survey of applications and methods.
\newblock {\em Journal of the Operational Research Society}, 47(11):1421--1421,
  1996.

\bibitem{Edmonds1965}
Jack Edmonds.
\newblock Paths, trees, and flowers.
\newblock {\em Canadian Journal of Mathematics}, 17:449--467, 1965.

\bibitem{Gallai1963}
Tibor Gallai.
\newblock Kritische {G}raphen {II}.
\newblock {\em A Magyar Tudom\'anyos Akad\'emia Matematikai Kutat\'o
  Int\'ezet\'enek K\"ozlem\'enyei}, 8:373--395, 1963.

\bibitem{Gallai1964}
Tibor Gallai.
\newblock Maximale {S}ysteme unabh\"angiger {K}anten.
\newblock {\em A Magyar Tudom\'anyos Akad\'emia Matematikai Kutat\'o
  Int\'ezet\'enek K\"ozlem\'enyei}, 9:401--413, 1964.

\bibitem{GrigorievHLW21}
Alexander Grigoriev, Tim~A. Hartmann, Stefan Lendl, and Gerhard~J. Woeginger.
\newblock Dispersing obnoxious facilities on a graph.
\newblock In Rolf Niedermeier and Christophe Paul, editors, {\em 36th
  International Symposium on Theoretical Aspects of Computer Science, {STACS}
  2019, March 13-16, 2019, Berlin, Germany}, volume 126 of {\em LIPIcs}, pages
  33:1--33:11. Schloss Dagstuhl - Leibniz-Zentrum f{\"{u}}r Informatik, 2019.
\newblock URL: \url{https://doi.org/10.4230/LIPIcs.STACS.2019.33}.

\bibitem{hartmann2022facility}
Tim~A. Hartmann.
\newblock {\em {F}acility location on graphs}.
\newblock Dissertation, RWTH Aachen University, Aachen, 2022.
\newblock \href {https://doi.org/10.18154/RWTH-2023-01837}
  {\path{doi:10.18154/RWTH-2023-01837}}.

\bibitem{DBLP:conf/mfcs/HartmannL22}
Tim~A. Hartmann and Stefan Lendl.
\newblock Dispersing obnoxious facilities on graphs by rounding distances.
\newblock In Stefan Szeider, Robert Ganian, and Alexandra Silva, editors, {\em
  47th International Symposium on Mathematical Foundations of Computer Science,
  {MFCS} 2022, August 22-26, 2022, Vienna, Austria}, volume 241 of {\em
  LIPIcs}, pages 55:1--55:14. Schloss Dagstuhl - Leibniz-Zentrum f{\"{u}}r
  Informatik, 2022.
\newblock URL: \url{https://doi.org/10.4230/LIPIcs.MFCS.2022.55}.

\bibitem{DBLP:journals/mp/HartmannLW22}
Tim~A. Hartmann, Stefan Lendl, and Gerhard~J. Woeginger.
\newblock Continuous facility location on graphs.
\newblock {\em Math. Program.}, 192(1):207--227, 2022.
\newblock URL: \url{https://doi.org/10.1007/s10107-021-01646-x}.

\bibitem{johnson1973approximation}
David~S. Johnson.
\newblock Approximation algorithms for combinatorial problems.
\newblock In Alfred~V. Aho, Allan Borodin, Robert~L. Constable, Robert~W.
  Floyd, Michael~A. Harrison, Richard~M. Karp, and H.~Raymond Strong, editors,
  {\em Proceedings of the 5th Annual {ACM} Symposium on Theory of Computing,
  April 30 - May 2, 1973, Austin, Texas, {USA}}, pages 38--49. {ACM}, 1973.
\newblock URL: \url{https://doi.org/10.1145/800125.804034}.

\bibitem{khotUGC}
Subhash Khot.
\newblock On the power of unique 2-prover 1-round games.
\newblock In John~H. Reif, editor, {\em Proceedings on 34th Annual {ACM}
  Symposium on Theory of Computing, May 19-21, 2002, Montr{\'{e}}al,
  Qu{\'{e}}bec, Canada}, pages 767--775. {ACM}, 2002.
\newblock URL: \url{https://doi.org/10.1145/509907.510017}.

\bibitem{khot2008vertex}
Subhash Khot and Oded Regev.
\newblock Vertex cover might be hard to approximate to within 2-epsilon.
\newblock {\em J. Comput. Syst. Sci.}, 74(3):335--349, 2008.
\newblock URL: \url{https://doi.org/10.1016/j.jcss.2007.06.019}.

\bibitem{lovasz1975ratio}
L{\'{a}}szl{\'{o}} Lov{\'{a}}sz.
\newblock On the ratio of optimal integral and fractional covers.
\newblock {\em Discret. Math.}, 13(4):383--390, 1975.
\newblock URL: \url{https://doi.org/10.1016/0012-365X(75)90058-8}.

\bibitem{MegiddoT1983}
Nimrod Megiddo and Arie Tamir.
\newblock New results on the complexity of p-center problems.
\newblock {\em {SIAM} J. Comput.}, 12(4):751--758, 1983.
\newblock URL: \url{https://doi.org/10.1137/0212051}.

\bibitem{mirchandani1990discrete}
Pitu~B. Mirchandani and Richard~L. Francis.
\newblock {\em Discrete location theory}.
\newblock 1990.

\bibitem{Tamir87}
Arie Tamir.
\newblock On the solution value of the continuous \emph{p}-center location
  problem on a graph.
\newblock {\em Math. Oper. Res.}, 12(2):340--349, 1987.
\newblock URL: \url{https://doi.org/10.1287/moor.12.2.340}.

\bibitem{Tamir1991}
Arie Tamir.
\newblock Obnoxious facility location on graphs.
\newblock {\em {SIAM} J. Discret. Math.}, 4(4):550--567, 1991.
\newblock URL: \url{https://doi.org/10.1137/0404048}.

\end{thebibliography}

\end{document}